\documentclass[11pt]{amsart}

\usepackage{amssymb,amsfonts}
\usepackage[all,arc]{xy}
\usepackage{enumerate}
\usepackage{mathrsfs}
\usepackage{amsmath}
\usepackage{float}
\usepackage{tikz}
\usetikzlibrary{arrows.meta}
\usepackage{appendix}
\usepackage{geometry}

\usepackage[utf8]{inputenc}
\usepackage{subcaption}
\usepackage{tabulary}
 \geometry{
 a4paper,
 total={210mm,297mm},
 left=25.4mm,
 right=25.4mm,
 top=25.4mm,
 bottom=25.4mm,
 }

\newtheorem{thm}{Theorem}[section]
\newtheorem{cor}[thm]{Corollary}

\newtheorem{lem}[thm]{Lemma}

\newtheorem{inv}{Invariant}

\theoremstyle{definition}

\theoremstyle{remark}

\makeatletter
\let\c@equation\c@thm
\makeatother
\numberwithin{equation}{section}

\bibliographystyle{plain}

\title{A Fast Max Flow Algorithm}

\author{J\lowercase{ames} B. O\lowercase{rlin}\\ MIT, \lowercase{jorlin@mit.edu;}\\X\lowercase{iao}-Y\lowercase{ue} G\lowercase{ong}\\ MIT, 
		\lowercase{xygong@mit.edu}}
\date{July 5, 2018.  Most recent revision:  October 7, 2019.}

\begin{document}

\begin{abstract}
In 2013, Orlin proved that the max flow problem could be solved in $O(nm)$ time. His algorithm ran in $O(nm + m^{1.94})$ time, which was the fastest for graphs with fewer than $n^{1.06}$ arcs.   If the graph was not sufficiently sparse, the fastest running time was an algorithm due to King, Rao, and Tarjan. We describe a new variant of the excess scaling algorithm for the max flow problem whose running time strictly dominates the running time of the algorithm by King et al.  Moreover, for graphs in which $m = O(n \log n)$, the running time of our algorithm dominates that of King et al. by a factor of 
$O(\log\log n)$.

\keywords{max flow, \emph{flow-return forest}, network flows, flow-return forest, strongly polynomial, scaling algorithms, contraction}
\end{abstract}

\maketitle

\section{Introduction}
Network flow problems form an important class of optimization problems within operations research and computer science. Within that class, the max flow problem, has been widely investigated since the seminal research of Ford and Fulkerson \cite{FF56} in the 1950s. 
A discussion of algorithms and applications for the max flow problem can be found in \cite{AMO93}.  
We consider the max flow problem on a directed graph with $n$ nodes, $m$ arcs, and integer valued arc capacities (possibly infinite), in which the largest finite capacity is bounded by $U$. The fastest strongly polynomial time algorithms are due to Orlin \cite{O13} and King et al. \cite{KRT94}. 
The running time for Orlin's algorithm is $O(nm + m^{31/16} \log^2 n)$, which is $O(nm)$ when $m < n^{1.06}$.   
On networks with $O(n)$ arcs, the algorithm can be further improved to run $O(\frac{n^2}{\log n})$ time.   
The running time for the algorithm by King et al. \cite{KRT94} is $O(nm \log_{\beta}  n)$  where $\beta  =2 + \lfloor \frac{m}{n \log n} \rfloor$.   The running time is $O(nm)$ whenever $m = \Omega(n^{1+\epsilon})$ for any fixed $\epsilon > 0$.  
Their algorithm is based upon the randomized algorithm of Cheriyan and Hagerup \cite{CH89}, which runs in $O(nm + n^2 \log^2 n)$ expected time.

Orlin's algorithm is based on using a fast weakly polynomial time algorithm.   The fastest weakly polynomial time algorithms are those of Goldberg and Rao \cite{GR98} with running time $O( \min\{m^{1/2}, n^{2/3}\} m \log \frac{n^2}{m} \log U)$, and  Lee and Sidford \cite{LS14} with running time $\tilde O (mn^{1/2} \log^2 U)$. 

Here we present a new fast max flow algorithm that runs in $O(kn^2 + nm \log_k n)$ time, where the parameter $k$ can be selected by the user.  Assuming that $n \ge 4$ and $m \ge n$ one can optimize the asymptotic running time by choosing  $k = \lceil \frac{\log n}{\log\log n} + \frac{m}{n} \rceil$.   This leads to a running time of 
$$O\left(\frac{nm \log n}{\log\log n + \log \frac{m}{n}}\right).$$ 

For all values of $n$ and $m$, the running time of our algorithm dominates the running time of King et al. \cite{KRT94}.  When $m \le n \log n$, our algorithm is faster than the algorithm of \cite{KRT94} by a factor of $\log\log n$.

Our algorithm is based on the stack-scaling algorithm of Ahuja et al. \cite{AOT89}, while also drawing upon the idea of ``special pushes'' in the paper by \cite{O93}.  The analysis of special pushes in this paper is similar to that in the paper by Orlin \cite{O13}. 

The contributions of this paper may be described as follows.

\begin{itemize}
\item[$1.$] We present a simple variant of the stack-scaling algorithm of \cite{AOT89} in which there are no stacks.  We refer to the revised algorithm as the {\it Large-Medium Excess-Scaling (LMES) Algorithm}.  

\item[$2.$] We give a new (and simpler) proof of Orlin's Contraction Lemma \cite{O13}, which was used to develop an $O(nm)$ max flow algorithm.   
\item[$3.$] In the modified version of the LMES algorithm (called the Enhanced LMES 
Algorithm), we permit node excesses that are slightly negative.   When the negative excess of 
a node $v$ reaches a threshold value, then node $v$ is added to the  ``flow-return forest,'' which is a data structure designed for the Enhanced LMES.   Within $O( \log_k n)$ additional scaling 
phases, flow is sent to node $v$, after which $e(v) \ge 0$.   
\item[$4.$]  Our algorithm achieves its improved running time without relying on the dynamic 
tree data structure of  \cite{ST83}.   It is an open question as to whether  a further 
speed-up is possible with the use of dynamic trees.
\end{itemize}

It is possible that the flow-return forest may be of use in other algorithms for the max flow problem or closely related problems.


\section{Preliminaries}
Let $G=(N, A)$ be a directed graph with node set $N$ and arc set $A$. Let $n=\lvert N\rvert$, 
and $m=\lvert A\rvert$. Each arc $(i,j)\in A$ is associated with a non-negative integer \textit{capacity} $u_{ij}$.  The assumption of finiteness of capacities is without loss of generality.  For networks with no infinite capacity path from source to sink, one can replace any infinite capacity by $nU$, where $U$ is the largest finite capacity of an arc.

There is a unique \textit{source} node $s \in N$ and a unique \textit{sink} node $t \in N$. For convenience of exposition, we will assume that there is at most one arc from $i$ to $j$ 
and that for each arc $(i,j)\in A$, $(j,i)$ is also in $A$, possibly with $u_{ji} = 0$.   We refer to  
$u_{ij} + u_{ji}$ as the \textit{bi-capacity} of arc $(i, j)$.   By symmetry, it is also the bi-capacity of arc $(j, i)$.

For each node $i \in A$, we let $A^+(i)$ denote the set of arcs directed out of node $i$.  We let $A^-(i)$ denote the set of arcs directed into node $i$.  We let $A(i) = A^+(i) \cup A^-(i)$.   We assume that there is a preprocessing step during which the arcs of $A^+(i)$ and $A^-(i)$  are ordered in non-increasing order of their bi-capacity.   The preprocessing step takes $O(m \log n)$ steps and is not a bottleneck operation of our max flow algorithm.  This assumption is of use later in this paper when we discuss the Enhanced LMES Algorithm.




Suppose that $\alpha$, $\beta$ and $\gamma$ are all integers.   When we write 
$\alpha \equiv \beta ( \kern -8 pt \mod  \gamma )$, we mean 
that there is some integer $\tau$ such that $\alpha = \beta + \tau \gamma$.  
When we write $\alpha = \text{Mod} ( \beta ,  \gamma )$, we mean that   
$\alpha$ is the non-negative remainder obtained by dividing $\beta$ by  
$\gamma$.  

All logarithms in the paper are base two unless an explicit base is given.

A \textit{feasible flow} is a function $x: A\rightarrow\mathbf{R}$ that satisfies the flow bound constraint $0\leq x_{ij}\leq u_{ij}$, and the mass balance constraints 
\begin{equation}
\sum\limits_{(j, i) \in A^-(i)}x_{ji}-\sum_{(i, j) \in A^+(i)}x_{ij} = 0, \forall i\in N - \{s, t\}.
\end{equation} 

The value $|x|$ of a flow $x$ is the net flow into the sink;  that is, 
$|x| = \sum_{(i,t)\in \bar A(t)}x_{it}.$
The \textit{maximum flow problem} is to determine a feasible flow with maximum flow value.

Given a flow $x$, the \textit{residual capacity} $r_{ij}(x)$ of an arc $(i, j)\in A$ is 
$r_{ij}(x)=u_{ij}+x_{ji}-x_{ij}$.  It can be interpreted as
the maximum additional flow that can be sent from node $i$ to node $j$ using the arcs $(i, j)$ and $(j, i)$.  Under the assumption that at most one of $x_{ij}$ and $x_{ji}$ is positive,   $r_{ij}(x)+ r_{ji}(x)  = u_{ij}+u_{ji}$. Often, we will express the residual capacity of $(i, j)$ more briefly as $r_{ij}$. 

We refer to the network  consisting of the arcs with positive residual capacities as the \textit{residual network}, which we denote as $G(x)$.

%
%
%

Our algorithm uses preflows, which were introduced by Karzanov \cite{K74}.  In a preflow $x$, for each node $i \in N\setminus \{s,t\}$, the {\em excess} of node $i$ is defined to be 


$$e_x(i)= \sum_{(j,i)\in A^-(i)}x_{ji}-\sum_{(i, j)\in A^+(i)}x_{ij}.$$  

%
In a preflow $x$, the conservation of flow constraint is replaced by the following constraint on excesses.

\begin{equation}
e_x(i) \ge 0, \forall i\in N - \{s\}.
\end{equation}

Often, we will write the excess of node $i$ more briefly as $e(i)$.

For a preflow $x$, the value $|x|$ is the net flow into the sink node $t$.   
A maximum $s$-$t$ preflow is a preflow $x$ that maximizes the amount of flow arriving at node $t$.  Any preflow can be transformed into a flow by utilizing flow decomposition to return the flow from nodes with excess to the source node $s$.   (See, for example, \cite{AMO93}.)  The transformation of a preflow into a flow does not alter the amount of flow arriving at the sink.   This establishes the following well known lemma. 

\begin{lem}
\label{lem:maxpreflow}
The maximum value of an $s$-$t$ preflow is equal to the maximum value of an $s$-$t$ flow.
\end{lem}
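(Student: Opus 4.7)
The plan is to prove both inequalities between the optimal values. The easy direction is that the maximum $s$-$t$ flow value is at most the maximum $s$-$t$ preflow value: every feasible flow is automatically a feasible preflow (since $e_x(i)=0$ satisfies $e_x(i)\ge 0$), and the value at the sink is defined identically in both settings, so any optimal flow exhibits a preflow of equal value.

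For the reverse inequality, I would start from an optimal preflow $x^*$ and transform it into a flow of the same value using flow decomposition. Specifically, I would invoke the standard flow-decomposition theorem applied to $x^*$ in $G$: the preflow $x^*$ decomposes into a (multi-)set of path flows from $s$ to $t$, path flows from $s$ to nodes $v$ with $e_{x^*}(v)>0$, and cycle flows. The total of these path and cycle flows reproduces $x^*$ arc-by-arc, with the paths from $s$ to excess nodes $v$ accounting collectively for $\sum_v e_{x^*}(v)$ units and the paths from $s$ to $t$ contributing $|x^*|$ units to the sink. Then I would form a new function $x'$ by removing (i.e., subtracting) each of the $s$-to-$v$ path flows and each cycle flow from $x^*$, keeping only the $s$-$t$ path flows.

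The verification step checks three things about $x'$. First, capacity bounds: since each component of the decomposition is a nonnegative flow on $G$, subtracting any subset yields $0\le x'_{ij}\le x^*_{ij}\le u_{ij}$. Second, conservation: removing the $s$-$v$ paths reduces $e(v)$ by exactly $e_{x^*}(v)$ for each $v\ne s,t$, and removing cycles preserves all excesses, so $e_{x'}(i)=0$ for all $i\in N\setminus\{s,t\}$; hence $x'$ is a feasible flow. Third, value preservation: none of the $s$-$v$ paths (with $v\ne t$) nor cycles use any arc into $t$ in a way that changes the net flow into $t$, so $|x'|=|x^*|$. Therefore $|x^*|$ is achieved by a feasible flow, and the maximum flow value is at least the maximum preflow value.

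The only subtle point, and the step I would treat most carefully, is ensuring that the decomposition is set up so that every ``non-sink'' unit of excess at $v$ corresponds to a path originating at $s$ and ending at $v$ that uses only arcs with positive $x^*$-value; this is precisely what the classical flow-decomposition theorem (e.g.\ Theorem 3.5 of \cite{AMO93}) guarantees for a preflow, since $s$ is the only node allowed to produce net flow and the excess nodes are the only nodes allowed to absorb it (besides $t$). Combining the two inequalities yields the lemma.
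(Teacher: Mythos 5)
Your proposal is correct and follows essentially the same route as the paper, which likewise proves the nontrivial direction by flow decomposition of an optimal preflow and returning (removing) the path flows that feed the excess nodes, noting that this does not change the net flow into $t$. Your write-up simply spells out the decomposition and verification steps that the paper delegates to the standard theorem in \cite{AMO93}.
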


We now make an additional assumption.  We assume that for any node $j \neq s$ or $t$, there are arcs $(j, s)$ and $(t, j)$ in $A$ with $u_{js} = u_{tj} = U$.   Adding these arcs is without loss of generality.  If $x^*$ is an optimal flow, then  $x^*$ can be efficiently transformed  into another optimum flow $x'$ such that for all $j \in N$, 
$x'_{js} = x'_{tj} =0$.  To transform $x^*$ into  $x'$, first express $x^*$ using flow decomposition (See \cite{AMO93}.)  Then eliminate all flows around cycles. 

 The advantage of introducing arcs directed into node $s$ is that any node $j$ with $d(j) = n+1$ can send all of its excess to node $s$ via the admissible arc $(j, s)$.   Accordingly,   $d(j) \le n+1$ throughout the algorithm. The advantage of introducing arcs directed is that these arcs needed in the initialization of the Enhanced LMES Algorithm, which is described in Subsection \ref{subsec:initialization}.  



Our algorithm is based on the stack-scaling algorithm of \cite{AOT89}, which in turn is based in the push/relabel algorithm of Goldberg and Tarjan \cite{GT88}.  We review the push/relabel algorithm next.

\section{Review of the generic Push-relabel algorithm}

In this section, we review the generic push-relabel algorithm of Goldberg and Tarjan  \cite{GT88}. 
Push-relabel algorithms begin with an initial preflow $x$ in which each arc out of node $s$ is saturated.  That is,
$$
x_{si} = \begin{cases}
u_{si}  & \text{ for each arc } (s,i) \in A^+(s)\\
0 & \text{ otherwise}.
\end{cases}$$

For each node $i \in N$, the algorithm maintains a \textit{distance label} $d(i)$, where $d(i)$ is a non-negative integer.
The distance labels are said to be \textit{valid} if they satisfy the following invariant.

\begin{inv}
(Validity) $d(t) = 0$; for all arcs $ (i, j) \in A$, if $r_{ij} > 0$, then $d(i)\leq d(j) + 1$.
\label{inv:validity}
\end{inv}

\bigskip

The generic push-relabel algorithm begins with the following distance labels: $d(s) = n$, and $d(i) = 0$ for $i \neq s$.  Throughout the rest of the generic algorithm, $d(t) = 0$, and $d(s) = n$.

The {\it length of a directed path} in $G(x)$ is the number of arcs on the path.  The distance labels $d(\,)$ provide lower bounds on the minimum length of a path to node $t$, as stated in the following lemma proved in \cite{GT88}. 

\begin{lem}
\label{lem:distance}
Suppose that the distance labels are valid with respect to a preflow $x$. Then for each $j \in N$, $d(j)$ is a lower bound on the length of the shortest path in $G(x)$ from $j$ to $t$.
\end{lem}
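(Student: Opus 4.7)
The plan is a straightforward telescoping argument along a shortest residual path, driven entirely by Invariant~\ref{inv:validity}. Fix a node $j \in N$ and let $\ell(j)$ denote the length of a shortest directed path from $j$ to $t$ in $G(x)$. If no such path exists, then $\ell(j) = \infty$ and the claim $d(j) \le \ell(j)$ is vacuous, so I may assume a finite shortest path exists.

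Let $P: j = v_0, v_1, \ldots, v_k = t$ be a shortest such path in $G(x)$, so that $k = \ell(j)$. By the definition of the residual network, each arc $(v_i, v_{i+1})$ on $P$ has $r_{v_i v_{i+1}}(x) > 0$. Applying the validity invariant to each such arc gives $d(v_i) \le d(v_{i+1}) + 1$ for every $i = 0, 1, \ldots, k-1$.

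Telescoping these $k$ inequalities yields
\[
d(j) = d(v_0) \le d(v_1) + 1 \le d(v_2) + 2 \le \cdots \le d(v_k) + k = d(t) + k = k,
\]
where the last equality uses $d(t) = 0$ from Invariant~\ref{inv:validity}. Hence $d(j) \le k = \ell(j)$, as claimed.

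There is no real obstacle here: the lemma is essentially a restatement of the validity invariant, summed along a path. The only thing to be careful about is explicitly handling the case in which $t$ is unreachable from $j$ in $G(x)$, and recording that the single hypothesis $d(t) = 0$ (which is part of Invariant~\ref{inv:validity}) is what converts the telescoped bound into a bound in terms of path length alone.
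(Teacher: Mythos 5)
Your proof is correct: the telescoping of the validity inequality $d(v_i) \le d(v_{i+1}) + 1$ along a shortest residual path, together with $d(t)=0$ and the vacuous case when $t$ is unreachable, is exactly the standard argument. The paper itself gives no proof and simply cites Goldberg and Tarjan \cite{GT88}, whose proof is this same argument, so there is nothing to add.
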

An arc $(i,j)$ is \textit{admissible} if $d(i)=d(j)+1$, and \textit{inadmissible} otherwise. The generic push-relabel algorithm maintains a preflow, and all pushes are on admissible arcs.    

We now describe the push-relabel algorithm in more detail.\\

If $e(i) > 0$, we refer to node $i$ as \textit{active}. The presence of active nodes in a push-relabel algorithm indicates that the solution is not yet a feasible flow.   The algorithm moves excess flow from active nodes through admissible arcs towards the sink. Excess flow at node $j$ that cannot be moved to the sink is returned to the source $s$ when  $d(j)= n+1$. The algorithm terminates with a maximum flow when the network contains no active node, i.e., when the preflow is also a flow. 
The generic push-relabel algorithm consists of repeating the following two steps until no node is active.\\

\noindent $Push(i, j)$\\ 
\indent   Applicability:  node $i$ is active and arc $(i, j)$ is admissible.\\
\indent  Action:  send min$\{e(i), r_{ij} \}$ units of flow in arc $(i,j)$.\\

\noindent $Relabel(i)$ \\
\indent   Applicability:  node $i$ is active and no arc of $A^+(i)$ is admissible.\\
\indent  Action: replace $d(i)$ by $d(i) + 1$. \\
\indent\indent (An alternative is to replace $d(i)$ by $1 + \min \{ d(j): \text{there is an admissible arc } (i, j) \}$.) \\

In the push operation, a push of $r_{ij}$ units of flow is called {\it saturating}. Otherwise, it is called {\it non-saturating}.  A push of $e(i)$ units of flow is called {\it emptying}.

The following results are either proved in \cite{GT88} or are implicit in that paper.

\begin{lem}
\label{lem:useful}
The generic push-relabel algorithm has the following properties:
\begin{enumerate}
\item At each iteration, $0 \le d(i) \leq n+1$ for each  node $i \in N$.
\item A distance label never decreases, and it increases at most $n+1$ times. The total number of relabel operations is less than $n^2$.
\item The total number of saturating pushes is less than $nm$.
\item The total number of pushes is $O(n^2m)$.
\item  The algorithm terminates with a maximum flow.
\item One can implement the push-relabel algorithm to find a maximum flow in $O(n^2m)$ time. 
\end{enumerate}
\end{lem}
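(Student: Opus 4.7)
The plan is to address the six parts in order, using the standard push-relabel proof strategy. Parts (1) and (2) follow from Invariant~\ref{inv:validity} together with the fact that each relabel strictly increases $d(i)$, so labels are non-decreasing. The preprocessing arcs $(j,s)$ with $u_{js} = U$ ensure, as explained in the preliminaries, that any node $j$ with $d(j) = n+1$ already has an admissible arc $(j,s)$ and cannot be relabeled further. This caps every distance label at $n+1$, and since only the $n-2$ nodes other than $s$ and $t$ can be relabeled, each at most $n+1$ times, the total number of relabels is at most $(n-2)(n+1) < n^2$.

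For part (3), I would apply the standard arc-by-arc amortized argument. Right after a saturating push on $(i,j)$ we have $d(i) = d(j)+1$ and $r_{ij} = 0$. Before $(i,j)$ can be used again, $r_{ij}$ must be restored by a push on $(j,i)$, which requires $d(j) = d(i)+1$ at that moment; then another relabel of $i$ is needed to restore $d(i) = d(j)+1$. Hence between consecutive saturating pushes on $(i,j)$, both $d(i)$ and $d(j)$ must increase by at least two. Since each label is bounded by $n+1$, each arc supports at most $(n+1)/2$ saturating pushes, and summing over $m$ arcs gives fewer than $nm$ saturating pushes in total.

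For part (4), I would introduce the potential $\Phi = \sum_{i \text{ active}} d(i)$, which starts at $0$ and is always nonnegative. A non-saturating push on $(i,j)$ empties $i$ and at worst makes $j$ newly active, so $\Phi$ decreases by at least $d(i)-d(j) = 1$. A relabel of an active node $i$ increases $\Phi$ by exactly the relabel amount, contributing at most $(n-2)(n+1)$ in total by part (2); a saturating push on $(i,j)$ increases $\Phi$ by at most $d(j) \le n+1$ (when $j$ becomes newly active), contributing at most $(n+1)\cdot nm$ in total by part (3). Since $\Phi \ge 0$, the number of non-saturating pushes is bounded by the total increase in $\Phi$, namely $O(n^2 m)$, which combined with part (3) establishes part (4).

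Parts (5) and (6) then follow. At termination no node is active, so the preflow is a flow; by Lemma~\ref{lem:distance} a residual $s$-$t$ path would need length at least $d(s) = n$, impossible in an $n$-node graph, so by max-flow/min-cut the resulting flow is maximum. For the implementation bound in part (6), I would use the standard ``current arc'' pointer at each node: scanning for an admissible arc or triggering a relabel takes amortized $O(|A(i)|)$ time per label increase at $i$, and hence $O(nm)$ in total for search work, while the $O(n^2 m)$ individual push operations each cost $O(1)$ time and dominate. The main obstacle is the design and accounting of the potential function in part (4); the other items are essentially corollaries of Lemma~\ref{lem:distance} and routine bookkeeping.
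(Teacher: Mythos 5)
Your proof is correct and is essentially the argument the paper intends: the paper does not prove this lemma itself but cites Goldberg--Tarjan \cite{GT88}, and your write-up is the standard potential/amortization proof from that line of work (label bounds and relabel count, the two-relabels-between-saturating-pushes argument, the $\sum_{i \text{ active}} d(i)$ potential for non-saturating pushes, and the current-arc implementation). You also correctly adapt the label bound to $d(i)\le n+1$ via the added arcs $(j,s)$, exactly as asserted in the paper's preliminaries, rather than the $2n-1$ bound of the uncited-arc setting, so nothing further is needed.
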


The running time of many implementations of the push-relabel algorithm are the same as the total number of pushes.   To achieve these running times, one needs an efficient method of selecting admissible arcs.   Goldberg and Tarjan \cite{GT88} provided an efficient method that is sometimes referred to as the ``Current Arc'' method.   In seeking an admissible arc directed out of node $i$, the algorithm scans the arc list $A^+(i)$ in sequential order starting with its first arc.  
The scan terminates when an admissible arc $(i, j)$ is found.  
Then the algorithm sets CurrentArc$(i)$ to $(i, j)$, and $(i, j)$ becomes the starting point for the next scan of $A^+(i)$.  
If the end of $A^+(i)$ is reached without identifying an admissible arc, then there is no admissible arc in $A^+(i)$.  The push-relabel algorithm then increases $d(i)$ by 1 (a relabel step), and sets CurrentArc$(i)$ to be the first arc of $A^+(i)$. 

There are a variety of efficient implementations of the generic push-relabel algorithm.  Efficiencies can be obtained by use of Sleator and Tarjan's \cite{ST83} dynamic tree data structure and/or by a modification of the rule for selecting a node for a push/relabel. The $O(nm \log_k n + kn^2)$ max flow algorithm developed in this paper is based on the stack-scaling algorithm of Ahuja et al.~\cite{AOT89}, which is described in the next section. The stack-scaling algorithm is based on  Ahuja and Orlin's excess-scaling algorithm (\cite{AO89} and \cite{AMO93}).

\section{Large-Medium Excess Scaling Algorithm (LMES)}

In this section, we provide a modified version of the stack-scaling algorithm of \cite{AOT89}. We refer to this modified algorithm, which no longer relies on stacks, as the Large-Medium Excess Scaling (LMES) Algorithm.


In general, a scaling algorithm for the max flow problem solves the max flow problem as a sequence of scaling phases.   Associated with each phase is a scaling parameter $\Delta$, which remains constant during each phase and is decreased following the phase.

For the LMES Algorithm, the initial value of $\Delta$ is the least power of 2 that exceeds $U$.  
At each iteration of the $\Delta$-scaling phase for the LMES Algorithm, the algorithm maintains a preflow in which the excesses satisfy the following property.
 
\begin{equation}
\label{eq:parameter}
\textit{For each } i \in N\backslash \{s,t \}, \, 0 \le e(i) \le \Delta. 
\end{equation}  

The $\Delta$-scaling phase ends with the following property: for each $i \in N\backslash \{ s,t \}$,  
$0 \le e(i) < \frac{\Delta}{k}$ .   \\

After the scaling phase terminates, the scaling parameter is divided by the {\it scaling factor} $k$, and the next scaling phase begins.  The parameter $k$ is a power of 2 that is selected by the user.   The  algorithm continues until $\Delta < 1$, at which point the algorithm terminates with a maximum flow.  The total number of scaling phases is $O(\log_k U)$. 

We express most of our algorithmic running times and bounds using the parameter $k$.  In  theorems that give the running time for the LMES and the Enhanced LMES, we will specify a value of  $k$ that optimizes the worst case asymptotic running time.   

In the $\Delta$-scaling phase, Equation \ref{eq:parameter} requires that $e(i) \le \Delta$ for each node $i$.   To maintain this inequality, we modify the push operation.\\

 \noindent \textit {LMES-Push}$(i, j)$ \\
\indent   Applicability:  Node $i$ is active and arc $(i, j)$ is admissible.\\
\indent  Action:  send $\delta = \min\{ e(i), r_{ij}, \Delta - e(j) \}$ units of flow in arc $(i,j)$.\\

 In the $\Delta$-scaling phase, a node $i \in N \backslash \{s,t\}$ is called a \textit{large-excess node} if $e(i)\geq\frac{\Delta}{2}$.  We let {\it LargeSet} denote the subset of nodes in $N\backslash \{s, t\}$ with large excess.  Node $i$   is called a \textit{medium excess node} if 
$\frac{\Delta}{k}\leq e(i) <\frac{\Delta}{2}$.  We let {\it MediumSet} denote the subset of nodes in $N\backslash \{s, t\}$ with medium excess.

The procedure  {\it LMES-Select-Node}, described below, selects  a large or medium excess  node $i$.  \\

\noindent $i :=$ {\it LMES-Select-Node} \\ 
\indent  Applicability:  there is a node with medium or large excess.\\
\indent  Action: if LargeSet $\neq \emptyset$, then
$i := \text{argmin} \{d(j) : j \in \text{LargeSet} \}$; \\
\indent\indent   else,  
$i := \text{argmax} \{ d(j) : j \in \text{MediumSet}\} $.\\

The phase ends when there are no large or medium excess nodes.  Otherwise, the algorithm selects a medium or large excess node $i$ using the above procedure.    If there is an admissible arc $(i, j)$, then the algorithm next runs procedure \textit{LMES-Push}$(i, j)$.   If there is no admissible arc, the algorithm runs \textit{Relabel}$(i)$.    Subsequently, the algorithm returns to selecting a large or medium excess node.  \\

%
%

The following theorem is stated and proved in \cite{AOT89}.  

\begin{thm}
\label{th:LMEStime}
The LMES Algorithm terminates after $O(\log_k U)$ scaling phases with a maximum flow from source to sink.  Its running time is $O(nm + kn^2 + n^2 \log_k U)$.
\end{thm}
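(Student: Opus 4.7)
My plan verifies three claims: termination in $O(\log_k U)$ scaling phases, correctness of the output, and the running time bound. Since the theorem is stated and proved in \cite{AOT89}, I would import the key amortization ideas from the stack-scaling analysis there.

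\textbf{Termination and correctness.} The initial parameter $\Delta_0$ is the least power of $2$ exceeding $U$, so $\Delta_0 \le 2U$, and each phase divides $\Delta$ by the power-of-two factor $k$. After $\lceil \log_k(2U)\rceil = O(\log_k U)$ phases we have $\Delta/k < 1$. Since capacities are integral and LMES-Push sends integral amounts, every $e(i)$ stays integral, so the end-of-phase invariant $e(i)<\Delta/k$ forces $e(i)=0$ for every $i\in N\setminus\{s,t\}$, yielding a feasible flow. Throughout the algorithm $d(s)=n$ and the distance labels remain valid (Invariant \ref{inv:validity}), so Lemma \ref{lem:distance} implies any residual $s$-$t$ path would have length at least $n$; no simple path in an $n$-node graph can achieve this, so no residual $s$-$t$ path exists and the final flow is maximum by max-flow min-cut, consistent with Lemma \ref{lem:maxpreflow}.

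\textbf{Running time.} I would decompose the work into relabels, current-arc scans, saturating pushes, and non-saturating pushes. Lemma \ref{lem:useful} contributes $O(n^2)$ relabels and $O(nm)$ saturating pushes, and a standard amortization charging each unsuccessful arc scan at $i$ against the next relabel of $i$ gives another $O(nm)$ for arc scans. The remaining crux is bounding non-saturating pushes, for which I would use the potential
$$\Phi \;=\; \frac{1}{\Delta}\sum_{i \in N\setminus\{s,t\}} e(i)\,d(i),$$
which satisfies $0\le\Phi\le n(n+1)$ throughout any phase and decreases by exactly $\delta/\Delta$ for a non-saturating push of $\delta$ units along an admissible arc (since $d(i)-d(j)=1$).

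I would then split non-saturating pushes by the regime of the source. A non-saturating push from a large-excess node either empties $i$ with $\delta \ge \Delta/2$, so $\Phi$ drops by at least $1/2$, or it fills the target $j$ to excess $\Delta$, an event charged to the next occasion when $j$ becomes large. Together these yield $O(n^2)$ non-saturating large pushes per phase and thus $O(n^2 \log_k U)$ globally. A non-saturating push from a medium-excess node (selected by max-$d$) either empties the source, shrinking $\Phi$ by at least $1/k$, or fills the target to excess $\Delta$, promoting the target into LargeSet. The stack-scaling amortization of \cite{AOT89} charges each medium push to a distinct relabel or to a distinct medium-to-large transition of some node, delivering the global bound $O(kn^2)$, independent of the number of phases. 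Summing, the total work is $O(nm + kn^2 + n^2 \log_k U)$.

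\textbf{Main obstacle.} The hardest step is the \emph{global} $O(kn^2)$ bound for medium pushes. A naive application of the potential argument gives only $O(kn^2)$ medium pushes per phase, which would multiply by $\log_k U$ and wipe out the savings the LMES algorithm is designed to capture. The amortization that avoids this relies crucially on the factor-$k$ gap between the medium and large thresholds: once a node is medium-pushed, enough ``progress'' must occur (either a relabel, or a trip through the large regime) before it can participate in another medium push, so the $k$ factor appears only additively in the final bound.
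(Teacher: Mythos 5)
Your termination/correctness argument and your treatment of large pushes are essentially the paper's: the potential $\Phi_1=\sum_i e(i)d(i)/\Delta$ is exactly what Lemma \ref{lem:mostpush} uses to bound the flow moved in a phase by $2n^2\Delta$, from which Corollary \ref{cor:largepushes} gets $O(n^2)$ large pushes per phase and $O(n^2\log_k U)$ overall. (Your case split for large pushes is unnecessary and, as written, loose: under the min-$d$ selection rule the target $j$ of a push from a large-excess node satisfies $e(j)<\frac{\Delta}{2}$, so a non-saturating push moves at least $\frac{\Delta}{2}$ in \emph{both} of your cases; the step ``charge to the next occasion when $j$ becomes large'' is not by itself bounded and should be dropped.)

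The genuine gap is the medium-push bound, which is the heart of the theorem. You claim a global bound of $O(kn^2)$ medium pushes, ``independent of the number of phases,'' supported only by a one-line charging scheme (each medium push to a distinct relabel or a distinct medium-to-large transition) and an appeal to \cite{AOT89}. That is not an argument: the number of medium-to-large transitions of a node within a phase is controlled only through the very push counts you are trying to bound, and a charge to ``a distinct relabel'' cannot produce the factor $k$ — the $k$ must enter because one unit of potential pays for $k$ medium pushes. Moreover, the bound you aim at is both stronger than needed and not what is actually proved: Lemma \ref{lem:mediumpushes} establishes $O(kn^2+n^2\log_k U)$ medium pushes, which suffices since the theorem tolerates the $n^2\log_k U$ term; what must be avoided is only that the $k$-dependent part multiply by the number of phases. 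The paper achieves this with the second potential $\Phi_2=\sum_{j\in P}e(j)\frac{d(j)-\ell+1}{\Delta}$, reset to $0$ each phase, which drops by at least $\frac{1}{k}$ per non-first medium push from a node of $P$, while its increases are charged to relabels ($O(n^2)$ over all phases), to unit decreases of $\ell$, and to insertions into $P$, with first medium pushes counted separately ($O(n)$ per phase). None of this machinery ($\ell$, $P$, the max-$d$ selection for medium pushes making the argument work) appears in your sketch, and since LMES is deliberately not the stack-scaling algorithm (no stacks, different selection rule and data structures), deferring to \cite{AOT89} does not discharge the step — which is exactly why the paper reproves it. A smaller omission: you never account for the cost of executing the selection rule itself (argmin over LargeSet, argmax over MediumSet); the paper's Lemma \ref{lem:selection} introduces the MSet/LSet/MaxML/MinL/NextL structures precisely to keep that overhead within $O(\#\text{pushes}+\#\text{relabels}+n\log_k U)$.
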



In order to balance the terms in the running time and optimize the overall running time, Ahuja et al. \cite{AOT89} chose $k$ to be the least power of 2 that exceeds  $2 + \frac{\log U}{\log\log U}$.  For this value of $k$, the running time of the LMES Algorithm is $O(nm + n^2 \frac{\log U}{\log\log U})$. 

\bigskip

\section{The running time analysis of the LMES Algorithm}
\label{sec:LMES}

In this section, we will  analyze the running time of the LMES Algorithm and prove Theorem \ref{th:LMEStime}.   
The proof here is based on the proof in \cite{AOT89}.   We include the proof here for two reasons.   First, the LMES Algorithm is different from the stack scaling algorithm, most notably because the LMES Algorithm does not use stacks.  Second, the potential function arguments here are the starting point for the analysis of the Enhanced LMES Algorithm, whose pseudo-code is first presented in Section \ref{sec:annotated}.  

\subsection{The number of large pushes}   The proof of  Theorem \ref{th:LMEStime} relies on two potential function arguments.   The first potential function is $\Phi_1$.

  $$\Phi_1:=\sum_{i\in N\backslash \{s,t\}}\frac{e(i)d(i)}{\Delta}.$$
  
  \smallskip

Ahuja and Orlin \cite{AO89} relied on $\Phi_1$ in establishing the running time of the excess scaling algorithm.   The excess-scaling algorithm is the special case of the LMES Algorithm in which there are no medium pushes, and $k = 2$.  The running time of the excess-scaling algorithm is $O(nm + n^2 \log U)$.    

Ahuja et al. \kern -3pt \cite{AOT89} repurposed the potential function as part of the proof of the running time of the stack scaling algorithm.

\begin{lem}
During the $\Delta$-scaling phase of the LMES Algorithm, the total amount of flow sent in pushes  is less than $2n^2\Delta$.
\label{lem:mostpush}
\end{lem}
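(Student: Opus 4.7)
The plan is to use the potential function $\Phi_1$ as an amortized accounting device: each unit of flow pushed draws down $\Phi_1$ by an amount proportional to $1/\Delta$, while the only way $\Phi_1$ can be replenished is through relabels and through the initial potential at the start of the phase. Since $\Phi_1 \ge 0$ throughout, bounding those two sources of increase bounds the total flow.

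First I would quantify the effect of each operation on $\Phi_1$. A relabel of node $i$ increases $d(i)$ by $1$, so $\Phi_1$ increases by $e(i)/\Delta \le 1$ (using invariant (4.1), which guarantees $e(i)\le\Delta$ throughout the phase). An \textit{LMES-Push}$(i,j)$ of $\delta$ units along an admissible arc satisfies $d(i)=d(j)+1$, so for $i,j\in N\setminus\{s,t\}$ the change in $\Phi_1$ is exactly $\delta(d(j)-d(i))/\Delta=-\delta/\Delta$. When $j=t$ (excluded from the sum) or $j=s$ (possible when $d(i)=n+1$), only the $-\delta\, d(i)/\Delta$ contribution remains; since $d(i)\ge 1$ in either case, the decrease is still at least $\delta/\Delta$. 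Thus every push of $\delta$ units decreases $\Phi_1$ by at least $\delta/\Delta$.

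Next I would bound $\Phi_1$ at the start of the $\Delta$-scaling phase. For the initial phase, $e(i)\le U<\Delta$ (the initial $\Delta$ being the smallest power of $2$ exceeding the largest capacity); for any later phase, the end-of-phase condition for the preceding $k\Delta$-phase gives $e(i)<\Delta$. Combined with $d(i)\le n+1$ from Lemma \ref{lem:useful}(1) and the fact that only the $n-2$ nodes in $N\setminus\{s,t\}$ contribute, we get $\Phi_1(\text{start}) < (n-2)(n+1) < n^2$. By Lemma \ref{lem:useful}(2), the total number of relabels over the entire algorithm is strictly less than $n^2$, so the contribution to $\Phi_1$ from relabels within the phase is also strictly less than $n^2$.

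Finally I would combine these. Let $F$ denote the total flow pushed during the phase. Since $\Phi_1(\text{end})\ge 0$,
\[
0 \;\le\; \Phi_1(\text{end}) \;\le\; \Phi_1(\text{start}) + (\text{relabel increase in phase}) - F/\Delta \;<\; n^2 + n^2 - F/\Delta,
\]
which rearranges to $F < 2n^2\Delta$. I do not anticipate a substantive obstacle: the only delicate points are confirming that pushes to the sink or back to the source still yield the full $\delta/\Delta$ decrease, and being careful that the strict inequality $e(i)<\Delta$ at phase start, together with the strict bound on total relabels, is what gives $2n^2$ rather than a slightly larger constant.
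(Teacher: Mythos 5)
Your proposal is correct and follows essentially the same argument as the paper: the same potential function $\Phi_1$, the same decomposition of its total decrease into the initial value at the start of the phase plus the increase from relabels (each bounded by roughly $n^2$), yielding $F < 2n^2\Delta$. Your extra care with pushes into $s$ or $t$ is a minor refinement the paper glosses over, not a different route.
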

\begin{proof}

Each push of $\delta$ units of flow during the $\Delta$-scaling phase decreases $\Phi_1$ by $\frac{\delta}{\Delta}$.  To prove the lemma, we will prove that the total decrease in $\Phi_1$ (that is, the sum of all of the decreases in  $\Phi_1$) during the $\Delta$-scaling phase is at most  $2n^2$.   

The total decrease in  $\Phi_1$ during the $\Delta$-scaling phase is bounded above by the  value of $\Phi_1$ at the beginning of the $\Delta$-scaling phase plus the  total increase in  $\Phi_1$ during the $\Delta$-scaling phase.

The initial value of  $\Phi_1$ is less than $(n-2)(n+1)$ because the sum is over at most $n-2$ nodes, and for each node $i$, $e(i) < \Delta$ and  $d(i) \le n+1$.   To bound the total increase in $\Phi_1$ during the scaling phase, we observe  that the only way for $\Phi_1$ to increase during the phase is when there is a distance relabel. For each node $ i \in N$, an increase of $d(i)$ by 1 leads to an increase in $\Phi_1$ by at most 1.  Since $d(i) \le n+1$,  the total increase in $\Phi_1$ caused by increases in distance labels is less than $(n-2)(n+1)$ over all phases.

Thus, the total amount pushed during the $\Delta$-scaling phase is at most $\Delta$ times the initial value of $\Phi_1$ plus $\Delta$ times the total increase of $\Phi_1$, which is less than $2n^2\Delta$.		
\end{proof}	
	
By Lemma \ref{lem:useful}, the number of saturating pushes is $O(nm).$   We are interested in bounding the number of non-saturating pushes.   

We say that a push of $\delta$ units of flow in $(i, j)$ is a \textit{large push} during the  $\Delta$-scaling phase if it is non-saturating and if  $ \delta \ge \frac{\Delta}{2}$.  
We say that the push in $(i, j)$ is a \textit{medium push} if it is non-saturating and if $\frac{\Delta}{k} \le \delta < \frac{\Delta}{2}$.  

The following is an immediate consequence of Lemma \ref{lem:mostpush}.   

\begin{cor}
The total number of large pushes during the $\Delta$-scaling phase is at most $4n^2$.  The total number of large pushes over all scaling phases of the LMES Algorithm is at most $4n^2 \log_k U$.
\label{cor:largepushes}
\end{cor}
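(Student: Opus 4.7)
The plan is to extract the large-push bound directly from Lemma \ref{lem:mostpush} by a flow-accounting argument: since each large push carries at least $\Delta/2$ units of flow, the number of such pushes can be at most twice the total flow-volume pushed in the phase, divided by $\Delta$.

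More precisely, I would first recall the definition: a push of $\delta$ units along $(i,j)$ during the $\Delta$-scaling phase is \emph{large} exactly when it is non-saturating and $\delta \ge \Delta/2$. Therefore every large push contributes at least $\Delta/2$ to the total amount of flow pushed during the phase. Let $L_\Delta$ denote the number of large pushes in the $\Delta$-scaling phase. Then the total amount of flow sent during that phase is at least $L_\Delta \cdot \Delta/2$. By Lemma \ref{lem:mostpush}, this total is strictly less than $2n^2\Delta$. Combining the two inequalities yields
\[
L_\Delta \cdot \frac{\Delta}{2} \;\le\; \text{(total flow pushed in phase)} \;<\; 2n^2\Delta,
\]
from which $L_\Delta < 4n^2$, i.e.\ $L_\Delta \le 4n^2$, proving the first assertion.

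For the second assertion, I would simply sum the per-phase bound over all scaling phases. The LMES Algorithm performs $O(\log_k U)$ scaling phases in total (as stated just before Theorem \ref{th:LMEStime}, and as will be reconfirmed in the proof of that theorem), so summing $L_\Delta \le 4n^2$ across these phases yields at most $4n^2 \log_k U$ large pushes overall.

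There is essentially no obstacle here; the only subtlety is to make sure the counting is done per phase with the \emph{current} value of $\Delta$, rather than mixing large pushes from different phases (a ``large'' push in one phase need not be large in another, since the threshold $\Delta/2$ changes). Once that is noted, the result is genuinely immediate from Lemma \ref{lem:mostpush}, justifying the ``immediate consequence'' language in the paper.
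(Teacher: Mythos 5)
Your proposal is correct and is exactly the argument the paper intends: the corollary is stated as an immediate consequence of Lemma \ref{lem:mostpush}, with each large push accounting for at least $\Delta/2$ units of the at most $2n^2\Delta$ total flow pushed per phase, and the per-phase bound summed over the $O(\log_k U)$ phases. Your remark about counting per phase with the current $\Delta$ is a sensible clarification but does not change the argument.
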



In the next subsection, we will bound the number of medium pushes.   But before doing so, we provide some insight as to why the use of medium pushes leads to a speed-up of the excess-scaling algorithm.  

If there were only large pushes, the algorithm would be equivalent to the Ahuja-Orlin 
excess-scaling algorithm \cite{AO89}, in which $k = 2$.   In that case, the $\Delta$-
scaling phase would terminate with excesses bounded above by $\frac{\Delta}{2}$.  
The increase in $\Phi_1$ because of the beginning of each scaling phase is 
$O(n^2)$ per scaling phase and $O(n^2 \log U)$ over all scaling phases.  The 
increase in $\Phi_1$ due to all relabels is $O(n^2)$.   The running time of the excess 
scaling algorithm  $O(nm + n^2 + n^2\log U)$.   One question that this analysis 
raised is the following:  is there a way to modify the excess scaling algorithm so that 
the $ n^2\log U$ term in the running time is decreased.   It would be OK if the $ n^2$ 
term in the running time increased provided that it didn't increase much.  The LMES Algorithm answered this question in the positive.

In the LMES algorithm, the medium pushes are of size proportional to $\frac{\Delta}{k}$.  At the end of the $\Delta$-scaling phase, the excesses are all less than $\frac{\Delta}{k}$, which permits an increase in the scaling factor from 2 to $k$.   Increasing the scale factor improves the running time by a factor of $O(\log k)$ provided that there are not too many medium pushes in total.   Selecting medium pushes from nodes with maximum distance labels is sufficient for establishing the bound on medium pushes proved in the next subsection.

\subsection{The number of medium pushes}  We  now use a different potential function argument to bound the number of medium pushes.  We state the bounds as a lemma.

\begin{lem}
The total number of medium pushes in the LMES Algorithm is $O(kn^2 +n^2 \log_k U)$.
\label{lem:mediumpushes}
\end{lem}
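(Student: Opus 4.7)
The plan is a potential-function argument. I begin with a structural observation: every medium push $LMES\text{-}Push(i,j)$ must have $\delta = e(i)$, so the pusher $i$ is emptied. Indeed, a medium push is non-saturating with $\delta < \Delta/2$, and it occurs only when $LargeSet$ is empty (by the $LMES\text{-}Select\text{-}Node$ rule), which forces $e(j) < \Delta/2$; hence $\Delta - e(j) > \Delta/2 > \delta$, and the binding constraint in $\delta = \min\{e(i), r_{ij}, \Delta - e(j)\}$ can only be $e(i)$. In particular, $\delta = e(i) \ge \Delta/k$.

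Next, I introduce the potential $\Phi_2 := k\Phi_1 = \sum_{i \in N \setminus \{s,t\}} k\,e(i)\,d(i)/\Delta$. Each medium push along an admissible arc decreases $\Phi_2$ by $k\delta/\Delta \ge 1$; no push ever increases $\Phi_2$; a relabel of node $i$ raises $\Phi_2$ by $k\,e(i)/\Delta \le k$ because $e(i) \le \Delta$ throughout the phase. The total relabel contribution is therefore at most $k \cdot O(n^2) = O(kn^2)$ by Lemma~\ref{lem:useful}, accounting for $O(kn^2)$ medium pushes. For the per-phase initial value of $\Phi_2$, the end-of-phase invariant $e(i) < \Delta/k$ gives $\Phi_2 < \sum d(i) \le (n-2)(n+1) = O(n^2)$ at each phase's end; when $\Delta$ is divided by $k$ the potential rescales multiplicatively, so the starting value in the next phase is at most $O(kn^2)$.

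The remaining task is to show that the phase-transition rescaling contributes only $O(n^2)$ to the medium-push count per phase (yielding $O(n^2 \log_k U)$ total) rather than the naive $O(kn^2)$ per phase (which would give the weaker $O(kn^2 \log_k U)$). This refinement is the main technical obstacle. The intended argument uses the max-distance selection rule on $MediumSet$ to charge each medium push either to an explicit relabel (costing $O(k)$) or to a distinct ``level-entry'' of a node into $MediumSet$ at its current distance: since medium pushes process nodes top-down and each push empties its pusher, a level's supply of mediums cannot be replenished without an intervening relabel at or above that level. Summing across the $n+2$ distance levels then gives at most $O(n^2)$ medium pushes per phase beyond those charged to relabels. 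Combining with the $O(kn^2)$ relabel contribution across all $O(\log_k U)$ phases yields the claimed bound $O(kn^2 + n^2\log_k U)$.
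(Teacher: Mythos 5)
Your first two paragraphs are correct as far as they go: medium pushes occur only when LargeSet is empty and are therefore emptying, and with the potential $k\Phi_1$ the relabel contribution is $O(kn^2)$ while the start-of-phase value is $O(kn^2)$ per phase, which is exactly why this potential alone yields only $O(kn^2\log_k U)$. The gap is in your third paragraph, which is the entire content of the lemma. The charging scheme you propose -- every medium push beyond the first at its (node, distance) pair is charged to a relabel at cost $O(k)$ -- is not valid, because a single relabel can be responsible for $\Theta(n)$ repeat medium pushes. Concretely, suppose the phase has already processed all levels top-down (so every (node, level) pair has been ``used''), and a node $v$ at level $d$ with excess about $\frac{\Delta}{k}$ has no admissible arc and is relabeled to $d+1$. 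It can then push its excess into a node at level $d$, which thereby re-enters MediumSet and empties itself into a node at level $d-1$, and so on: the excess cascades down one level at a time, producing up to $\Theta(n)$ medium pushes at already-used pairs, all enabled by that one relabel. So each relabel would have to be charged $\Theta(n)$, not $O(k)$, and your argument only gives $O(n\cdot n^2 + n^2\log_k U)$, which is weaker than the claimed bound whenever $k \ll n$. (Your qualitative statement that replenishing a level requires a relabel at or above it is true; it is the quantitative step that fails.)

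The paper closes exactly this hole with a different potential, $\Phi_2 = \sum_{j\in P} e(j)\,\frac{d(j)-\ell+1}{\Delta}$, where $\ell$ is the minimum level from which a medium push has occurred in the current phase and $P$ consists of the nodes above $\ell$ together with level-$\ell$ nodes that have already made a medium push. This potential is $0$ at the start of every phase, so the start-of-phase cost you were fighting disappears entirely; each non-first medium push decreases it by at least $\frac{1}{k}$; a relabel increases it by $O(1)$, hence $O(n^2)$ over the whole algorithm, i.e.\ $O(kn^2)$ medium pushes; and -- this is the mechanism your sketch lacks -- when $\ell$ decreases, the medium push forcing the decrease comes from the maximum-distance node of MediumSet while LargeSet is empty, so at that moment every node of $P$ has excess below $\frac{\Delta}{k}$ and contributes only $\frac{1}{k}$ per unit decrease of $\ell$. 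It is precisely this accumulated ``height above $\ell$ times small excess'' that pays for the cascades described above, giving an increase of $O(n^2/k)$ per phase from decreases of $\ell$, hence $O(n^2)$ medium pushes per phase and $O(kn^2 + n^2\log_k U)$ in total (your first-push-per-node count plays the same role as the paper's ``first medium push'' term). Without this $\ell$/$P$-weighted potential, or an equivalent device that exploits the max-distance selection rule quantitatively, the final step of your proposal does not go through.
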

\begin{proof}
We use  a potential function $\Phi_2$ of \cite{AOT89}, which is based on parameters  $\ell$ and $P$.  
At the beginning of a scaling phase, $\ell=n+1$. Subsequently, 
$\ell$ is the minimum distance label from which there has been a medium push during the phase.    At the beginning of the scaling phase, $P = \emptyset$.  Subsequently,

\indent $P:=\{i\in N: d(i)>\ell\} \, \cup 
 \{i \in N:d(i)=\ell$, and there was a medium push from $i\}$. \\
 
 We observe that whenever there is a medium push from any node $v$, then $v \in P$ except possibly for the first medium push from  node $v$.  For any scaling phase, the  number of first medium pushes from nodes of $N$ is at most $n-2$.  To bound the remaining number of medium pushes, we rely on the following potential function  developed in  \cite{AOT89}.

The potential function $\Phi_2$ is defined as follows.

 $$\Phi_2:=\sum_{j\in P}  e(j)\cdot\frac{d(j)-\ell+1}{\Delta}.$$ 

Note that $\Phi_2 = 0$ at the beginning of a scaling phase.
 
Every medium push from a node in $P$ leads to a decrease in $\Phi_2$  by at least $\frac{1}{k}$.  
Thus, the total number of medium (non-first) pushes in the $\Delta$-scaling phase is 
at most $k$ times the total decrease in
$\Phi_2$ during the phase.  Because $\Phi_2 = 0$ at the start of the scaling phase, the total  decrease in
$\Phi_2$ during the phase is bounded by the total increase in
$\Phi_2$ during the phase. 

There are three ways in which $\Phi_2 $ can increase during a scaling phase:
\begin{enumerate}
\item A node's distance label increases.
\item $\ell$ decreases.
\item A node enters $P$.  
\end{enumerate}

We first consider the impact  of relabels on $\Phi_2$.
At an iteration at which a node $i$ increases its distance label by 1,  $\Phi_2 $ increases by 
$\frac{e(i)}{\Delta}$, which is at most 1, except possibly in the case in which $d(i) = \ell$ and $i\notin P$.  In this latter case, $\Phi_2 $ increases by  $\frac{2e(i)}{\Delta}$, which is at most 2.  Thus the increase in $\Phi_2 $  over all phases due to distance increases is at most 
$2(n-2)(n+1)$.

We next consider the impacts from decreases in the parameter $\ell$.  Suppose that $\ell$ is decreased by 1.  Let $P$ be defined immediately prior to the decrease in $\ell$.   
For each node $i \in P$, $d(i) - \ell$ increases by 1, and  
 $\Phi_2$ increases by at most 1.  The total increase in $\Phi_2$ per unit decrease in  $\ell$ is at most $|P|$, which is at most $n-3$.
The increase in  $\Phi_2$ caused by all of the decreases in $\ell$ is less than $ n^2$.



Finally, we consider the increases in $\Phi_2$ that occur when nodes enter $P$.  If a node $i$ with $d(i) = \ell$ enters $P$, then $\Phi_2$ increases by at most 1.   If a node $i$ enters $P$ because of a decrease of $\ell$ by 1, then  $\Phi_2$ increases by at most 2.  (We consider $\ell$ as decreasing by 1 unit at a time.)  Thus, the increase in $\Phi_2$ caused by nodes entering $P$ is less than $2n$.   

By the above, the total increase in $\Phi_2 $ over all scaling phases is $O(n^2 + \frac{n^2 \log_k U}{k} )$, and thus the total number of medium pushes over all scaling phases is  $O(kn^2 + n^2 \log_k U )$.
\end{proof}

To complete the proof of Theorem \ref{th:LMEStime}, we need to bound the time for the selection of medium and large excess nodes.   We first state it as a lemma.

\begin{lem}
The total  time for selecting nodes is 
$O( \# \text{of pushes} +  \# \text{of relabels} + n \log_k U)$.
\label{lem:selection}
\end{lem}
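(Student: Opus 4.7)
The plan is to implement LMES-Select-Node using level-indexed bucket arrays so that each call runs in $O(1)$ worst-case time, and so that the cost of maintaining the data structure is $O(1)$ amortized per push or relabel plus $O(n)$ per $\Delta$-phase for (re-)initialization. For each level $d \in \{0, 1, \ldots, n+1\}$ I would maintain two doubly-linked lists $L_d$ and $M_d$ holding the large- and medium-excess nodes at level $d$, together with counters $|L|, |M|$ and two head pointers $\ell_L = \min \{d : L_d \ne \emptyset\}$ and $\ell_M = \max \{d : M_d \ne \emptyset\}$ (taken to be $n+2$ and $-1$, respectively, when their sets are empty). With this structure a call to LMES-Select-Node is just a head-pointer dereference that returns an arbitrary element of $L_{\ell_L}$ (when $|L| > 0$) or of $M_{\ell_M}$ (when $|L|=0$ and $|M| > 0$).

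First I would charge the $O(n)$ per-phase initialization, accounting for the $O(n \log_k U)$ term. Next I would observe that every push or relabel causes $O(1)$ class transitions (the two endpoints of the pushed arc, or the single relabeled node, move between the \emph{inactive}, \emph{medium}, and \emph{large} classes), each handled in $O(1)$ time on the relevant bucket and counter. Thus the $O(\#\text{pushes} + \#\text{relabels})$ term covers all bucket bookkeeping. The head pointer $\ell_L$ is then updated in $O(1)$ by direct assignment in three cases: a large-mode push creating a new large node at $\ell_L - 1$; a medium-mode push that creates the first large node after a gap, for which we set $\ell_L := \ell_M - 1$; and any operation that drops $|L|$ to zero, for which we set $\ell_L := n+2$. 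A symmetric set of direct assignments handles $\ell_M$, together with an upward move of one in response to a relabel of the unique node in $M_{\ell_M}$.

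The main obstacle is bounding the cost of the one remaining case: when $L_{\ell_L}$ empties while $|L|>0$ (and symmetrically for $\ell_M$), we must scan consecutive empty buckets until we find the new extremal nonempty bucket. The plan for this is the following amortized argument. Consider a \emph{session}, defined as a maximal time interval during which $|L| > 0$. Within a session the pointer $\ell_L$ changes only in $\pm 1$ steps: each $-1$ step is paid by a single push (the one that created a large node at $\ell_L - 1$), and each $+1$ step is exactly one unit of upward scanning. Hence the total upward scanning inside a session equals the number of $-1$ steps in that session plus the algebraic drift $\ell_L^{\text{end}} - \ell_L^{\text{start}}$, which lies in $[-n, n]$. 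Summed over the phase, and using the fact that the session-boundary resets of $\ell_L$ are $O(1)$ direct assignments that do no real scanning work, this yields upward scanning of $O(\#\text{pushes in phase} + n)$ for $\ell_L$; the symmetric argument, with relabels now furnishing the $+1$ moves of $\ell_M$, yields $O(\#\text{relabels in phase} + n)$ for the downward scanning of $\ell_M$.

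Combining the per-phase bounds and summing over the $O(\log_k U)$ phases gives the claimed total of $O(\#\text{pushes} + \#\text{relabels} + n \log_k U)$. The hardest part is verifying that the session-boundary telescoping closes cleanly even in phases that contain many medium-to-large transitions, so that the algebraic drifts aggregate to $O(n)$ per phase rather than $O(n)$ per session; this is where the careful accounting is needed in the write-up.
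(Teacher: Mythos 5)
Your data-structure design is reasonable, but the two amortization arguments you rely on are exactly the places where your write-up stops short, and as stated neither closes. For $\ell_L$ (the paper's MinL): bounding the drift by $\ell_L^{\text{end}}-\ell_L^{\text{start}}\in[-n,n]$ \emph{per session} is not enough, because a phase can contain as many sessions as pushes (LargeSet can empty and refill repeatedly), so the drifts could a priori aggregate to $\Theta(n\cdot\#\text{pushes})$ rather than $O(n+\#\text{pushes})$ per phase; you acknowledge this is unverified, and it is the crux, not a detail. The argument can be repaired, but it needs a new observation you never state: while LargeSet is non-empty, every push originates at the minimum-level large node, so new large-excess nodes are created only at level $\ell_L-1$, strictly below the current minimum; hence the topmost level ever occupied by a large-excess node during a session can rise only through relabels of large-excess nodes, and the total positive drift over a whole phase is at most the number of relabels. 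Without that step your telescoping does not yield the lemma. (The paper deliberately avoids amortizing this pointer at all --- it even remarks that one cannot give simple bounds on the increase or decrease of MinL over a phase --- and instead maintains the successor array \textit{NextL}$(\,)$ so that every update of MinL is $O(1)$ worst case.)

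The ``symmetric'' argument for $\ell_M$ has a sharper problem: it is not symmetric, because you define $\ell_M$ as the maximum level of a \emph{medium-only} node, whereas the paper's MaxML ranges over medium-or-large nodes. With the paper's definition, every push goes from a medium-or-large source to a node one level below it, so MaxML can never increase except by a relabel, and the scan-down cost is paid by relabels plus $n$ per phase. With your definition, when LargeSet is non-empty the selected source sits at MinL, which may lie far above every medium node; a single push can then create a medium node at level $\mathrm{MinL}-1$ and make $\ell_M$ jump up by $\Theta(n)$. Your claim that relabels furnish the $+1$ moves of $\ell_M$ is therefore false, and the subsequent downward rescans are no longer charged to anything, so the claimed $O(\#\text{relabels}+n)$ per-phase bound for maintaining $\ell_M$ does not follow. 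The fix is simply to adopt the paper's MaxML (maximum over medium and large excess nodes), which still supports the selection rule since MaxML equals the maximum medium level whenever LargeSet is empty.
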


\begin{proof}   In order to carry out \textit{LMES-Select-Node} efficiently, the algorithm maintains the following data structures.  These data structures differ from those in \cite{AOT89} because the selection rule in LMES is different from the rule in the stack-scaling algorithm. 

\begin{itemize}
\item  MSet$(i) =$ linked list of medium excess nodes with distance label $i$ for $0 \le i \le n+1$.
\item  LSet(i) = linked list of large excess nodes with distance label $i$ for $0 \le i \le n+1$.
\item MaxML = maximum distance label of a node that has medium or large excess.  If there is no medium or large excess node, then the scaling phase terminates.
\item  MinL = minimum distance label of a node with large excess. If there is no large excess node, then MinL = -1.
\item  NextL$(i)$ for $0 \le i \le n-1$.    If LSet$(i) \neq \emptyset$,   then NextL$(i) = \min \{j : j > i$ and LSet$(j) \neq  \emptyset \}$.  If there is no such $j$ or if LSet$(i) = \emptyset$, then NextL$(i) = 0$.
\end{itemize}

We select a node in $O(1)$ time as follows:  if MinL $\neq 0$, we select the first node in LSet(MinL); else, if MaxML $\neq 0$, we select the first node in MSet(MaxML); if MaxML = 0, the phase ends.


The total time to initialize the data structures is $O(n)$ per scaling phase.  We note that MSet$(\,)$ and LSet$(\,)$ can each be updated in $O(1)$ time following a push/relabel.  To establish the running time of Lemma \ref{lem:selection}, we will show how to carry out all of the updates of  MaxML, MinL, and NextL$(\,)$ in time proportional to the number of pushes plus the number of relabels. 

We first consider increases in MaxML.   MaxML can increase following a relabel of a medium or large excess node.   This update takes $O(1)$ time per relabel and $O(n^2)$ over all scaling phases.    

We now consider decreases in MaxML.   MaxML can decrease by at most 1 following a medium or large push.   However, following a saturating push from node $v$, MaxML can decrease by much more than 1.  If MaxML decreases from a value $i$, the algorithm updates MaxML by iteratively checking MSet$(j)$ and LSet$(j)$ for $j = i-1, i-2, \dots$, until it finds a non-empty set.  (If all sets are empty, the phase terminates.) The time to update MaxML when MaxML is decreasing is proportional to the amount of decrease.  The total decrease in MaxML during a scaling phase is bounded from above by the total increase in MaxML plus $n$. Thus, the total decrease in MaxML over all scaling phases is bounded by $n^2 + n \log_k U$. The time to carry out all updates MaxML over all scaling phases is $O(n^2 + n \log_k U)$. 

We next consider updates to MinL and NextL$(\,)$.  
There are three operations (and four cases) in which NextL$(\,)$ or MinL needs to be updated.

\begin{itemize}
\item The beginning of a new scaling phase.
\item A medium or large push in $(v, w)$ causes node $w$ to have large excess. 
\item  A large push in $(v, w)$ eliminates the large excess from node $v$.
\item There is a relabel of a large excess node $v$.
\end{itemize}

At the beginning of a scaling phase, the time to initialize NextL$(\,)$ and MinL is $O(n)$. 
For each of the three remaining cases, NextL( )  and MinL can be updated in $O(1)$ steps.  Thus, the total time to update NextL( )  and MinL is 
$O(\# \text{ of pushes} + n \log_k U)$.  This completes the proof of Lemma \ref{lem:selection} as well as the proof of Theorem \ref{th:LMEStime}.
\end{proof}

We observe that we cannot use an amortized analysis similar to the analysis of MaxML when analyzing MinL because one cannot provide simple bounds on the increase or decrease in MinL over a phase.   If a medium push from level $j$ creates a large excess node, then MinL changes from $-1$ to $j - 1$.   If a saturating push from a node at level $j$ results in LargeSet becoming empty, then MinL changes from $j$ to $-1$.   For this reason, we introduced the array NextL$(\,)$.

\section{The parameters $Q$, $\epsilon$, and $M$}

We will employ a number of parameters in the remainder of this paper.  We first define parameters $Q$, $\epsilon$, and $M$.   

We let $Q =   \lceil \log_k 4n \rceil$.   We let $\epsilon = k^{-Q}$.  We let $M = k^{2Q}$. 

The following are true: 
\begin{enumerate}  
\item	$\frac{1}{4nk} < \epsilon \le \frac{1}{4n}$.
\item	$Q$ is the least integer for which $k^{-Q} \le \frac{1}{4n}$.
\item   $M = \epsilon^{-2}$.
\item $16n^2 \le M < 16k^2M^2$.  
\item	If $\Delta$ is the scaling parameter for a scaling phase, then $Q$ scaling phases later, the scaling parameter is $ \epsilon \Delta$.
\end{enumerate}  

 In subsequent sections, many of the parameters will be expressed as powers of $\epsilon$ or expressed in terms of multiples of $M$ or  $Q$.

\section{An overview of the Enhanced LMES Algorithm}

In this section, we present concepts and procedures that improve the running time from 
$O(nm + kn^2 + n^2 \log_k U)$ to  $O(kn^2 + nm \log_k n)$.  The $O(kn^2)$ term is due to the increases in $\Phi_2$ from distance relabels.   That term also occurs in the Enhanced LMES Algorithm.   We do not consider that term further in this section.  
Instead, we focus on the $O(n^2 \log_k U)$ term, which arises in the analysis of both potential functions. 

\begin{enumerate}
\item \textbf{Increases in $\Phi_1$ at the beginning of phases.}   Consider each node $i$.    When the scaling parameter is decreased by 
a factor $k$ at the beginning of a scaling phase, the contribution $\frac{e(i) d(i)}{\Delta}$ increases by a factor 
of $k$.   This leads to an increase in $\Phi_1$  by $O(n)$.   Over $log_k U$ scaling phases, the contribution 
to the increase in $\Phi_1$   is $O(n \log_k U)$ for each node $i \in N$, which accounts for $O(n \log_k U)$ 
large pushes.  Summing over all nodes, this accounts for $O(n^2 \log_k U)$  large pushes.   
\item  \textbf{Increases in $\Phi_2$ due to decreases in $\ell$.}   Consider each node $i$.   For each unit decrease in the parameter $
\ell$, each node in $P$ leads to an increase in $\Phi_2$  by $O(\frac{1}{k})$.  Thus, for each node $i$, 
reductions in $\ell$ lead to an increase in $\Phi_2$  of $O( \frac{n}{k} )$ during the scaling phase, accounting for 
$O(n)$ medium pushes. Node $i$ accounts for $O(n \log_k U)$ medium pushes over all scaling phases.   
Summing over all nodes, this accounts for $O(n^2 \log_k U)$  medium pushes.
\end{enumerate}

Any further discussion of increases in $\Phi_1$ in this section refers to increases due to the beginning of phases.  Any further discussion of increases in  $\Phi_2$  in this section, refers to increases caused by decreases in $\ell$.  

In the Enhanced LMES Algorithm, we reduce this increase of $\Phi_1$  to $O(nm \log_k n)$.   We reduce this increase in $\Phi_2$  to $O(\frac{nm \log_k n}{k})$.  These improvements reduce the number of large and medium pushes to  
$O(kn^2 + nm \log_k n)$  in total.  

We will rely on various concepts, procedures, and data structures to achieve the improved running time.   They are:
\begin{enumerate}
\item   the partition of arcs into small, medium, and large arcs;
\item   abundant arcs and contractions;
\item   special nodes; and
\item   the flow-return forest.
\end{enumerate}
 
In the next four subsections, we provide an overview of how these four ideas lead to an improved strongly polynomial running time.

\subsection{Small, medium, and large arcs}
 
We say that arc $(i, j)$ is \textit{small} at the $\Delta$-scaling phase 
if $u_{ij} + u_{ji} <  \epsilon^5 \Delta$. We say that the arc $(i, j)$ is \textit{medium} 
at the $\Delta$-scaling phase 
if $\epsilon^5 \Delta \le u_{ij} + u_{ji} < 2M \Delta$.    
We say that the arc is arc $(i, j)$ is  \textit{large} 
at the $\Delta$-scaling phase 
if $ u_{ij} + u_{ji} \ge 2M \Delta$.  
We sometimes refer to the arc as small, medium, or large without reference to $\Delta$ 
if the scaling parameter is clear from context.

The small arcs do not contribute much to the excess of a node, and accordingly do not contribute much to the increases in $\Phi_1$  or $\Phi_2$  described above.   With respect to the analysis of the potential functions, we can (essentially) ignore flows in small arcs.

We now consider medium arcs.  Each arc $(i, j)$ is a medium arc for $O(\log_k n)$ scaling phases.  
Let $\Psi = \{\Delta: \Delta \text{ is a parameter for a scaling phase} \}$.   For a given scaling parameter $\Delta \in  \Psi$, let $N_{med} (\Delta)$ denote the set of nodes that are incident to at least one medium arc during the $\Delta$-scaling phase. Accordingly,

$$\sum_{\Delta \in \Psi} |N_{med}(\Delta)| = O(m \log_k n).$$

Moreover, the contribution to the increases in $\Phi_1$  and $\Phi_2$  of nodes that are incident to a medium arc are 
$O(nm \log_k n)$ and $O(\frac{nm \log_k n}{k})$ respectively.   Thus, over all scaling phases, nodes that are incident to medium arcs account for $O(nm \log_k n)$ large and medium pushes.

What remains is to analyze the contribution to increases in $\Phi_1$  and $\Phi_2$  from nodes that are not incident to any medium arc.  But prior to focusing on nodes that are not incident to medium arcs, we discuss abundant arcs and contractions.

\subsection{Abundant arcs and contraction}

We say that an arc $(i, j)$ is abundant at the $\Delta$-scaling phase if $(i, j)$ is large and if 
$r_{ij} \ge M \Delta$, where $r_{ij}$ is the residual capacity at the beginning of the scaling phase.   Once an arc becomes abundant, it remains abundant.   We state this result as the following corollary to Lemma \ref{lem:totalflow}, which is the counterpart of Lemma \ref{lem:mostpush} for the Enhanced LMES. \\

\noindent \textit{\textbf{Corollary \ref{cor:abundance}}.
If $(i, j)$ is abundant at the beginning of the $\Delta$-scaling phase of the Enhanced LMES Algorithm, then for every scaling parameter $\Delta' \le \Delta $ the following are true. 
\begin{enumerate}
\item   $(i, j)$ is abundant in the $\Delta'$-scaling phase and
\item   $r_{ij} > 0$ throughout the $\Delta'$-scaling phase.
\end{enumerate}}


The importance of abundant arcs is that they can be used to transform the max flow problem into an equivalent problem of smaller size.  In particular, if there is a directed cycle $C$ of abundant arcs, then one can contract the cycle $C$ into a single node called a \textit{merged node} and run the Enhanced LMES algorithm on the contracted graph.   
As each abundant cycle is discovered, it is contracted into a merged node.   The Enhanced LMES Algorithm  continues from where it left off  on the contracted graph.   Eventually, the algorithm determines a maximum flow in some contracted version of the original network.   Then the maximum flow in this contracted graph is converted into a maximum flow in the original graph by expanding the merged nodes in the reverse order in which they were contracted.   We describe the contraction  in Section \ref{sec:contraction}.  We describe the expansion in the appendix in Section \ref{sec:expansion}.    

Contraction, in and of itself, does not even lead to a strongly polynomial time algorithm.  We illustrate this fact by the following example.  Suppose that the node set 
$N = \{s, 1, 2, t\}$.   Suppose that there are arcs $(s, 1)$ and $(1, t)$ with a capacity of $U = k^\alpha $ for some very large integer $\alpha$.  There are also arcs $(s, 2)$ and $(2, t)$ with a capacity of $1$.   After initialization, $e(1) = k^\alpha $ and $e(2) = 1$.   The initial scaling parameter is $\Delta = U$.   However, there is no push from node 2 until $\Delta = 1$. If each scaling parameter is obtained by dividing by $k$, then the number of scaling phases would be $\alpha$, which is not strongly polynomial. 

The algorithm fails to be strongly polynomial in this example because of the scaling phases in which there was no push.  We refer to a scaling phase as \textit{useful} if there is at least one push during the scaling phase or if the flow-return forest (discussed in Subsection \ref{subsec:FRF1}) is non-empty.  Otherwise, it is \textit{useless}.   Our algorithm avoids useless scaling phases by modifying the rule for choosing the next scaling parameter.  We will establish the following result.\\

%

\noindent \textit{\textbf{Corollary \ref{cor:phases}.}   The number of scaling phases in the enhanced LMES is 
$O(m \log_k n)$.}   \\

\subsection{Special nodes}

The enhanced LMES will partition the nodes in $N\setminus \{s,t\}$ into nodes that are ``special'' and nodes that are ``non-special''.   (Actually, we will further partition the non-special nodes into two parts, but for now it suffices to consider special and non-special nodes.) We will define the term ``special'' later in this subsection. For now, we provide two properties of special nodes and one property of non-special nodes. 

\begin{enumerate}
\item   If node $i$ is special, then $e(i) < 1.5 \epsilon ^4 \Delta$ at the beginning of the scaling phase.
\item   If node $i$ is special, then $e(i) < 1.5 \epsilon ^4 \Delta $ whenever $\ell$ is reduced.
\item   If node $i$ is not special, then within $O(\log_k n)$ scaling phases, $i$ will be incident to a medium arc, or $i$ will be a node of an abundant cycle that is contracted.
\end{enumerate} 

The first two properties imply that each special node contributes much less than $\frac{1}{n}$ per scaling phase to the increase of $\Phi_1$ and $\Phi_2$ .    Since there are $O(m \log_k n)$ scaling phases in total, special nodes account for $O(nm \log_k n)$  large and medium pushes.

The third property implies that the total number of nodes that are not special is 
$O(m \log_k n)$ over all scaling phases.   Thus, nodes that are not special will account for $o(nm \log_k n)$  large and medium pushes.  \\

To summarize, non-special nodes account for $O(kn^2 + nm \log_k n)$ medium and large pushes.   And special nodes also account for $O(kn^2 + nm \log_k n)$ medium and large pushes.   Thus, the number of medium and large pushes is $O(kn^2 + nm \log_k n)$.  This (apparently) leads to the running time claimed in our paper.   However, we have not yet mentioned an important issue that arises because of special nodes.    In order to ensure that Properties 1 and 2 are satisfied by each special node, we permit the excess for a special node $i$ to be negative, provided that 
$e(i) \ge -1.5 \epsilon ^4 \Delta$.    This small difference in what is permitted makes a huge difference in the analysis.    The generic push/relabel algorithm terminates with a maximum flow because every excess is non-negative at termination.   We needed to adapt our algorithm so that every excess is 0 at termination.  This required new methods for sending flow to nodes with negative excess. 

In the next subsection, we will describe a new data structure that returns flow to nodes with negative excess.  Using this data structure, excesses are always at least $- \epsilon \Delta$, and all excesses are nonnegative at termination.   But first, we define what it means for a node to be special.  We also explain why it makes sense to permit negative excesses for special nodes.

Recall that $(i, j)$ is abundant if $(i, j)$ is large and $r_{ij} \ge M \Delta$.  If $(i, j)$ is abundant, and if $(j, i)$ is not abundant, then arc $(j, i)$ is called \textit{anti-abundant}.  If $(i, j)$ and $(j, i)$ are both abundant, then arc  $(i, j)$ is called \textit{bi-abundant}.   A bi-abundant arc is an abundant cycle with two arcs.  In general, the algorithm contracts any bi-abundant arc.   (There is an exception discussed in the section on the flow-return forest.)   

During the $\Delta$-scaling phase of the Enhanced LMES Algorithm, we  require that the residual capacity of each anti-abundant arc is a multiple of $\frac{\Delta}{k}$.   In Section \ref{subsec:newinvariant}, we explain how this invariant can be satisfied throughout the Enhanced LMES with only a constant increase in running time.  

For a given preflow $x$ and a given node $v$, we define the \textit{$\Delta$-imbalance} of $v$ with respect to $x$ to be:
$$ \text{IMB}(v, x, \Delta) = e_x(v) + \sum_{(j, v) \in Anti(\Delta)} r_{jv}(x) - 
	\sum_{(v, j) \in Anti(\Delta)} r_{vj}(x).$$ 

We define $\text{IMB}(v, \Delta)$ to be $\text{IMB}(v, x, \Delta)$, where $x$ is the initial preflow at the beginning of the $\Delta$-scaling phase. 

Let us for the time being ignore small capacity arcs.   (Assume that there are no small arcs incident to node $v$.)  Because residual capacities of anti-abundant arcs are multiples of $\frac{\Delta}{k}$, 

$$\text{IMB}(v, x, \Delta) \equiv e_x(v)  \left(\kern -10pt \mod \frac{\Delta}{k}\right ).$$

A node $v$ is called a \textit{special node} at the $\Delta$-scaling phase if $v$ is not incident to any medium arcs and if  $|\text{IMB}(v,  \Delta)| \le \epsilon^4 \Delta $.

If $v$ is special, then every medium or large push from $v$ will be a multiple of $\frac{\Delta}{k}$ units of flow.   Our algorithm analysis relies on special nodes having an excess close to 0 at the beginning of scaling phases and when $\ell$ is reduced.   This works well for a special node $v$ if  $0 \le \text{IMB}(v,  \Delta) \le \epsilon^4 \Delta $.   When $v$ is no longer medium or large, then 
$0 \le e(v) \le \epsilon^4 \Delta $.

Consider what happens if $ - \epsilon^4 \Delta < \text{IMB}(v,  \Delta) < 0 $.   Suppose that node $v$ is not permitted to have a negative excess. When $v$ is no longer medium or large, then 
$\frac{\Delta}{k} - \epsilon^4 \Delta < e(v) < \frac{\Delta}{k} $.  In this scenario, a special node would contribute significantly to the increase of the potential functions.

So, our Enhanced LMES Algorithm permits pushing flow from a special node $v$ when $e(v)$ is close to but slightly less than $\frac{\Delta}{k}$.  This can result in a very small negative excess. 
To ensure that flow is returned to $v$ prior to termination, the algorithm uses a data structure that we call ``the flow-return forest.''

\subsection{The flow-return network}
\label{subsec:FRF1}

Actually, a special node $v$ can have negative excess provided that $e(v) \ge -1.5 \epsilon ^4 \Delta$.   The procedures of the LMES Algorithm are not enough to guarantee that flow will be sent to node $v$.   Accordingly, we make two additional changes in the LMES Algorithm.  The first change is the requirement of a ``buffer'' excess at each non-special node.   In particular, if node $w$ is not special, then we ``require'' that $e(w) \ge \epsilon  \Delta$.   If $w$ is non-special and if $e(w) \ge \epsilon  \Delta$, we say that $w$ is \textit{normal}.   If $w$ is non-special and if $e(w) < \epsilon  \Delta $, we say that $w$ is \textit{violating}.   A node $w$ becomes violating at the beginning of the first scaling phase after $w$ stops being special. 

When a node $v$ becomes violating, it is added to a data structure that we call the ``flow-return forest'' (FRF), which we usually denoted as $F$.  The flow-return forest $F$ has the following properties.

\begin{enumerate}
\item   $F$ is a forest; that is, it has no cycles.
\item   Each component of $F$ has a root node.   For each $w  \in   F$, $Root(F, w)$ denotes the root node of $w$ in $F$.
\item   For each node $w  \in   F$, there is a directed path of abundant arcs in $F$ from $Root(F, w)$ to $w$.  This implies that  $F$ is a directed forest, with all arcs directed away from root nodes.
\item   An arc $(i, j)$ is called \textit{FRF-eligible} if $(i, j)$ is abundant and $d(j) \le d(i) + 1$.   Every arc of the 
flow-return forest is FRF-eligible.  
\item   If a violating node $v$ is added to $F$ in the $\Delta$-scaling phase, then Flow-Needed$(v)$ is set to 
$\frac{ \epsilon^2 \Delta}{k}$.   Suppose that
$\Delta' = \epsilon ^2\Delta$.  If $v$ is still violating at the $\Delta'$-scaling phase, then 
$\frac{ \epsilon^2 \Delta}{k}$ units 
of flow will be sent from $Root(F,v)$ to $v$ in $F$.  (Note that the reversal of the path $P$ from $Root(F,v)$ to 
$v$ will have residual capacity after flow is sent in $P$.   Condition (4) on distance labels is needed so that 
every arc of the reversal of $P$ is valid.) 
\item   Each root vertex $i  \in   F$ is a normal node.  A root vertex $i$ maintains additional excess called $Reserve(i)$, where
$$ Reserve(i) = \sum_{v \in Desc(i, F)} NeededFlow(v),$$
where $Desc(i, F)$ denotes the descendants of node $i$ in $F$.    The reserve requirements imply that $e(i) \ge \epsilon  \Delta + Reserve(i)$.

\item   Every leaf node of $F$ is violating.  (Non-violating leaf nodes are deleted from $F$).
\item   If $v  \in   F$ at the $\Delta$-scaling phase, then 
$e(v) \ge -\frac{\epsilon  \Delta }{2k}$.
\item   If $v  \in   F$ is violating at the $\Delta$-scaling phase and if Flow-Needed$(v) = \frac{\Delta}{k}$, then $\frac{\Delta}{k}$ units of flow are sent from $Root(F, v)$ to $v$ in $F$.  Subsequent to the flow being sent, $e(v) \ge \frac{\Delta}{2k}$.
\end{enumerate}

In ensuring that the above properties are all satisfied by the flow-return forest, we also need to address several complexities that arise.  We briefly describe them here.   

\begin{enumerate}
\item   The algorithm needs to be able to add any violating node to the flow-return forest while satisfying all of the above conditions.   The FRF-eligible Path Lemma of Section \ref{sec:FRF} states the following.  For any violating node 
$v$, there is a path of FRF-eligible arcs from a normal node to node $v$.   By relying on this lemma, we are always able to add violating nodes to the flow-return forest.  Using a breadth first search algorithm, the running time is $O(m)$ to add a violating node to $F$.
\item   The number of violating nodes added to the flow-return forest is $O(m)$.   In order to achieve a total running time of $O(kn^2 + m \log_k n)$, we cannot afford to use the $O(m)$ breath first search algorithm to add each violating node $v$.  Instead, we use a depth first search algorithm that runs in $O(n)$ time. 
\item  If node $j \in F$ and if $j$ is relabeled, then the FRF-eligible arc directed into node $j$ might stop being eligible.   We bypass this difficulty by ensuring that  no node of $F$ is relabeled.  
\item   The reserve of root nodes might exceed $\Delta$.    We bypass this difficulty by not including the reserve of root nodes when evaluating the potential functions.  
For each root node $v$, the term included in the evaluation of the potential functions is $e(v) - Reserve(v)$.   
\item   We need to ensure that $e(v) - Reserve(v) \le \Delta + k \epsilon  \Delta$.    To guarantee this property, we sometimes send additional flow from root nodes in the procedure \textit{FRF-Delete}.
\end{enumerate}

\bigskip

\section{Annotated Pseudo-code}
\label{sec:annotated}

Here we present a very high level pseudo-code for the Enhanced LMES Algorithm.  We also present  the pseudo-code of the scaling phase.  Following the pseudo-code is an annotation explaining of the steps and where the pseudo-code of  procedures are located.\\

\noindent \textbf{\textit{Enhanced-LMES Algorithm}}\\
01.   \textbf{begin}\\
02.   \hspace{.15 in}  \textit{Initialize-Enhanced-LMES}\\
03.   	\hspace{.15 in}  \textbf{while} $e(v) \neq 0$ for some $v \neq s$ or $t$,  \textbf{do} 
\textit{ScalingPhase}$(\Delta, F)$;\\
04.   	\hspace{.15 in}  expand the contracted cycles, obtaining a max flow;\\
05.   \textbf{end}\\

\noindent \textbf{Line 02.}   The initialization is very similar to the initialization of the LMES Algorithm except that each node other than $s$ and $t$ is guaranteed to have an excess of at least $\epsilon  \Delta$ after initialization.   Details are given in Subsection \ref{subsec:initialization}.\\

\noindent \textbf{Line 03.}    \textit{ScalingPhase}( ) is the main procedure.  A high level view is presented below.\\

\noindent\textbf{\textit{Procedure ScalingPhase$(\Delta, F)$}}\\
01.   \textbf{begin}\\
02.   	\hspace{.15 in}  \textbf{while} there is a newly violating node $v \notin F$, \textbf{do} 
\textit{FRF-Add}$(v, \Delta, F)$;\\
03.   \hspace{.15 in}  	\textbf{while} there is a violating node $v \in F$ 
with $FlowNeeded(v) = \frac{\Delta}{k}$,\\
--  \hspace{.50 in}		 \textbf{do} 	\textit{FRF-Pull}$(v, F, \Delta)$;\\
04.   	\hspace{.15 in}  \textit{FRF-Recursive-Delete-and-Merge}$(\Delta, F)$;\\
05.   	\hspace{.15 in}  \textbf{while} there is a medium or large excess node \textbf{do} \\
06.   	\hspace{.30 in}  	$v :=$ \textit{Enhanced-LMES-Select-Node};\\ 
07.   \hspace{.30 in}		\textbf{if } $v \in F$, \textbf{then do}
	\textit{FRF-Push}$(v, \Delta, F)$;\\
08.   \hspace{.30 in}	   \textbf{else} if $v \notin F$ \textbf{then do}
 \textit{Enhanced-LMES-Push/Relabel}$(v, \Delta)$;\\	
09.   	\hspace{.30 in}		\textit{FRF-Recursive-Delete-and-Merge}$(\Delta, F)$;\\
10.   	\hspace{.15 in}  \textbf{endwhile}\\
11.   	\hspace{.15 in}  \textit{Get-Next-Scaling-Parameter}$(\Delta)$;\\
12.   \textbf{end}\\

\noindent \textbf{Line 02.}   A newly violating node $v$ is a node such that $e(v) < \epsilon  \Delta$ at the beginning of the scaling phase, and such that node $v$ was special in the previous scaling phase.  \\

\noindent \textbf{Line 02.}   $\textit{FRF-Add}(v, \Delta, F)$ appends node $v$ as a leaf of $F$ 
(and possibly adds other nodes to $F$) in such a way that the path from $Root(F, v)$ to $v$ 
consists of FRF-eligible arcs.   The pseudo-code for the procedure \textit{FRF-Add} is given in 
Subsection \ref{subsec:operations}.\\

\noindent \textbf{Line 03.}   If $v$ is violating and if $FlowNeeded(v) = \frac{\Delta}{k}$, then $\frac{\Delta}{k}$ units of flow are sent from $Root(F, v)$ to $v$ along arcs in $F$.  Subsequent to sending this flow, node $v$ is no longer violating.  The pseudo-code for the procedure \textit{FRF-Pull} is given in Subsection \ref{subsec:operations}.\\

\noindent \textbf{Line 04.}  If any leaf node of $F$ is non-violating, it is deleted from $F$.  
Any bi-abundant arc with at most one endpoint in $F$ is contracted.  
The pseudo-code for the procedure \textit{FRF-Recursive-Delete-and-Merge} is given in 
Subsection \ref{subsec:pseudo}.   It relies on the procedure \textit{FRF-Delete}, 
which is described in Subsection \ref{subsec:operations}.\\

\noindent \textbf{Line 05.}  
A special node $v$ has \textit{medium excess} (resp., \textit{large excess}) if 
$e(v)\ge  \frac{\Delta}{k} - 1.5 \epsilon^4 \Delta$ (resp., $ e(v)\ge \frac{\Delta}{2} - 1.5 \epsilon^4 \Delta$).  
A normal node $v$ has \textit{medium excess} (resp., \textit{large excess}) 
if  $e(v)\ge  \frac{\Delta}{k} + \epsilon \Delta$ 
(resp.,  $e(v)\ge  \frac{\Delta}{2} + \epsilon \Delta$).\\

\noindent \textbf{Line 07.}  $\textit{FRF-Push}(v, \Delta, F)$ sends $\frac{\Delta}{k}$ units of flow in arcs of $F$ from $v$ to a violating node $w$ of $F$.  The node $w$ is a leaf node that is a descendant of node $v$.  After the push, node $w$ becomes non-violating.   The pseudo-code for the procedure \textit{FRF-Push} is given in Subsection \ref{subsec:operations}.\\

\noindent \textbf{Line 08.}   The Procedure \textit{Enhanced-LMES-Push/Relabel}$(v, \Delta)$ guarantees that the following invariant is always satisfied:   if $r_{ij} \le r_{ji}$, and if $(i, j)$ is not bi-abundant, then $r_{ij}$ is a multiple of $\frac{\Delta}{k}$.   Because this invariant is satisfied, it is also true that the residual capacity of each anti-abundant arc is a multiple of $\frac{\Delta}{k}$.  The pseudo-code for the procedure \textit{Enhanced-LMES-Push/Relabel} is given in Subsection \ref{subsec:newpush}. \\ 

\noindent \textbf{Line 11.}   Typically, the scaling parameter $\Delta$ is replaced by $\frac{\Delta}{k}$.   An exception is when there would be no push for any of the next $2Q$ scaling phases and where $F = \emptyset$.    In this latter case, the scaling parameter is chosen so that in the next scaling phase one of the following two events takes place (i) there is a push, or (ii) a node becomes violating and $F$ becomes non-empty.  The pseudo-code for the procedure \textit{Get-Next-Scaling-Parameter} is given in Subsection \ref{subsec:pseudo}.

\section{Abundant arcs and contractions}

We say that an arc $(i, j)$ is {\it abundant} in the $\Delta$-scaling phase of the Enhanced LMES if $r_{ij}\ge M\Delta = 16n^2 \Delta$ at the beginning of the scaling phase.      Abundant arcs are guaranteed to have positive residual capacity throughout the remainder of the execution of the algorithm.  A common approach for transforming a weakly polynomial flow algorithm into a strongly polynomial time algorithm involves the contraction of directed cycles of abundant arcs.

Lemma \ref{lem:mostpush} states that the total amount of flow pushed in a scaling phase of the LMES Algorithm is at most $2n^2 \Delta$.   We will show in Subsection \ref{totalflowsection} that the total amount of flow pushed in the $\Delta$-scaling phase of the enhanced LMES Algorithm is less than $5n^2\Delta$.  As a corollary, one can show that an arc that is abundant in the $\Delta$-scaling phase is also abundant in the next scaling phase, and hence in all subsequent scaling phases.  To see why, note assume that  $r_{ij}\ge M\Delta$ at the beginning of the $\Delta$-scaling phase.   At the end of the phase,  $r_{ij}\ge M\Delta - 5n^2\Delta \ge \frac{M\Delta}{2}$.   If $\Delta'$ denotes the scaling parameter at the next scaling phase, then  $\Delta' \le  \frac{\Delta}{2}$ and $r_{ij} > M \Delta'$, and  $(i, j)$ remains abundant.


%


An \textit{abundant cycle} is a directed cycle of abundant arcs.   When the algorithm identifies an abundant cycle $W$, it contracts the cycle into a single node that we refer to as a \textit{merged node}.   The contraction of $W$ reduces the number of nodes in the network by  $|W| - 1$. Since each abundant cycle has at least two nodes, the number of contractions is at most $n-1$.  The number of merged nodes created over all scaling phases is at most $n-1$.  The total number of arcs in the contracted abundant cycles is at most $2n-2$.

Each of the $n-1$ contractions can be carried out in $O(m)$ time.  Thus, the total time to carry out the contractions is $O(nm)$, which is not a bottleneck.   

After contracting an abundant cycle, the Enhanced LMES algorithm continues on the contracted network starting from where it left off.  Eventually, the algorithm identifies a maximum flow, possibly after additional contractions of abundant cycles.  Then the algorithm expands the merged nodes into abundant cycles and recovers a maximum flow in the original network.  We will explain how to expand pseudo-nodes and recover the maximum flow in Subsection \ref{sec:expansion} in the Appendix.  The expansion of the contracted cycles is not a bottleneck operation.

The contraction of abundant cycles is an approach based on the seminal work by Tardos \cite{T86} for transforming weakly polynomial time algorithms into strongly polynomial algorithms.   Orlin \cite{O13} contracted abundant cycles in his $O(nm)$ max flow algorithm.  In addition, contraction of cycles was an important aspect of Goldberg and Rao's \cite{GR98} weakly polynomial max flow algorithm. 

Of special importance to the algorithm is  the contraction of bi-abundant arcs. 
Also of importance are the reversals of abundant arcs.   If $(j, i)$ is abundant and if $(i, j)$ is not abundant, then we say that arc $(i, j)$ is {\it anti-abundant}.   We will explain how the algorithm efficiently maintains lists of abundant, anti-abundant, and bi-abundant arcs in Section \ref{sec:biabundant} of the appendix.

\subsection{Distance labels following a contraction}

In the generic push/relabel algorithm, the running time analysis relies on distance labels never decreasing.   However, when we introduce contractions, distance labels might need to decrease.   This is illustrated in Figure \ref{fig:contract}. \\

 Figure \ref{fig:contract} shows a subset of nodes before and after a contraction.   Unless $d(1)$ is decreased following the contraction (or $d(6)$ is increased), there is no way of assigning a distance label to node $w$ so that the distance labels are valid.\\

In our Enhanced LMES Algorithm, we permit distance labels to decrease subsequent to the contraction of an abundant cycle $W$.


We  assume that prior to the contraction of $W$, $d(v)$ is the minimum length of a path $v$ to $t$.  We can make the assumption true by relabeling the nodes in $O(m)$ time prior to the contraction.   

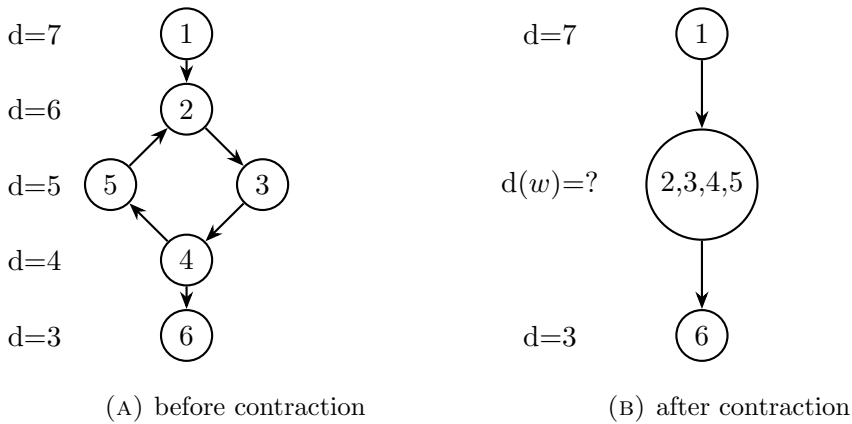
\begin{figure}[H]
\centering
\begin{subfigure}[b]{.4\textwidth}
\begin{tikzpicture}
\begin{scope}[every node/.style={circle,thick,draw}]
    \node (6) at (3,-1) {6};
    \node (5) at (2,1) {5};
    \node (2) at (3,2) {2};
    \node (4) at (3,0) {4};
    \node (1) at (3,3) {1};
    \node (3) at (4,1) {3};
\end{scope}

\begin{scope}[every node/.style={black, circle}]
    \node (k+4) at (1,3) {d=7};
    \node (k+3) at (1,2) {d=6};
    \node (k+2) at (1,1) {d=5};
    \node (k+1) at (1, 0) {d=4};
    \node (k) at (1, -1) {d=3};
\end{scope}

\begin{scope}[>={Stealth[black]},
              every node/.style={fill=white,circle},
              every edge/.style={draw=black,thick}]
    \path [->] (1) edge (2);
    \path [->] (2) edge (3);
    \path [->] (3) edge (4);
    \path [->] (4) edge (5);
    \path [->] (5) edge (2);
    \path [->] (4) edge (6);
\end{scope}
\end{tikzpicture}
\caption{before contraction}
\end{subfigure}
\begin{subfigure}[b]{.4\textwidth}
\begin{tikzpicture}
\begin{scope}[every node/.style={circle,thick,draw}]
    \node (0) at (3,3) {1};
    \node (1) at (3,1) {2,3,4,5};
    \node (8) at (3,-1) {6};
\end{scope}

\begin{scope}[every node/.style={black, circle}]
    \node (k+5) at (1, 3) {d=7};
    \node (k+1) at (1, 1) {d($w$)=?};
    \node (k) at (1, -1) {d=3};
\end{scope}

\begin{scope}[>={Stealth[black]},
              every node/.style={fill=white,circle},
              every edge/.style={draw=black,thick}]
    \path [->] (1) edge (8);
    \path [->] (0) edge (1);
\end{scope}
\end{tikzpicture}
\caption{after contraction}
\end{subfigure}
\caption{Part of a graph before and after contraction}
\label{fig:contract}
\end{figure}

After the contraction of an abundant cycle, the algorithm lets $d(s) = |N^c|$, and lets $d(v)$ be the minimum length of a path $v$ to $t$ in $G^c(x)$.  If there is no path from node $v$ to node $t$, then $d(v) = d(s) + 1$.   Then the current arc of each node is set to the first arc of its arc list, and the current node of each merged node is set to the first node in its node list. 
The time to update the distance labels is $O(m)$ per contraction and $O(nm)$ in total. 

Even though the distance label of a node $v$ may decrease following a contraction, the total number of increases in $d(v)$ is still bounded above by $n+1$.      

\begin{lem}  
\label{lem:labelscontract} 
For each node $v$, the total number of increases in $d(v)$ over all iterations of the LMES with contraction is at most $n+1$.
\end{lem}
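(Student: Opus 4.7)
The plan is to show that, although contractions can cause distance labels to drop, each drop is tightly controlled by the size of the contracted cycle, and a simple telescoping argument then yields the bound of $n+1$ on the total number of increases for every node $v$.

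The key quantitative step — and the one I expect to be the main obstacle — is to prove that when an abundant cycle $W$ is contracted into a merged node $w$, the recomputed label of any surviving node $v \notin W$ satisfies $d^c(v) \ge d(v) - (|W|-1)$. To see this I would expand a shortest $v$-to-$t$ path $P^c$ in $G^c(x)$ back into a path in the original residual graph. If $P^c$ avoids $w$, the same path exists in $G(x)$ and we are done. Otherwise $P^c$ passes through $w$ exactly once (since shortest paths are simple), entering via an arc corresponding to $(a,u_{\text{in}})$ in $G(x)$ and leaving via $(u_{\text{out}},b)$ with $u_{\text{in}},u_{\text{out}}\in W$; because $W$ is a directed cycle of abundant (hence positive-residual) arcs, I can splice in a walk from $u_{\text{in}}$ to $u_{\text{out}}$ around $W$ of length at most $|W|-1$, producing a $v$-to-$t$ path in $G(x)$ of length at most $d^c(v) + (|W|-1)$. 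The degenerate case in which $G^c(x)$ has no $v$-to-$t$ path requires a separate check: then no such path exists in $G(x)$ either, so $d(v) = n+1$ while $d^c(v) = n^c + 1 = n+1 - (|W|-1)$, and the inequality holds at equality.

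With that estimate in hand, I would invoke the standing validity bound $d(v) \le n^c + 1$, which holds throughout the Enhanced LMES because $d(s) = n^c$ and the introduced return arcs $(v,s)$ keep $r_{vs}>0$, so validity forces $d(v) \le d(s) + 1$. The rest is telescoping bookkeeping over the \emph{lifetime} of $v$, from its creation (at initialization, or as a newly merged node) to its death (absorption into a larger cycle, or termination of the algorithm). Let $R$ count unit increases in $d(v)$ and let $D$ be the cumulative decrease, so $R - D = d_{\text{death}}(v) - d_{\text{birth}}(v)$. By the first step, $D \le \sum(|W_j|-1)$ summed over contractions during $v$'s lifetime that do not involve $v$, and since each such contraction decreases the node count by $|W_j|-1$, this sum equals $n^c_{\text{birth}} - n^c_{\text{death}}$. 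Combined with $d_{\text{birth}}(v) \ge 0$ and $d_{\text{death}}(v) \le n^c_{\text{death}} + 1$, I conclude
$$R \;\le\; D + d_{\text{death}}(v) \;\le\; \bigl(n^c_{\text{birth}} - n^c_{\text{death}}\bigr) + \bigl(n^c_{\text{death}} + 1\bigr) \;=\; n^c_{\text{birth}} + 1 \;\le\; n+1.$$
The telescoping cancellation of the $n^c$ terms is precisely what absorbs the loss from contractions into the savings in node count, and once the first step is made rigorous the remainder is routine.
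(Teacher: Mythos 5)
Your proposal is correct and follows essentially the same route as the paper: the paper proves the lemma by tracking the shifted label $d'(v) = d(v) - (n - n')$ (where $n'$ is the current node count), noting it rises by one per relabel, never falls under a contraction, and is capped at $d'(s)+1 = n+1$, which is exactly your telescoping identity $R \le (n^c_{\text{birth}} - n^c_{\text{death}}) + d_{\text{death}}(v) + 1 \le n+1$ in potential-function form. The only difference is that you spell out the key estimate $d^c(v) \ge d(v) - (|W|-1)$ via the path-splicing argument, which the paper asserts implicitly ("$d'(v)$ does not decrease following a contraction") after its pre-contraction relabeling to exact distances.
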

\begin{proof}
As the algorithm proceeds, let $n'$ denote the number of nodes in the (contracted) network.  For each node $v$, let $d'(v) = d(v) - (n - n')$.    Prior to any contractions in the network, $n' = n$, and $d'(v) = d(v)$.  At every iteration, $d'(s) = n' + (n - n') = n$.

Note that $d'(v)$ increases by 1 whenever $v$ is relabeled, and $d'(v)$ does not decrease following a contraction.   Moreover, $d'(v) \le d'(s)+1 = n+1$.  Thus the number of relabels of node $v$ is at most $n+1$.  
\end{proof}

\subsection{Large excesses following a contraction}

Suppose cycle $W$ is contracted, resulting in merged node $w$.   After the contraction,  $e(w)$ is the sum of the excesses of nodes of $W$.  After the contraction, $e(w)$ can be nearly as large as 
$|W| \times \Delta$.  The running time analysis for the LMES algorithm relied on excesses not exceeding $\Delta$.   We can circumvent that difficulty as follows.   We let $N(\Delta)$ denote the nodes (including merged nodes) at the beginning of the $\Delta$-scaling phase.   The potential functions $\Phi_1$ and $\Phi_2$ include summations over the nodes in $N(\Delta)$ at the beginning of the phase.  Suppose that a merged node $w$ is obtained during the phase by contracting a cycle $W$.   Rather than replacing the nodes of $W$ by $w$ in the summation for the potential functions, we continue to sum over the nodes of $N(\Delta)$. For each node $v \in W$, we define $e(v)$ to be $\frac{e(w)}{|W|}$.  And we define $d(v)$ to be $d(w)$.   Upon contraction of the cycle $W$, the potential functions $\Phi_1$ and $\Phi_2$ do not increase.  Moreover, $e(v)$ does not exceed $\Delta$ during the phase.

For further details, see Theorem \ref{th:main}.

%

\bigskip

\section{The contraction lemma and its implications}
\label{sec:contraction}

In this section, 
we state and prove the Contraction Lemma, which provides conditions that guarantee the existence of a bi-abundant arc  within $O(\log_k n)$ scaling phases.   The Contraction Lemma is a variation of a lemma proved in Orlin \cite{O13} as part of an  $O(nm)$ max flow algorithm.  The proof here is new.  It is  simpler than the proof in \cite{O13}.

\subsection{Small,  medium, and large arcs}

Let $G(\Delta) = (N(\Delta), A(\Delta))$ denote the (contracted) network at the beginning of the 
$\Delta$-scaling phase. Recall that an arc is called \textit{small},  \textit{medium}, or \textit{large} according as 
 $u_{ij} + u_{ji} <  \epsilon^5 \Delta$ or $\epsilon^5 \Delta \le u_{ij} + u_{ji} < 2M \Delta$ or    
$ u_{ij} + u_{ji} \ge 2M \Delta$. 
Let Small$(\Delta)$, Medium($\Delta$), and Large$(\Delta)$ denote the sets of small, medium, and large arcs of the
$\Delta$-scaling phase.

If $(i, j)$ is large at the $\Delta$-scaling phase, then $(i, j)$ must be abundant or anti-abundant. Let  Abundant$(\Delta)$ and Anti$(\Delta)$ denote the subset of arcs of Large($\Delta)$  that are abundant and anti-abundant arcs at the beginning of the $\Delta$-scaling phase.  The following two relationships follow directly from the definitions.

\begin{itemize}
\label{lem:small}
\item  $A(\Delta) = \text{Small}(\Delta) \cup \text{Medium}(\Delta) \cup \text{Large}(\Delta)$.
\item   $\text{Large}(\Delta) = \text{Abundant}(\Delta) \cup \text{Anti}(\Delta)$.
\end{itemize}

We say that there is an \textit{occurrence} of an arc $(i, j)$ (resp., node $v$) at the $\Delta$-scaling phase if $(i, j) \in A(\Delta)$ (resp., $v \in N(\Delta))$.    

Each arc in $A$ can remain medium for at most $O(\log_k n)$ scaling phases.  We express this fact in terms of the total number of occurrences of medium arcs.

\begin{lem}
\label{lem:mediumarcs}
The total number of occurrences of medium arcs over all scaling phases is 
$O(m \log_k n)$.  
\end{lem}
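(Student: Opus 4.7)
The plan is to give a direct counting argument: for each original arc, bound the number of scaling phases during which it can qualify as medium, then sum over all arcs.

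Fix an arc $(i,j)$ with bi-capacity $c = u_{ij} + u_{ji}$. By definition of \emph{medium}, $(i,j)$ is medium at the $\Delta$-scaling phase precisely when $\epsilon^5\Delta \le c < 2M\Delta$, i.e. when $\Delta$ lies in the half-open interval $(c/(2M),\, c/\epsilon^5]$. Since contraction does not change the bi-capacity of an arc (it only changes which supernodes it connects, or removes it entirely when both endpoints are merged), this characterization is unaffected by the contraction history. In particular, each original arc contributes at most one occurrence per scaling phase to $A(\Delta)$ and does so only when $\Delta$ lies in this fixed interval.

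Next, I would bound the length of this interval on a logarithmic scale. Using $\epsilon = k^{-Q}$ and $M = k^{2Q}$,
\[
\frac{c/\epsilon^5}{c/(2M)} \;=\; 2M\epsilon^{-5} \;=\; 2\,k^{7Q}.
\]
Between consecutive scaling phases the scaling parameter decreases by at least a factor of $k$ (it may only decrease by more when \textit{Get-Next-Scaling-Parameter} skips ahead, which can only shrink the count). Hence the number of distinct phase parameters falling into the interval $(c/(2M),\, c/\epsilon^5]$ is at most
\[
\log_k(2\,k^{7Q}) + 1 \;=\; 7Q + O(1).
\]
Since $Q = \lceil \log_k(4n)\rceil = O(\log_k n)$ by the properties listed in the section on $Q$, $\epsilon$, $M$, each arc can be medium in at most $O(\log_k n)$ phases.

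Summing this bound over all $m$ arcs of the original network yields a total of $O(m \log_k n)$ medium occurrences across all scaling phases, as claimed. There is no real obstacle here: the argument is essentially the definition of medium combined with the bounds on $\epsilon$ and $M$ in terms of $Q$, plus the observation that contractions do not alter bi-capacities and the $k$-geometric decrease of the scaling parameter.
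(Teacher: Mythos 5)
Your proof is correct and matches the paper's intended justification: the paper states this lemma without a formal proof, noting only that each arc can remain medium for at most $O(\log_k n)$ scaling phases, which is exactly the counting argument you give (the window $\epsilon^5\Delta \le u_{ij}+u_{ji} < 2M\Delta$ spans a multiplicative range of $2M\epsilon^{-5} = 2k^{7Q}$ in $\Delta$, and $\Delta$ drops by at least a factor of $k$ per phase, so each arc is medium in at most $7Q+O(1) = O(\log_k n)$ phases). Your added remarks about contractions preserving bi-capacities and about \textit{Get-Next-Scaling-Parameter} only shrinking the count are correct and make the argument complete.
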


\subsection{Node imbalances and the contraction lemma}

For any preflow $x$ defined at or after the $\Delta$-scaling phase, the 
{\it $\Delta$-imbalance} of node $v$ with respect to $x$ is:   

\begin{align*}
\text {IMB}(v, x, \Delta) = e_x(v) + \sum_{(j, v) \in Anti(\Delta)} r_{j v}(x)
 - \sum_{(v, j) \in Anti(\Delta)} r_{vj}(x).
\end{align*}

We let IMB$(v, \Delta)$ refer to IMB$(v,x, \Delta)$ where $x$ is the initial preflow of the 
$\Delta$-scaling phase.  

 If $y$ and $x$ are preflows, then $y - x$ denotes the difference. 
Our first result is a simplification of the formula for IMB$(v, y-x, \Delta)$ in the case that $x$ and $y$ are both preflows obtained at or after the $\Delta$-scaling phase.

\begin{lem}
\label{lem:imbeq}
Suppose that preflow $x$ is obtained by the algorithm at the $\Delta$-scaling phase.  Suppose  that $y$ is a preflow obtained at the $\Delta'$-scaling phase for $\Delta' \le \Delta$.  Suppose further that node $v$ is not incident to a bi-abundant arcs at either of these two scaling phases.   Then 

\begin{align}
\text {IMB}(v, y-x, \Delta) = e_{y-x}(v) 
 - \sum_{(v, j) \in Large(\Delta)} r_{vj}(y-x).
\label{eq:imb1}
\end{align}

\noindent In addition, if node $v$ is not incident to any medium arcs at the $\Delta$-scaling phase, then the following three equalities and inequality are valid.

\begin{align} 
e_{y-x}(v) 
= \sum_{(v, j) \in \textit{Small}(\Delta) \cup  \textit{Large}(\Delta)} r_{vj}(y-x).
\label{eq:imb3}
\end{align}

\begin{align}
\text {IMB}(v, y-x, \Delta) &=  \sum_{(v, j) \in Small(\Delta)} r_{vj}(y-x)  
\label{eq:imb4}
 \end{align}
 
\begin{align}
 \label{eq:imb2}
|\text {IMB}(v, y-x, \Delta)| &\le   
 \sum_{(v, j) \in Small(\Delta)} (u_{vj} + u_{jv}) \hspace{10 pt} \le 2n \epsilon^5 \Delta < .5 \epsilon^4 \Delta.
 \end{align}
\end{lem}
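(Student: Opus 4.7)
The key observation driving the whole lemma is that residual capacities are antisymmetric in the ``difference'' of two preflows: writing $r_{ij}(y-x)$ for the linear extension $r_{ij}(y)-r_{ij}(x) = (y_{ji}-x_{ji}) - (y_{ij}-x_{ij})$, the constant $u_{ij}$ cancels, so $r_{ij}(y-x) = -r_{ji}(y-x)$. The hypothesis that $v$ is incident to no bi-abundant arcs at either scaling phase is precisely what makes $\mathrm{Anti}(\Delta)$ and $\mathrm{Abundant}(\Delta)$ behave cleanly at $v$: each $(v,j)\in\mathrm{Large}(\Delta)$ lies in exactly one of $\mathrm{Abundant}(\Delta)$ and $\mathrm{Anti}(\Delta)$, and the map $(v,j)\mapsto(j,v)$ is a bijection from $\mathrm{Abundant}(\Delta)\cap A^+(v)$ onto $\mathrm{Anti}(\Delta)\cap A^-(v)$.

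For \eqref{eq:imb1}, I would expand the definition of $\mathrm{IMB}(v,y-x,\Delta)$ as written, apply the antisymmetry of $r_{(\cdot)}(y-x)$ to convert $\sum_{(j,v)\in\mathrm{Anti}(\Delta)} r_{jv}(y-x)$ into $-\sum_{(v,j)\in\mathrm{Abundant}(\Delta)} r_{vj}(y-x)$ via the bijection above, and then combine the resulting two sums into a single sum over $(v,j)\in\mathrm{Large}(\Delta)$. This uses the absence of bi-abundant arcs at $v$ to guarantee the disjoint partition $\mathrm{Large}(\Delta)\cap A^+(v) = (\mathrm{Abundant}(\Delta)\cap A^+(v)) \sqcup (\mathrm{Anti}(\Delta)\cap A^+(v))$.

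For \eqref{eq:imb3}, the starting point is the identity
\[
e_{y-x}(v) \;=\; \sum_{(v,j)\in A(\Delta)\cap A^+(v)} r_{vj}(y-x),
\]
which follows from the definition of excess, the assumption that $(i,j)\in A$ iff $(j,i)\in A$, and again the antisymmetry of $r_{(\cdot)}(y-x)$ (rewriting $(y-x)_{jv}-(y-x)_{vj}$ as $r_{vj}(y-x)$). Under the hypothesis that $v$ is incident to no medium arcs at the $\Delta$-scaling phase, $A(\Delta)\cap A^+(v)$ decomposes as $\mathrm{Small}(\Delta)\sqcup\mathrm{Large}(\Delta)$ restricted to $A^+(v)$, giving \eqref{eq:imb3}. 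Combining \eqref{eq:imb1} and \eqref{eq:imb3} by subtracting the large-arc sum cancels it and leaves only small-arc terms, yielding \eqref{eq:imb4}.

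For the quantitative bound \eqref{eq:imb2}, I would bound each term $|r_{vj}(y-x)|$ pointwise by $u_{vj}+u_{jv}$, since $r_{vj}(x), r_{vj}(y) \in [0, u_{vj}+u_{jv}]$ for any preflow. Smallness of $(v,j)$ gives $u_{vj}+u_{jv}<\epsilon^5\Delta$, and since $|A^+(v)|<n$ the sum is at most $2n\epsilon^5\Delta$ (with room to spare). Finally, the inequality $2n\epsilon^5\Delta < 0.5\epsilon^4\Delta$ reduces to $4n\epsilon<1$, which is guaranteed by property~$(1)$ of the parameters ($\epsilon\le 1/(4n)$). The only nontrivial step is justifying the bijection between $\mathrm{Anti}(\Delta)$ and $\mathrm{Abundant}(\Delta)$ arcs at $v$; everything else is bookkeeping with the linearity of residual capacity in the underlying flow, so I expect no serious obstacles.
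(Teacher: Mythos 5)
Your proposal is correct and follows essentially the same route as the paper's proof: antisymmetry $r_{ij}(y-x)=-r_{ji}(y-x)$ applied to the anti-abundant terms gives \eqref{eq:imb1}, the flow-change bookkeeping over arcs leaving $v$ gives \eqref{eq:imb3}, subtraction gives \eqref{eq:imb4}, and the pointwise bound by bi-capacities of small arcs gives \eqref{eq:imb2}. The only difference is that you spell out explicitly the bijection between abundant arcs out of $v$ and anti-abundant arcs into $v$ (valid precisely because $v$ has no bi-abundant arcs), which the paper leaves implicit.
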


\begin{proof}

For each arc $(i, j) \in A$, $r_{ij}(y-x) = -r_{ji}(y-x)$.   By applying this observation to arcs $(j, v) \in$ Anti$(\Delta)$, one obtains Equation \ref{eq:imb1}.

We next prove that Equation \ref{eq:imb3} is valid.  If there are no medium 
arcs incident to node $v$ at the $\Delta$-scaling phase, then $A(\Delta) = \textit{Small}(\Delta) \cup \textit{Large}(\Delta)$.  In sending $\delta$ 
units of flow in any arc $(i, j)$, $r_{ij}$ decreases by $\delta$, and 
$r_{ji}$ increases by $\delta$.   To keep track of changes in flow incident to node $v$, it suffices to 
consider the change in residual capacity of arcs leaving node $v$.  Therefore,  Equation \ref{eq:imb3} is valid. 
Equality \ref{eq:imb4} follows from equations  \ref{eq:imb1} and  \ref{eq:imb3}.
Inequality \ref{eq:imb2} follows directly from equation \ref{eq:imb4}.
\end{proof}

We say that node $i$ is \textit{$\alpha$-balanced} in the $\Delta$-scaling phase if 
$|\text{IMB}(v, \Delta)| \le \alpha$.  We say that $i$ is \textit{$\alpha$-imbalanced}  if 
$|\text{IMB}(v, \Delta)| > \alpha$. 


\begin{lem} 
\label{lem:contraction}
{\bf (Contraction Lemma)}.   Suppose that node $v$ is not incident to any medium arcs at the $\Delta$-scaling phase and that $v$ is $(\epsilon^4\Delta)$-imbalanced at the beginning of the $\Delta$-scaling phase.    Then node $v$ will become incident to a bi-abundant arc within $7Q$ scaling phases. 
\end{lem}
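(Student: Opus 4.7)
The plan is to argue by contradiction: suppose $v$ is not incident to any bi-abundant arc at any of the $7Q$ scaling phases starting with the $\Delta$-scaling phase. Let $x_0$ denote the preflow at the start of the $\Delta$-scaling phase and let $x_1$, $\Delta_1$ denote the preflow and scaling parameter at the start of the phase that is $7Q$ phases later, so that $\Delta_1 \le \Delta / k^{7Q} = \epsilon^7 \Delta$. The goal is to derive contradictory lower and upper bounds on $|\text{IMB}(v, x_1, \Delta)|$, both coming out to essentially $\tfrac{1}{2}\epsilon^4 \Delta$.

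For the lower bound I would invoke Inequality (\ref{eq:imb2}) of Lemma \ref{lem:imbeq}. Its hypotheses are satisfied because by assumption $v$ has no medium arcs at $\Delta$, and by the contradiction hypothesis $v$ has no bi-abundant arcs at either $\Delta$ or $\Delta_1$. The inequality then gives $|\text{IMB}(v, x_1 - x_0, \Delta)| \le 2n\epsilon^5 \Delta \le \tfrac{1}{2}\epsilon^4 \Delta$. Combining this with the $\epsilon^4 \Delta$-imbalance hypothesis $|\text{IMB}(v, x_0, \Delta)| > \epsilon^4 \Delta$ via the reverse triangle inequality yields $|\text{IMB}(v, x_1, \Delta)| > \tfrac{1}{2}\epsilon^4 \Delta$.

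For the matching upper bound I would exploit monotonicity of abundance (Corollary \ref{cor:abundance}). For each $(j,v) \in Anti(\Delta)$, the arc $(v,j)$ is abundant at $\Delta$ and therefore still abundant at $\Delta_1$; since $v$ has no bi-abundant arcs at $\Delta_1$, the reverse $(j,v)$ cannot be abundant at $\Delta_1$, so $r_{jv}(x_1) < M\Delta_1$, and symmetrically $r_{vj}(x_1) < M\Delta_1$ for each $(v,j) \in Anti(\Delta)$. Assuming WLOG that $\text{IMB}(v, x_1, \Delta) > 0$ (the negative case is identical after swapping the two sums), I would drop the non-positive term $-\sum r_{vj}(x_1)$ to obtain
$$\text{IMB}(v, x_1, \Delta) \le e_{x_1}(v) + \sum_{(j,v) \in Anti(\Delta)} r_{jv}(x_1) \le \Delta_1 + nM\Delta_1 \le 2nM\Delta_1.$$
Plugging in $M = \epsilon^{-2}$, $n \le 1/(4\epsilon)$, and $\Delta_1 \le \epsilon^7 \Delta$ gives $2nM\Delta_1 \le \tfrac{1}{2}\epsilon^4 \Delta$, contradicting the lower bound.

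The main obstacle is that the two bounds meet at exactly $\tfrac{1}{2}\epsilon^4 \Delta$, so the argument has essentially no numerical slack. Two ingredients are what make the contradiction go through: the one-sided reduction that drops the non-negative outgoing anti-abundant sum (without this, an extra factor of two in the upper bound would prevent contradiction), and the precise choice of the constant $7Q$, calibrated so that $\Delta_1 \le \epsilon^7 \Delta$ makes $2nM\Delta_1$ match the lower bound exactly.
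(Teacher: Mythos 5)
Your proposal is correct and is essentially the paper's own argument, recast as a proof by contradiction: the paper directly shows (via Lemma \ref{lem:imbeq}) that the imbalance persists at more than $\tfrac{1}{2}\epsilon^4\Delta$ after $7Q$ phases and, since this exceeds $nM\Delta'$ with $\Delta'=\epsilon^7\Delta$, some anti-abundant arc must have residual capacity above $M\Delta'$ and hence become bi-abundant, which is exactly the contrapositive of your upper-bound step. Your explicit appeal to Corollary \ref{cor:abundance} to translate ``no bi-abundant arc'' into ``all anti-abundant residuals are below $M\Delta_1$'' is the same monotonicity fact the paper uses implicitly, so the two proofs coincide in substance.
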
 

\begin{proof}

Let $x$ denote the preflow at the beginning of the $\Delta$-scaling phase.  Thus 
$|\textit{IMB}(v, x, \Delta)| > \epsilon^4\Delta$.   Let $\Delta' =  \epsilon^7\Delta$ be the 
scaling parameter $7Q$ scaling phases subsequently.  
Let $y$ be the preflow at the beginning of the $\Delta'$-scaling phase.  Then

\begin{align*}
|\text {IMB}(v, y, \Delta)| & \ge  |\text {IMB}(v, x, \Delta)| - |\text {IMB}(v, y-x, \Delta)|   \\ 
&\ge  \epsilon^4\Delta - 2n  \epsilon^5\Delta \ge .5  \epsilon^4\Delta. 
 \end{align*}

\noindent The second inequality follows from Lemma \ref{lem:imbeq}.    Note that $e_y(v) < \Delta'$.   We now consider two cases according as $\text {IMB}(v, y, \Delta)$ is positive or negative.
If $\text {IMB}(v, y, \Delta) > 0$, then 

\begin{align*}
\text {IMB}(v, y, \Delta)  = e_y(v) + \sum_{(j, v) \in Anti(\Delta)} r_{j v}(y)
 - \sum_{(v, j) \in Anti(\Delta)} r_{vj}(y) >  .5  \epsilon^4\Delta > nM \Delta'.
\end{align*}

In this case, there is an arc $(j, v) \in Anti(\Delta)$ such that $r_{j v}(y) > M\Delta'$, in which case arc $ (j, v) $ is bi-abundant at the $\Delta'$-scaling phase.  
If, instead,  $\text {IMB}(v, y, \Delta) < 0$, then
 
 \begin{align*}
\text {IMB}(v, y, \Delta)  = e_y(v) + \sum_{(j, v) \in Anti(\Delta)} r_{j v}(y)
 - \sum_{(v, j) \in Anti(\Delta)} r_{vj}(y) <  -.5  \epsilon^4\Delta < -nM \Delta'.
\end{align*}

In this case, there is an arc $(v, j) \in Anti(\Delta)$ such that $r_{vj}(y) > M\Delta'$, in which case arc $ (v, j) $ is bi-abundant at the $\Delta'$-scaling phase.   This completes the proof.
\end{proof}

\bigskip

\section{Special, Violating, and Normal nodes.}


In this section, we review the three types of nodes: special nodes, violating nodes, and normal nodes.   We describe properties and invariants satisfied by these nodes.   We also provide the procedures \textit{Initialize}, \textit{Select}, and \textit{Push} for the Enhanced LMES Algorithm.

\subsection{Three flavors of nodes}  The Enhanced LMES Algorithm partitions the nodes of $N(\Delta) \backslash \{s, t\}$ into three different flavors (types).  For each node $v$, we also define the modified excess $\hat e(v, \Delta)$, which replaces $e(\,)$ in the procedures within the Enhanced LMES Algorithm.  \\




\noindent {\bf  Nodes that are special at the $\Delta$-scaling phase}. 

A node $v$ is a {\it special node} 
during the $\Delta$-scaling phase if it satisfies the following.  
\begin{enumerate} 
\item  Node $v$ is not incident to any medium arcs, and 
\item  $|\text{IMB}(v, \Delta)| \le  \epsilon^4 \Delta$.
\end{enumerate} 

For each special node $v \in N(\Delta)$, we define $\hat e(v, \Delta) = e(v) + 1.5 \epsilon^4 \Delta$.   The following are properties of nodes that are special at the $\Delta$-scaling phase.

\begin{itemize}
\item [S1.]   $\text{IMB}(v, \Delta ) \le \epsilon ^4 \Delta$.  (By definition.)
\item [S2.]   $v$ is not incident to any medium arcs.  (By definition.)
\item [S3.]    If an anti-abundant arc $a$ is incident to $v$, then for each preflow $x$ during the $\Delta$-scaling phase, $r_a(x) \equiv 0 \left( \kern -8pt \mod  \frac{\Delta}{k} \right)$.  (See Lemma \ref{lem:emod}.)
\item [S4.]   $e(v) \ge -1.5 \epsilon^4 \Delta$, and thus $ \hat e(v) \ge 0$.  (See Lemma \ref {lem:emod}.)
\item [S5.]    If $\hat e(v) > 3 \epsilon ^4$, then $\hat e(v) \ge \Delta /k$. Equivalently, if $\hat e(v) > 3 \epsilon ^4$, then $v$ will either have medium or large excess. (See Lemma \ref {lem:emod}.)
\end{itemize}
\bigskip


\noindent {\bf  Nodes that are normal at the $\Delta$-scaling phase}.  

We will ``require'' every non-special node to maintain an excess of at least $\epsilon \Delta $ during the $\Delta$-scaling phase.  This small amount of buffer is needed later when we add nodes to the flow-return forest.

 A node $v$ is called a {\it normal node} 
during the $\Delta$-scaling phase if $e(v) \ge \epsilon \Delta$.  


For each normal node $v \in N(\Delta)$, we define $\hat e(v, \Delta) = e(v) -  \epsilon \Delta$. 
The following are properties of nodes that are normal during the $\Delta$-scaling phase.

\begin{enumerate}
\item [N1.]   If node $v$ is normal during the $\Delta$-scaling phase, then $\hat e(v, \Delta) \ge 0$.  (By definition.)
\item [N2.]     If node $v$ is normal during the $\Delta$-scaling phase, then $\hat e(v, \Delta) \le \Delta + (k-1) \epsilon \Delta $.  (See Lemma \ref {lem:ehat}.)   
\item [N3.]  The total number of occurrences of normal nodes is 
$ O(m \log_k n)$.  (See Lemma \ref {lem:normal}.)
\end{enumerate}

\smallskip

\noindent {\bf  Nodes that are violating at the $\Delta$-scaling phase}.

A node $v$ is a {\it violating node} 
at the $\Delta$-scaling phase if $e(v) < \epsilon \Delta$ and $v$ is not special.
For each violating node $v \in N(\Delta)$, we define $\hat e(v, \Delta) = e(v) -  \epsilon \Delta$. 
A node is called \textit{newly violating} at the $\Delta$-scaling phase if it is violating at the $\Delta$-scaling phase but not violating at the previous phase.  

The following are properties of nodes that are violating at the $\Delta$-scaling phase.  Several of these properties mention a data structure called the flow-return forest, which is described in Section \ref{sec:FRF}. 

\begin{enumerate}

\item [V1.]   $ \hat e (v, \Delta ) < 0 $.  (By definition.)  
\item [V2.]       If $v$ is newly violating at the $\Delta$-scaling phase, then $v$ is added to the flow-return forest at that phase.   (Refer to the Procedure \textit{FRF-Add} in Section \ref{sec:FRF}.)
\item [V3.]      If $v$ is violating at the $\Delta$-scaling phase  but not newly violating, then $v$ is already in the flow-return forest at the beginning of the $\Delta$-scaling phase.  Node $v$ will be deleted from the flow-return forest at the iteration at which $v$ is non-violating and $v$ is a leaf of the flow-return forest. 
\item [V4.]      If $v$ is violating at the $\Delta$-scaling phase, then $v$ will become non-violating within $2Q$ scaling phases. (See Lemma \ref{lem:2Q}.)
\item [V5.]      If $v$ is violating, then $\hat e(v) \ge  - \epsilon  \Delta$.   (Corollary \ref{cor:2Q}.)
\item [V6.]     The total number of occurrences of newly violating nodes is $ O(m \log_k n)$.  (See Lemma \ref {lem:normal}.)

\end{enumerate}

\bigskip

We have defined $\hat e(v, \Delta)$ for special, violating and normal nodes.   In Section \ref{sec:FRF}, we will modify our definition of $\hat e(v, \Delta)$ for root nodes of the flow-return forest.


\subsection{A new flow invariant}
\label{subsec:newinvariant}

In order to ensure that Property S3 is satisfied for special nodes, we will require that an even stronger (more restrictive) invariant is satisfied.

\begin{inv}
If $r_{ij} <  r_{ji}$, then $r_{ij} \equiv 0 \left (\kern -8pt \mod \frac{\Delta}{k} \right)$.
\label{inv:2B}
\end{inv}

We will adjust our rules for pushing later in this section.  The modified rules ensure that Invariant \ref{inv:2B} is satisfied at all iterations.   
If $(i, j)$ is an anti-abundant arc, then $r_{ij} <  r_{ji}$.   By Invariant \ref{inv:2B},  $r_{ij} \equiv 0 \left (\kern -8pt \mod \frac{\Delta}{k} \right)$.  In addition, the following lemma is true.

%
%
%

\begin{lem}
\label{lem:emod}
Suppose that node $v$ is special throughout the $\Delta$-scaling phase.  
Then for each preflow $y$ during the phase,  
$0 \le \text{Mod} \left(\hat e_y(v), \frac{\Delta}{k} \right) \le 3 \epsilon^4 \Delta$.  
\end{lem}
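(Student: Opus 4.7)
The strategy is to compare $\hat e_y(v)$ with the $\Delta$-imbalance $\text{IMB}(v, y, \Delta)$ and exploit the fact that Invariant \ref{inv:2B} forces the anti-abundant residual capacities appearing in $\text{IMB}$ to be multiples of $\Delta/k$. The shift by $1.5\,\epsilon^4\Delta$ built into $\hat e(v,\Delta)$ is exactly what turns a narrow symmetric interval around $0$ into a one-sided interval $[0, 3\epsilon^4\Delta]$.

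First, let $x$ denote the preflow at the start of the $\Delta$-scaling phase, and use additivity of $\text{IMB}$ in its preflow argument,
$$\text{IMB}(v, y, \Delta) = \text{IMB}(v, x, \Delta) + \text{IMB}(v, y-x, \Delta).$$
By the definition of a special node, $|\text{IMB}(v, x, \Delta)| = |\text{IMB}(v, \Delta)| \le \epsilon^4\Delta$. Because $v$ is incident to no medium arcs throughout the phase, inequality \ref{eq:imb2} of Lemma \ref{lem:imbeq} applies and yields $|\text{IMB}(v, y-x, \Delta)| \le 2n\epsilon^5\Delta < 0.5\,\epsilon^4\Delta$. Adding the two bounds gives
$$|\text{IMB}(v, y, \Delta)| \le 1.5\,\epsilon^4\Delta.$$

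Next, I would expand
$$\text{IMB}(v, y, \Delta) = e_y(v) + \sum_{(j,v)\in \text{Anti}(\Delta)} r_{jv}(y) - \sum_{(v,j) \in \text{Anti}(\Delta)} r_{vj}(y)$$
and argue that each residual in both sums is a multiple of $\Delta/k$. For any $(v,j) \in \text{Anti}(\Delta)$, the ``small'' direction $r_{vj}(x)$ starts below $M\Delta$ while $r_{jv}(x) \ge M\Delta$; the bound of $<5n^2\Delta$ on the total flow pushed in a phase (to be proved for the Enhanced LMES in Subsection \ref{totalflowsection}) is strictly less than the gap $M\Delta - (M\Delta - 5n^2\Delta) = M\Delta = 16n^2\Delta$, so the ordering $r_{vj}(y) < r_{jv}(y)$ is preserved. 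Invariant \ref{inv:2B} then forces $r_{vj}(y) \equiv 0 \pmod{\Delta/k}$ for every arc in $\text{Anti}(\Delta)$, and the two sums are therefore each $\equiv 0 \pmod{\Delta/k}$. Consequently,
$$e_y(v) \equiv \text{IMB}(v, y, \Delta) \pmod{\Delta/k}.$$

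Adding $1.5\,\epsilon^4\Delta$ to both sides gives
$$\hat e_y(v) = e_y(v) + 1.5\,\epsilon^4\Delta \equiv \text{IMB}(v, y, \Delta) + 1.5\,\epsilon^4\Delta \pmod{\Delta/k},$$
and the right-hand side lies in $[0, 3\epsilon^4\Delta]$ by the bound from the first paragraph. Since $\epsilon \le 1/(4n)$ and $k\epsilon \le 1$ under the parameter choices of Section 6, we have $3\epsilon^4\Delta < \Delta/k$, so the window $[0, 3\epsilon^4\Delta]$ sits entirely inside $[0, \Delta/k)$ and is therefore the canonical representative of $\hat e_y(v) \bmod \Delta/k$. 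This gives the claim. The main obstacle I anticipate is the preservation-of-anti-abundance step — i.e., ensuring that $r_{vj}(y) < r_{jv}(y)$ really does hold for every $(v,j) \in \text{Anti}(\Delta)$ throughout the phase so that Invariant \ref{inv:2B} has bite; once that is granted (via the total-flow bound that will be established shortly), the rest is a clean modular arithmetic computation.
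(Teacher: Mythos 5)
Your overall route is the paper's: specialness bounds $|\text{IMB}(v,\Delta)|$ by $\epsilon^4\Delta$, Lemma \ref{lem:imbeq} bounds the drift over the phase by $2n\epsilon^5\Delta < .5\epsilon^4\Delta$, Invariant \ref{inv:2B} makes the anti-abundant residuals multiples of $\frac{\Delta}{k}$ so that the excess is congruent to the imbalance modulo $\frac{\Delta}{k}$, and the additive shift $1.5\epsilon^4\Delta$ turns the symmetric window into $[0,3\epsilon^4\Delta]\subset[0,\frac{\Delta}{k})$. The only organizational difference is that you apply the congruence at the current preflow $y$ (after bounding $\text{IMB}(v,y,\Delta)$ by additivity), whereas the paper applies it at the initial preflow $x$ --- where anti-abundance literally gives $r_{ij}<M\Delta\le r_{ji}$, so Invariant \ref{inv:2B} applies with no further argument --- and then pushes the drift $e_{x-y}(v)$ through Lemma \ref{lem:imbeq} modulo $\frac{\Delta}{k}$.

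However, the step you yourself flag as the main obstacle is not correctly justified. You argue that the ordering $r_{vj}(y)<r_{jv}(y)$ persists for every $(v,j)\in \text{Anti}(\Delta)$ because the per-phase flow bound $5n^2\Delta$ is ``strictly less than the gap $M\Delta-(M\Delta-5n^2\Delta)=M\Delta=16n^2\Delta$''; that identity is false ($M\Delta-(M\Delta-5n^2\Delta)=5n^2\Delta$), and, more importantly, the relevant gap is $r_{jv}(x)-r_{vj}(x)\ \ge\ M\Delta-r_{vj}(x)$, which can be as small as $\frac{\Delta}{k}$ when the anti-abundant residual sits just below $M\Delta$. So the total-flow bound alone does not rule out the two residuals crossing during the phase (for instance via line 07 of \textit{Enhanced-LMES-Push}, which equalizes them), and after a crossing Invariant \ref{inv:2B} constrains the opposite direction, so ``each residual in both sums is a multiple of $\Delta/k$ at $y$'' does not follow from what you wrote. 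To be fair, the paper's own proof relies on essentially the same persistence fact only implicitly (its final step, deducing a Mod-bound on $e_{x-y}(v)$ from Lemma \ref{lem:imbeq}, needs the change in large-arc residuals at $v$ to be $\equiv 0 \pmod{\frac{\Delta}{k}}$), so your instinct about where the difficulty lies is exactly right; but the fix you propose only yields what Corollary \ref{cor:abundance} already gives (abundant arcs retain positive, indeed abundant, residual capacity), not preservation of the strict ordering on anti-abundant arcs. That is the one genuine gap; the remaining modular bookkeeping and the window argument $3\epsilon^4\Delta<\frac{\Delta}{k}$ (using $\epsilon\le\frac{1}{4n}$ and $k\epsilon\le 1$, which indeed follows from $\epsilon=k^{-Q}$ with $Q\ge 1$) are fine.
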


\begin{proof}
Let $x$ denote the initial preflow of the phase, and let $y$ be any other preflow during the phase.  
%
%
Then 

\begin{enumerate}
\item  $\hat e_y(v) = e_y(v) + 1.5 \epsilon^4 \Delta$.
\item  $e_y(v) = e_x(v) - e_{x-y}(v)$.
\item $e_x(v) \equiv \text{IMB}(v, \Delta)  
	( \kern -8pt \mod \frac{\Delta}{k})$.
\item $  -\epsilon^4 \Delta \le \text{IMB}(v, \Delta)  \le \epsilon^4 \Delta$ (because 		$v$ is special). 
\end{enumerate}
\smallskip

Adding (1) to (4) and combining terms yields the following:
$$  \epsilon^4 \Delta  \le \text{Mod} \left(\hat e_y(v) + e_{x-y}(v) + .5 \epsilon^4 \Delta , \frac{\Delta}{k} \right) \le 3 \epsilon^4 \Delta.$$.

Lemma \ref{lem:imbeq} implies that 
$$ 0  \le \text{Mod} \left( e_{x-y}(v) + .5 \epsilon^4 \Delta , \frac{\Delta}{k} \right) \le \epsilon^4 \Delta.$$

Combining the above  inequalities completes the proof.
%
%
%
%
%
%
\end{proof}

%

Using the above notation, we observe the following, which we state as a lemma.

\begin{lem}
\label{lem:violatingnode}
Suppose that v is normal in the $\Delta$-scaling phase as well as in the previous scaling phase.   Then $\hat e(v, \Delta) \ge  (k-1) \epsilon \Delta $  at the beginning of the $\Delta$-scaling phase. 
\end{lem}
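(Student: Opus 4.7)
The plan is short because the lemma is essentially a bookkeeping fact about the definition of $\hat e(v,\cdot)$ across consecutive phases. Let $\Delta''$ denote the scaling parameter of the previous scaling phase. By the rule for updating the scaling parameter, $\Delta'' \ge k\Delta$; typically $\Delta'' = k\Delta$, but the procedure \textit{Get-Next-Scaling-Parameter} may skip phases, and this only makes $\Delta''$ larger, which helps.

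The key observation is that the preflow $x$ does not change between the end of the $\Delta''$-phase and the beginning of the $\Delta$-phase — only the scaling parameter is updated. Hence $e(v)$ at the beginning of the $\Delta$-phase equals $e(v)$ at the end of the $\Delta''$-phase. Since $v$ is normal throughout the $\Delta''$-phase, property~N1 gives $\hat e(v,\Delta'') = e(v) - \epsilon\Delta'' \ge 0$ during that phase, in particular at its end. Therefore at the beginning of the $\Delta$-phase we have
\[
e(v) \;\ge\; \epsilon \Delta'' \;\ge\; k\epsilon\Delta.
\]

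To finish, I would apply the definition of the modified excess for a normal node at the $\Delta$-phase, namely $\hat e(v,\Delta) = e(v) - \epsilon\Delta$, and substitute:
\[
\hat e(v,\Delta) \;=\; e(v) - \epsilon\Delta \;\ge\; k\epsilon\Delta - \epsilon\Delta \;=\; (k-1)\epsilon\Delta,
\]
which is exactly the claim. There is no real obstacle; the only mild care is to confirm that $\Delta'' \ge k\Delta$ for any legal transition produced by \textit{Get-Next-Scaling-Parameter}, and that $e(v)$ is unchanged by the mere act of moving from one phase to the next.
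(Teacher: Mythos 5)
Your proposal is correct and follows essentially the same argument as the paper: normality in the previous phase gives $e(v) \ge \epsilon\Delta'' \ge k\epsilon\Delta$ at the start of the $\Delta$-phase, and subtracting $\epsilon\Delta$ yields the bound. The only difference is that you explicitly handle the case where \textit{Get-Next-Scaling-Parameter} makes $\Delta''$ strictly larger than $k\Delta$, which the paper's proof tacitly assumes away; since that case only strengthens the inequality, both proofs go through.
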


\begin{proof}
At the end of the $k\Delta$-scaling phase and also at the beginning of the $\Delta$-scaling phase, $e(v) \ge \epsilon (k\Delta)$.   Therefore, at the beginning of the $\Delta$-scaling phase, $\hat e(v, \Delta) = e(v, \Delta) - \epsilon \Delta \ge (k-1) \epsilon \Delta$.
\end{proof}

\smallskip

\subsection{Initialization of the Enhanced LMES Algorithm} 
\label{subsec:initialization}

The Initialization procedure of the Enhanced LMES Algorithm finds an initial preflow satisfying the following.  

\begin{enumerate}  
\item  All arcs in $A^+(s)$ are saturated.
\item   $d(s) = n$; for $i \neq s$, $d(i) = 0$.
\item   For any arc $(i, j)$ with $r_{ij} < r_{ji}$, $r_{ij}\equiv 0  \left( \kern -8pt  \mod  \frac{\Delta}{k} \right)$. 
\item   For all $v \in N$, $\hat e(v) \ge 0$. 
\end{enumerate}

\bigskip

\noindent \textbf{\textit{Initialize-Enhanced-LMES}}\\
01.  \textbf{begin}\\
02.  \hspace{.15 in}   \textbf{for each } arc $(i, j) \in A$, $x_{ij} := 0$;\\
03.  \hspace{.15 in}  \textbf{for each } node $i \in  N$, $d(i) := 0$;\\
04. \hspace{.15 in} \textbf{for each } arc $(s, i) \in A^+(s)$  send $ r_{si}$ units of flow in $(s, i)$;\\
05.  \hspace{.15 in} $d(s) := n$;\\
06. \hspace{.15 in}  $\Delta := \max\{e(i) : i \in  N\}$;\\
07. \hspace{.15 in}    \textbf{for each } arc $(i, j)$ with $ r_{ij} < r_{ji}$, send $\text{Mod} \kern -3pt \left(r_{ij},  
\frac{\Delta}{k} \right)$ units of flow in $(i, j)$;\\
08.  \hspace{.15 in}   \textbf{for each } node $i \in N \backslash \{s,t \}$ with $e(i) < \epsilon \Delta$, 
send $\frac{ \Delta}{k}$ units of flow in arc $(t, i)$;\\
09.   \textbf{end}  \\

Steps 07 and 08 guarantee that invariant 2 is satisfied.
Step 08 guarantees that each node $v \neq s \text{ or } t$ is normal and $\hat e(v) \ge 0$.  Step $08$ accomplishes this by permitting flows in the arcs directed out of node $t$.

\bigskip


\subsection{Large and medium excess nodes, and a new push procedure}
\label{subsec:newpush}

The Enhanced LMES Algorithm bases the definition of large and medium excess nodes using $\hat e$ rather than $e$. 
Node $v$ has {\it large excess} at the $\Delta$-scaling phase if 
$\hat e(v, \Delta) \ge \frac{\Delta}{2}$.  
Node $v$ has  {\it medium excess} at the $\Delta$-scaling phase if 
$\frac{\Delta}{k} \le  \hat e(v, \Delta) < \frac{\Delta}{2}$.

During the $\Delta$-scaling phase, LargeSet and MediumSet are the sets of large and medium excess nodes.  
To identify MediumSet and LargeSet, one needs to evaluate $\hat e$, which requires that one needs to know whether a node is special, which requires that one can identify when arcs are anti-abundant.  We discuss efficient implementation of these terms in Subsection \ref{sec:biabundant} in the appendix.

The rule for selecting active nodes is the same for the Enhanced LMES Algorithm as for the LMES Algorithm with the exception that LargeSet and MediumSet are defined in terms of $\hat e$ rather than $e$.\\

\noindent  \textbf{\textit{Enhanced-LMES-Select-Node}} \\ 
\indent  Applicability:  there is a node with medium or large excess.\\
\indent  Action: if LargeSet $\neq \emptyset$, then let 
$i = \text{argmin} \{d(j) : j \in \text{LargeSet} \}$; \\
\indent\indent   else,  $i =$ argmax $\{d(j) : j \in$ MediumSet\}.\\

As before, the phase ends when there are no medium or large excess nodes.

The procedure for pushing flow in an arc is given below.   It ensures that  Invariant \ref{inv:2B} is always satisfied.  The procedure is called when node $i$ is selected for pushing and node $i$ is not  a node of 
the flow-return forest.  If $i$ is a node of the flow-return forest, then the procedure \textit{FRF-Push}$(i)$ is called instead, as described in Section \ref{sec:FRF}. \\

\noindent \textbf{\textit{Enhanced-LMES-Push$(i, j, \Delta)$}}\\
Applicability: node $i$ is active, arc $(i,j)$ is admissible, and node $i$ is not a node of the flow-return forest.\\
Action:   $\delta$ units of flow is sent in arc $(i, j)$ so as to satisfy Invariant  \ref{inv:2B}. \\  

\noindent 01.  \textbf{begin}\\
02.	\hspace{.15 in}  \textbf{if } $\hat e(i) \ge \frac{\Delta}{2}$,   \textbf{then } $D := \frac{\Delta}{2}$;  \\
03.	\hspace{.30 in} \textbf{else} 
	$D := \hat e(i) - \text{Mod} \left(\hat e(i), \frac{\Delta}{k}\right)$; \\
04.	\hspace{.15 in}  \textbf{if } $r_{ij} < D$,  \textbf{then } $\delta := r_{ij}$;\\
05.	\hspace{.15 in}  \textbf{else if } $r_{ij} - r_{ji} \ge 2D$,  \textbf{then } $\delta := D$;\\
06.	\hspace{.15 in}  \textbf{else if } $r_{ij} < r_{ji}$,  \textbf{then } $\delta := D$;\\
07.	\hspace{.15 in}  \textbf{else if } $ 0 \le  r_{ij} - r_{ji} < 2D$,  \textbf{then } $\delta := \frac{(r_{ij} - r_{ji})}{2}$;\\
08.	 \hspace{.15 in} \textbf{else if }  $r_{ij} = r_{ji}$,  \textbf{then } $\delta := D - \text{Mod}\left(r_{ij}, \frac{\Delta}{k} \right)$ \\
09.	 \hspace{.15 in}  send $\delta$ units of flow in arc $(i, j)$;  \\
10.  \textbf{end}\\

In Steps 02 and 03, $D$ is chosen to be the largest integral multiple of $\frac{\Delta}{k}$ that is at most $\min \{\hat e(i), \frac{\Delta}{2}\}$.

Suppose that Invariant \ref{inv:2B} is true prior to running \textit{Enhanced-LMES-Push$(i, j, \Delta)$}. We claim that it will remain true subsequent to the push.   
Let $r'_{ij}$ denote the residual capacity before the push.  Let $r_{ij}$ denote the residual capacity of $(i, j)$ after the push.   
\begin{enumerate}
\item  If $\delta$ is chosen as in Step 04, then $r_{ij} = 0$.
\item   If $\delta$ is chosen as in Step 05, then 
$r'_{ji} < r_{ji} < r_{ij} < r'_{ij}$.  In this case,  $r'_{ji}  \equiv 0 \left(\kern -8pt \mod \frac{\Delta}{k} \right)$, and
$r_{ji}  \equiv 0 \left(\kern -8pt \mod \frac{\Delta}{k} \right)$.
\item  If $\delta$ is chosen as in Step 06, then 
$r'_{ij}  \equiv 0 \left(\kern -8pt \mod \frac{\Delta}{k} \right)$, and
$r_{ij}  \equiv 0 \left(\kern -8pt \mod \frac{\Delta}{k} \right)$.
\item If $\delta$ is chosen as in Step 07, then $r_{ij} = r_{ji}$.
\item  If $\delta$ is chosen as in Step 08, then 
$r_{ij} < r'_{ij} = r'_{ji} < r_{ji}$  and $r_{ij}  \equiv 0 \left(\kern -8pt \mod \frac{\Delta}{k} \right)$.
\end{enumerate}


A  push of $\delta$ units is called \textit{large} at the $\Delta$-scaling phase if $\delta = 
\frac{\Delta}{2}$.   The push is called \textit{medium} if it is not large, and $\delta = D$; that is, $\delta$ is selected in line 05 or line 06.   In this case,  
$\delta$ is chosen the largest integral multiple of $\frac{\Delta}{k}$ that is at most $\hat e(i)$.
Subsequent to the push, $\hat e(i) < \frac{\Delta}{k}$.

If $\delta$ is chosen in line 04 or 07 or 08, then the push is neither large nor medium.  For fixed arc $(i, j)$ and for fixed value of $d(i)$, each of these choices for $\delta$  can occur at most once.   Since each node is relabeled at most $n$ times, the following lemma is true.      

\begin{lem}
\label{lem:notmedium}
The number of  pushes in the Enhanced LMES Algorithm that are neither large nor medium is $O(nm)$. 
\end{lem}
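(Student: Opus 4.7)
The plan is to classify pushes according to which line of \textit{Enhanced-LMES-Push} chose the value $\delta$. By the definitions immediately preceding the lemma, a push is large exactly when $\delta = D = \Delta/2$ and is medium exactly when $\delta = D < \Delta/2$; in both cases $\delta$ is set in Step 05 or Step 06. Therefore a push that is neither large nor medium must have had $\delta$ assigned in Step 04, 07, or 08. I will bound each of these three contributions separately and then sum.

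Step 04 is only invoked when $r_{ij} < D$ and results in $\delta = r_{ij}$, so the push saturates $(i,j)$. By Lemma \ref{lem:useful}(3), the total number of saturating pushes over the entire algorithm is less than $nm$, so this contribution is $O(nm)$.

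For Steps 07 and 08, I would argue that for each fixed arc $(i,j)$ and each fixed value of $d(i)$, each of these steps fires at most once. Immediately after a Step 07 push on $(i,j)$ the residual capacities satisfy $r_{ij} = r_{ji}$, and immediately after a Step 08 push they satisfy $r_{ij} < r_{ji}$; in either case a second firing of the same step requires that $r_{ij}$ grow relative to $r_{ji}$, which can only happen through a push in the reverse arc $(j,i)$. Such a reverse push requires $(j,i)$ to be admissible, i.e.\ $d(j) = d(i)+1$, and the subsequent push in $(i,j)$ requires $d(i) = d(j)+1$. Thus $d(i)$ must strictly increase between any two Step 07 pushes on $(i,j)$, and similarly between any two Step 08 pushes on $(i,j)$.

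Since distance labels are always in $[0,n+1]$ and, by Lemma \ref{lem:useful}(2) together with Lemma \ref{lem:labelscontract} (which covers the regime with contractions), $d(i)$ increases at most $n+1$ times, it takes $O(n)$ distinct values per node. Hence each arc $(i,j)$ contributes at most $O(n)$ Step 07 pushes and at most $O(n)$ Step 08 pushes, so summing over all $m$ arcs gives $O(nm)$ pushes total from Steps 07 and 08. Combining with the $O(nm)$ bound for Step 04 yields the claim. The only subtlety, which I would verify explicitly in the write-up, is the reverse-arc argument above: one has to check that the only way to restore $r_{ij} \ge r_{ji}$ (for Step 07) or $r_{ij} = r_{ji}$ (for Step 08) is by sending flow along $(j,i)$, relying on the fact that pushes are the only source of residual-capacity changes and that admissibility compels a relabel of $i$ before the next push in $(i,j)$.
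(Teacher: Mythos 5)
Your proposal is correct and follows essentially the same route as the paper: the paper likewise classifies the pushes that are neither large nor medium as those whose $\delta$ is chosen in line 04, 07, or 08, and asserts that for a fixed arc $(i,j)$ and a fixed value of $d(i)$ each such choice occurs at most once, which together with the $O(n)$ bound on relabels per node gives $O(nm)$. Your handling of line 04 via the saturating-push count of Lemma \ref{lem:useful}(3) and your explicit reverse-arc/admissibility justification for lines 07 and 08 merely spell out the reasoning the paper leaves implicit.
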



\section{The flow-return forest}
\label{sec:FRF}


When a node $v$ becomes newly violating, then $\hat e(v, \Delta)$ becomes negative, and node $v$ is added to a new data structure that we call the {\it flow-return forest} (FRF).  Within $2Q$ additional scaling phases, flow is sent to node $v$, after which it is no longer violating.  In this section, we describe the FRF.  But first, we bound the number of occurrences of nodes that become newly violating.

\begin{lem}  
\label{lem:newlyviolating}
The total number of occurrences of newly violating nodes is $O(m)$.
\end{lem}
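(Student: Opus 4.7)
The plan is to classify each newly violating occurrence by the status of the node $v$ at the previous scaling phase, and to charge each event to a distinct, globally bounded resource (either an arc transition or a contraction).

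First I would show that if $v$ is newly violating at phase $\Delta$, then at the previous scaling phase $\Delta'$ the node $v$ must have been \emph{special} rather than \emph{normal}. Suppose for contradiction that $v$ were normal at $\Delta'$. Then $e(v) \ge \epsilon \Delta'$ at the start of $\Delta'$, and inspecting each of the cases inside \textit{Enhanced-LMES-Push} (Subsection~\ref{subsec:newpush}) shows that $\delta \le \hat e(v,\Delta')$ in every branch, so $\hat e(v,\Delta') \ge 0$ is preserved by LMES pushes. The FRF operations preserve this lower bound as well: \textit{FRF-Pull} draws only from the reserve of a root (property P6), and \textit{FRF-Push} from an interior node also uses amounts accounted for by the design. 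Thus $e(v) \ge \epsilon\Delta'$ throughout phase $\Delta'$, and since $\Delta' \ge k\Delta$ we obtain $e(v) \ge \epsilon\Delta' > \epsilon\Delta$ at the start of phase $\Delta$, contradicting that $v$ is violating at $\Delta$.

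Next, since $v$ was special at $\Delta'$ but not at $\Delta$, at least one of the defining conditions must have been broken between the two phases. Either
\begin{itemize}
\item[(a)] some arc incident to $v$ is medium at $\Delta$ but was not medium at $\Delta'$, or
\item[(b)] no arc incident to $v$ is medium at $\Delta$, yet $|\mathrm{IMB}(v,\Delta)| > \epsilon^4 \Delta$.
\end{itemize}
Each arc $(i,j)$ can make the small-to-medium transition at most once during the algorithm, since $\Delta$ is monotonically non-increasing. A single such transition can make at most two endpoints eligible for a type-(a) event, so the total number of type-(a) newly violating occurrences over the entire execution is at most $2m$.

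Finally, for type-(b) events I would apply the Contraction Lemma (Lemma~\ref{lem:contraction}): within $7Q = O(\log_k n)$ further scaling phases, $v$ becomes incident to a bi-abundant arc, which the algorithm contracts (possibly after a brief delay due to the FRF exception). Each such contraction strictly reduces the number of node instances in the (contracted) graph, so the total number of contractions across the algorithm is at most $n-1$. Associating each type-(b) event with the contraction that follows within $7Q$ phases (at most one per endpoint of the contracted bi-abundant arc) yields an $O(n)$ bound on type-(b) events. Adding the two bounds gives $O(m) + O(n) = O(m)$ newly violating occurrences in total. The main obstacle I expect is the charging in the type-(b) analysis: I need to argue carefully that a single node cannot accumulate many type-(b) events inside one $7Q$-window without an intervening contraction, which requires appealing to the monotonicity of abundant arcs (once abundant, always abundant) together with the fact that every bi-abundant arc is eventually contracted.
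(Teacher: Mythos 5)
Your decomposition of the events at a node that stops being special --- (a) an incident arc becomes newly medium, (b) the imbalance exceeds $\epsilon^4\Delta$ --- and your $2m$ bound for case (a) match the paper. The gap is in case (b), and it is exactly the step you flag at the end. The paper does not charge case-(b) events to contractions. Instead it shows, via Lemma \ref{lem:imbeq}, that once $|\mathrm{IMB}(v,\Delta)| > \epsilon^4\Delta$ holds with no medium arc incident to $v$, the imbalance measured with respect to $\Delta$ changes by less than $.5\epsilon^4\Delta$ over the remainder of the run, so at any later phase it still exceeds $.5\epsilon^4\Delta \ge 2\epsilon^4\Delta'$; hence $v$ can never be special again, and case (b) occurs at most once per node, giving a bound of $n$ outright. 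Your charging scheme needs precisely this persistence fact to rule out a node cycling special $\to$ violating $\to$ normal $\to$ special $\to$ violating with no intervening contraction, so the contraction detour does not stand on its own and, once you prove persistence, is unnecessary. Moreover, two of its premises are not available: Lemma \ref{lem:contraction} only guarantees that $v$ becomes \emph{incident} to a bi-abundant arc within $7Q$ phases, not that a contraction actually happens --- the algorithm explicitly defers contracting a bi-abundant arc whose endpoints are both in the flow-return forest, and it may terminate before any such contraction occurs, so ``every bi-abundant arc is eventually contracted'' is not established anywhere in the paper and cannot anchor the charging.

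Two smaller points. You omit one source of newly violating nodes that the paper counts separately: a merged node created by contracting an abundant cycle can itself be newly violating, and for such a node the claim ``$v$ was special at the previous phase'' is vacuous since $v$ did not exist then; this adds fewer than $n$ occurrences and is easy to patch. Your opening argument that a normal node cannot become violating is consistent with the paper (which builds this into the definition of violating nodes and into the push rules keeping $\hat e \ge 0$), so that part is fine, but it is not where the difficulty lies.
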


\begin{proof}   A node $v$ can become violating if $v$ is a merged node that is newly created by contracting an abundant cycle, or $v$ stops being special.  The number of occurrences of newly merged nodes is less than $n$.   We now consider  a violating node $v$ that was special at the previous scaling phase.   Either  (1) an arc incident to $v$ becomes newly medium, or (2) $ |\text{IMB}(v, \Delta)| > \epsilon^4 \Delta$.    There are at most $2m$ nodes that can become newly violating because of newly medium arcs.   (Each arc becomes medium at most once.  Each newly medium arc might cause both endpoints to become non-special.)     

Suppose instead that $v$ becomes newly violating at the $\Delta$-scaling phase because of (2) and assume that $v$ is not incident to any medium arcs.  Then 
$|\text{IMB}(v, \Delta)| > \epsilon^4 \Delta$.  We claim that $v$ cannot become special again.  To see why, let  $\Delta'$ be a scaling parameter at some phase after the $\Delta$-scaling phase.  Let $y$ denote the preflow at the beginning of the $\Delta'$-scaling phase.
  By Lemma \ref{lem:imbeq}, 
$|\text{IMB}(v, y, \Delta) - \text{IMB}(v, \Delta)| < .5\epsilon^4 \Delta$.   It follows that   
$\text{IMB}(v, y, \Delta) > .5 \epsilon^4 \Delta \ge 2 \epsilon^4 \Delta'$, and node $v$ is not special at the $\Delta'$-scaling phase.   

Thus, the total number of times that (2) can occur without (1) is less than $n$.   This completes the proof.    
 \end{proof} 

\subsection{Nodes and arcs of the FRF}

Let $F$ denote the flow-return forest.  For each node $v \in F$, let $Root(v, F)$ denote the root of node $v$.      We let \textit{Roots$(F)$} denote the set of root nodes of $F$. 

An arc $(i, j)$ is called \textit{FRF-eligible} if  $(i, j)$ is abundant and  $d(j) \le  d(i) +1$.  We require that every arc of the flow-return forest is FRF-eligible.  We also require that for each node $v \in F$, the path from $Root(v, F)$  to node $v$ is an a directed path.   Since each arc of $F$ is FRF-eligible, the path from $Root(v, F)$  to node $v$ is abundant, and the reversal of the path satisfies the distance validity conditions. 
 
In addition, every leaf node of $F$ is a violating node.   When a leaf node of $F$ becomes non-violating, it is deleted from $F$.

%

For each node $i \in F$, let $Desc(i, F)$ denote the descendants of $i$ in $F$. 

\bigskip

\subsection{NeededFlow( ) and Reserve( )}


The algorithm maintains an array called $NeededFlow(\,)$ and another array called $Reserve(\,)$.   When a newly violating node $v$ is added to $F$ at the $\Delta$-scaling phase,  $NeededFlow(v)$ is set to $ \frac{\epsilon^2\Delta}{k}$.  For each non-violating node $i \in  F$,  $NeededFlow(i) = 0$.   After $2Q$ additional scaling phases, the scaling parameter is $ \Delta' =  \epsilon^2 \Delta $.   If node $v$ is still violating at the $\Delta'$-scaling phase, then  $\frac{\Delta'}{k} $ units of flow are  sent on the path in $F$ from $Root(v, F)$ to node $v$.  Subsequently, $v$ is non-violating.

In order to ensure that $Root(v, F)$ has sufficient excess, the algorithm maintains an array $Reserve(\,)$ that satisfies the following two conditions.  For each node  $i \in Roots(F)$:

\begin{align}
\label{reserve}
Reserve(i) = \sum_{v \in Desc(i, F)} NeededFlow(v).\\
\text{If } i \in Roots(F) \text{, then } e(i) \ge Reserve(i) + \epsilon \Delta. 
\label{rootehat} 
\end{align}

\smallskip

Invariant \ref{reserve} is easily maintained as follows:  if node $v$ is added to $F$ at the $\Delta$-scaling phase then, NeededFlow$(v) :=  \frac{\epsilon^2\Delta}{k}$, and $Reserve(Root(v, F))$ is incremented by $ \frac{\epsilon^2\Delta}{k}$.  Immediately prior to deleting a node $v$ from $F$, $Reserve(Root(v, F))$ is decremented by $NeededFlow(v)$, after which $NeededFlow(v)$ is set to 0. 

We modify our definition of of $\hat e$ for nodes of $Roots(F)$ as follows:   $\text{If } i \in Roots(F) \text{, then } \hat e(i) = e(i) - Reserve(i) - \epsilon \Delta$.   Then \ref{rootehat} is equivalent to:

\begin{align}
\text{If } i \in Roots(F) \text{, then } \hat e(i)  \ge 0.  
\end{align}

We still consider all root nodes to be normal.

 For every newly violating node $v$ for which $i = Root(v, F)$,  $Reserve(i)$ is incremented by 
$\frac{\epsilon^2 \Delta} {k}$, which causes $\hat e(i)$ to be decremented by the same amount.  The maximum increase in reserve for node $i$ (and corresponding decrease of  $\hat e(i)$) at the beginning of the $\Delta$-scaling phase is $\frac{n\epsilon^2 \Delta} {k}$, which is less than $\epsilon \Delta$.   
By  Lemma \ref{lem:violatingnode}, $\hat e(i) \ge (k-1) \epsilon \Delta$ at the 
beginning of the scaling phase.   Therefore, $\hat e(i) \ge (k-2)\epsilon \Delta$ after all  additions to $F$ at the $\Delta$-scaling phase.   

In the next subsection, we prove that it is always possible to add a newly violating node to $F$ by appending a path of FRF-eligible arcs to $F$. 

\smallskip

\subsection{Adding nodes to the FRF}
For each newly violating node $v$ in the $\Delta$-scaling phase, the algorithm determines an FRF-eligible path $W$ from $w$ to $v$ where $w = t$ or $w$ is normal.   Then $F$ is replaced by $F \cup W$.
The following lemma guarantees  that such a path $W$ exists.

\begin{lem}  
\label{lem:FRFeligible}
\textbf{(The FRF-eligible Path Lemma.)}  Suppose that $v$ is a newly violating node at the start of the $\Delta$-scaling phase. Then there is a directed path  of FRF-eligible arcs from some node $w$ to node $v$, where  $w = t$ or else $w$ is a normal node.
\end{lem}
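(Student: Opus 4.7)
The plan is to argue by contradiction. Suppose no such FRF-eligible path exists, and let
$S := \{u : \text{there is a directed path of FRF-eligible arcs from } u \text{ to } v\} \cup \{v\}$,
so that by hypothesis $S$ contains neither $t$ nor any normal node.

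\emph{Step 1 (only special nodes in $S \setminus \{v\}$).} I will show that every node of $S \setminus \{v\}$ is special. Suppose $u \in S \setminus \{v\}$ is violating. By properties V2--V3, either $u$ is already in the flow-return forest $F$, or $u$ is newly violating; in the latter case the argument proceeds by strong induction on the order in which newly violating nodes are processed in this scaling phase, so that an FRF-eligible path from a normal node (or $t$) into $u$ is already known. In either case, $\text{Root}(u, F)$ is normal and is joined to $u$ by a directed FRF-eligible path inside $F$; concatenating this with the FRF-eligible path from $u$ to $v$ places a normal node in $S$, contradicting our hypothesis. An analogous path-chasing argument (following the path from any hypothetical $s \in S$ toward $v$) rules out $s \in S$.

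\emph{Step 2 (cut structure).} For any arc $(i, u)$ with $i \notin S$ and $u \in S$, the arc cannot be FRF-eligible (otherwise $i \in S$), so either $(i, u)$ is not abundant or $d(u) \ge d(i) + 2$; in the second case Invariant \ref{inv:validity} applied to the reverse arc $(u, i)$ forces $r_{ui} = 0$. By the dual argument, every anti-abundant arc $(u, j)$ leaving $S$ (so $(j, u)$ is abundant with $j \notin S$) satisfies $r_{uj} = 0$, since otherwise $(j, u)$ would be FRF-eligible and $j$ would enter $S$.

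\emph{Step 3 (contradiction).} Summing $\text{IMB}(u, \Delta)$ over $u \in S$, the anti-abundant arcs with both endpoints inside $S$ cancel, the exiting ones contribute zero by Step 2, and we obtain
\[
\sum_{u \in S} \text{IMB}(u, \Delta) \;=\; \sum_{u \in S} e(u) \;+\; \sum_{\substack{(j, u) \in \text{Anti}(\Delta)\\ j \notin S,\; u \in S}} r_{ju}.
\]
Each special $u \in S$ satisfies $|\text{IMB}(u, \Delta)| \le \epsilon^4 \Delta$ by S1 and has only small or large incident arcs by S2, so edges entering a special $u$ are small (bi-capacity $< \epsilon^5 \Delta$), abundant-entering (contributing $-u_{ui}$ to net inflow by Step 2), or anti-abundant-entering (with residual capacity $< M \Delta$, appearing in the second sum above). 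Comparing this identity with the flow-conservation identity $\sum_{u \in S} e(u) = \sum_{\{i, u\} \in \mathrm{cut}} (x_{iu} - x_{ui})$ produces a quantitative contradiction: the IMB side is $O(n \epsilon^4 \Delta)$, yet the cut side is forced to be strictly larger in absolute value by the abundant entering arcs (which account for $-u_{ui}$ apiece). The hard part will be the bookkeeping in Step 3, especially handling $v$ itself: since $v$ is newly violating, S2 does not apply, and the bound on $\text{IMB}(v, \Delta)$ must be obtained from the fact that $v$ was special in the previous scaling phase via Lemma \ref{lem:imbeq}.
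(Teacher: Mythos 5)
Your setup coincides with the paper's: define the reachability set $S$ of nodes with FRF-eligible paths to $v$, assume it contains no normal node and not $t$, rule out nodes of $F$ via their roots, observe that an anti-abundant arc $(u,j)$ leaving $S$ with $r_{uj}>0$ would make $(j,u)$ FRF-eligible and pull $j$ into $S$, and sum the imbalances over $S$ so that internal anti-abundant arcs cancel. But the contradiction in your Step 3 does not materialize. Your identity shows that $e(S)$ plus the residuals of anti-abundant arcs entering $S$ equals $\sum_{u\in S}\mathrm{IMB}(u,\Delta)$, which you correctly argue is small; to contradict anything you must show some term on that side is \emph{large}, and nothing in your argument produces largeness. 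The appeal to the flow-conservation identity across the cut and to abundant entering arcs ``accounting for $-u_{ui}$ apiece'' is not a derivation: abundant arcs do not appear in $\mathrm{IMB}$ at all, and there is no reason the flows $x_{iu}-x_{ui}$ across the cut are large in absolute value, so the identity is perfectly consistent with every quantity being small. The paper's proof supplies exactly the missing ingredient: it examines the \emph{most recent push from a node of $S$}, shows that push was a ``special'' push of at least $\frac{\Delta'}{k}\ge\Delta$ units along an arc $(i_1,i_2)$ that must be abundant with $i_2\notin S$, and that no flow returned in $(i_2,i_1)$ afterwards, so $(i_2,i_1)$ is an anti-abundant arc directed \emph{into} $S$ with $r_{i_2 i_1}\ge\Delta$. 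Only then does the smallness of $\mathrm{IMB}(S,\Delta)$ and of $e(S)$ force an anti-abundant arc leaving $S$ with positive residual, contradicting your Step 2. Without an analogue of this push-history argument your proposal has no engine for the contradiction.

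Two secondary gaps. First, your Step 1 handles a newly violating node $u\in S\setminus\{v\}$ that has \emph{not yet} been added to $F$ by ``strong induction on the order in which newly violating nodes are processed,'' but the induction hypothesis covers only nodes processed \emph{before} $v$; for an unprocessed $u$ no FRF-eligible path is ``already known,'' and assuming the lemma for $u$ is circular (indeed $S_u\subseteq S_v$, so both instances fail together). The paper does not exclude such nodes: it keeps them in $S$ and uses only that their excesses (and imbalances, since they were special in the previous phase) are small, which is what your Step 3 would need anyway. Second, your plan to bound $\mathrm{IMB}(v,\Delta)$ for the newly violating node ``via Lemma \ref{lem:imbeq}'' is not sufficient as stated: that lemma compares two preflows under the \emph{same} phase-$\Delta$ arc classification, whereas the passage from the $(k\Delta)$-phase to the $\Delta$-phase changes the sets $\mathrm{Anti}(\cdot)$ and $\mathrm{Small}(\cdot)$, and newly anti-abundant arcs incident to $v$ can carry residuals far exceeding $\epsilon^{4}\Delta$; controlling this requires an argument about which arcs can change category for a previously special node, not just Lemma \ref{lem:imbeq}.
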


\begin{proof}  
Let $S = \{i \in N(\Delta): \text{there is an FRF-eligible path from } i \text{ to } v \}$.  We want to show that $t \in S$ or there is a normal node in $S$.  We suppose that this is not the case, and we will derive a contradiction.  Since $S$ contains no normal nodes, it contains no root nodes of $F$. Accordingly, $S$ cannot contain any node of $F$. 
It follows that $S$ consists of special nodes and newly violating nodes.  (Any violating node that is not newly violating is in $F$.)   Thus, for every node $i \in S$, $|e(i)| < \epsilon^3 \Delta$.
   
In the following, we say that a push in $(i, j)$ is \textit{special} at the 
$\Delta'$-scaling phase if node $i$ is special at the $\Delta'$-scaling phase  and if $ e(i) < \epsilon \Delta'$ after the push.  

Let $(i_1, i_2)$ be the most recent push from a node of $S$, and suppose that the push took place in the 
$\Delta'$-scaling phase.     Let $\bar e(\,)$ denote the vector of excesses immediately subsequent to the push, and let $e(\,)$ denote the vector of excesses at the beginning of the $\Delta$-scaling phase.

The push in $(i_1, i_2)$  was special since 
   $\bar e(i_1, \Delta') \le  e(i_1, \Delta) < \epsilon^3 \Delta <   \epsilon^3 \Delta'$.  Moreover, the amount of flow pushed in $(i_1, i_2)$ was at least $ \frac{\Delta'}{k}$.   After the push, node $i_2$ became a medium or large excess node.  (This is true regardless of whether $i_2$ was special or normal.) Therefore, there was a subsequent push from $i_2$ at the $\Delta'$-scaling phase.  

Since the push in $(i_1, i_2)$ was special,  arc $(i_1, i_2)$ was large during the $\Delta'$-scaling phase.   (It could not have been small, and there were no medium arcs incident to $i_1$.)\\

The following are all true:

\begin{enumerate}
\item $|  \bar e(i_1) | < 1.5 \epsilon^4 \Delta'$.
\item  $\bar e(i_2) > \frac{\Delta'}{k} - \epsilon \Delta'$.  
\item   There was no medium or large push into node $i_1$ subsequent to the push in $(i_1, i_2)$.   (There was no subsequent push from node $i_1$ and $e(i_1) < \epsilon \Delta$.)
\item  Node $i_2 \notin S$.  (The last push from a node in $S$ was in $(i_1, i_2)$, and there was a subsequent push from $i_2$.)
\item Arc $(i_1, i_2)$ is abundant.  (Otherwise, arc  $(i_1, i_2)$ is anti-abundant, which implies that arc  $(i_2, i_1)$ is FRF-eligible, which implies that $i_2 \in S$.) 
\end{enumerate} 

We now consider the sum of the imbalances of nodes of $S$.

\begin{align*}
\text{IMB}(S, \Delta) = \sum_{i \in S} \text{IMB}(i, \Delta) =
\sum_{i\in S} e(S)+\sum_{(i, w)\in Anti(\Delta) : i \in S, w \notin S} r_{iw}  -\sum_{(w, i)\in Anti(\Delta) : i \in S, w \notin S} r_{wi}.
\end{align*}

\smallskip

Note that $ |\text{IMB}(i, \Delta )| < \epsilon^3 \Delta$ for each $i \in S$, and $|e(i)| < 1.5 \epsilon \Delta$ for each $i \in S$.    Therefore, 
$ |\text{IMB}(S, \Delta) - e(S)| < 2|S| \epsilon \Delta \le .5 \Delta $.   

\smallskip

By (5) above, $(i_2, i_1)$ is anti-abundant and directed into $S$.  There was no  push in $(i_2, i_1)$ after the push in $(i_1, i_2)$.  Therefore,   
$  r_{i_2,i_1} \ge \frac{\Delta'}{k} \ge \Delta$.  Since $ |\text{IMB}(S, \Delta) - e(S)| <  .5 \Delta $, there must be some anti-abundant arc $(i_3, i_4)$ directed out of $S$ with  
$r_{i_3,i_4} > \frac{\Delta}{2m}$.  It follows that $(i_4, i_3)$ is FRF-eligible.  But this also implies that  $i_4 \in S$, contradicting that $(i_3, i_4)$ is directed out of $S$.  In all cases we have derived a contradiction.   Thus $S$ contains node $t$ or a normal node. This completes the proof.
\end{proof}

\bigskip

\subsection{Properties of nodes of the flow-return forest}
The following are properties of the nodes of the flow-return forest.  After each property, we indicate why it is true or where we will show it to be true.\\

\begin{enumerate}
  \item [FRF 1.]      Every leaf node of $F$ is violating.   (Non-violating leaf nodes are deleted from $F$.)
\item [FRF 2.]      Every node of $Roots(F)\backslash\{t\}$ is normal.  (The procedure \textit{FRF-Add} only adds normal root nodes.)
\item [FRF 3.]       If $v$ is added to F at the $\Delta$-scaling phase, then $FlowNeeded(v) = \frac{\epsilon ^2 \Delta}{k}$.  (Specified in the procedure \textit{FRF-Add}.)
\item [FRF 4.]  	  If $i \in Roots(F)$, then 
$Reserve(i) = \sum_{v\in Desc(i, F)} FlowNeeded(v)$.  (The equality is preserved when running the procedure \textit{FRF-Add}.)
\item [FRF 5.]      If $i \in Roots(F)$ at the $\Delta$-scaling phase, then 
$\hat e(i) = e(i) - \epsilon  \Delta  - Reserve(i) \ge 0$. 
(See Lemma \ref{lem:FRFeligible}.)
\item [FRF 6.]    For each normal node $i \in F$ , 
$\hat e(i) \le \Delta + (k-1) \epsilon \Delta$.  (See Lemma \ref {lem:ehat}.)
\item [FRF 7.]     If an active medium or large excess node $i$ of $F$ is selected for pushing at the $\Delta$-scaling phase, then $\frac{\Delta}{k}$ units of flow are sent from $i$ to a violating node $v \in Desc(i, F)$.   (Specified in the procedure \textit{FRF-Push}.) 
\item [FRF 8.]   The number of scaling phases in which the flow-return forest is non-empty is $O(m \log_k n)$.  (See Corollary \ref{cor:FRFphases}.)
\end{enumerate}

\bigskip

\subsection{Operations on the FRF}
\label{subsec:operations}

In this subsection, we describe the operations that are performed on the FRF.  We also indicate how the operations are implemented and provide their total running time.  We describe all of the procedures in terms of their ``Applicability'' and ``Action.''     The procedures {\it FRF-Add} and {\it FRF-Delete} have additional details that are provided in their pseudo-codes.\\
 
\noindent  \textbf{\textit{Procedure FRF-Initialize}}$(F)$   \\
Applicability:   Initialization of the algorithm.\\
Action:   This procedure sets $F$ to be the null forest.\\

\bigskip

\noindent  \textbf{\textit{Procedure FRF-Push}}$(v, F, \Delta)$\\
Applicability:  Node $v \in F$ has medium or large excess and $v$ is not a leaf node of $F$.\\
Action:   The procedure finds a violating leaf node $i \in Desc(v, F)$.  Then $\frac{\Delta}{k}$ units of flow are sent in the path from $v$ to $i$ in $F$.\\

\noindent  \textbf{\textit{Procedure FRF-Pull}}$(v, F, \Delta)$\\
Applicability: Node $v$ is violating at the beginning of the 
$\Delta$-scaling phase and one of the following two conditions is satisfied. (1) Node $v$ is still violating and  
$FlowNeeded(v) =\frac{\Delta}{k}$, or (2) $v$ is a leaf node of $F$ that is no longer violating and about to be deleted,  
and $\hat e(Root(v, F), \Delta)   \ge \frac{\Delta}{k}$. \\
Action: $ \frac{\Delta}{k}$ units of flow are sent on the path in $F$ from $Root(v, F)$ to node $v$.  \\

\noindent  \textbf{\textit{Procedure  FRF-Add}}$(v, F, \Delta)$ \\
Applicability:   Node $v$ is newly violating at the $\Delta$-scaling phase.   \\
Action:   This procedure calls Procedure {\it FRF-Reverse-DFS}$(v, F, \Delta)$, which 
either produces an FRF-eligible directed cycle $W$ or else it 
produces an FRF-eligible path $W$.  In the former case, the cycle 
$W$ is contracted and {\it FRF-Add}$(v, F, \Delta)$ is executed on the 
contracted graph.  In the latter case, $F$ is replaced by $F \cup 
W$, and the arrays $NeededFlow(\,)$ and $Reserve(\,)$ are updated.   \\

\noindent  \textbf{\textit{W := FRF-Reverse-DFS}}$(v, F, \Delta)$ \\
Applicability:  {\it FRF-Add}$(v, F, \Delta)$ has just been called.\\
Action:  Let $Y = F \cup  \{t\} \cup \{i\notin F : i \text{ is normal.} \} $.  The algorithm finds an FRF-eligible 
cycle $W$ such that $W \cap F = \emptyset$  or it finds an 
FRF-eligible path $W$ from a node $w \in Y$ to node $v$ such that $(W\setminus w) \cap F = \emptyset$. \\

\noindent  \textbf{\textit{Procedure FRF-Delete}}$(v, F, \Delta)$\\
Applicability:  Node $v$ is a leaf node of $F$ and $\hat e(v, \Delta) 
\ge 0$.\\
Action:   $FlowNeeded(v)$ and $Reserve(Root(v, F))$ are 
both updated.   If $\hat e(Root(v, F), \Delta) \ge \frac{\Delta}{k}$ and if 
$v$ was a violating node at the beginning of the $\Delta$-scaling phase, then $\frac{\Delta}{k}$ units of 
flow are sent from $Root(v, F)$ to $v$.  Node $v$ is then deleted from $F$.\\ 



We now provide additional details for the procedures \textit{FRF-Add}, \textit{FRF-Reverse-DFS}, and \textit{FRF-delete} .\\

\noindent  \textbf{\textit{Procedure FRF-Add}}$(v, F, \Delta)$\\
01.  {\bf begin}\\
02.  \hspace{.15 in} $W :=$ {\it FRF-Reverse-DFS}$(v, F, \Delta)$;\\
03.  \hspace{.15 in}  {\bf if} $W$ is a cycle,  {\bf then do} \\
04.  \hspace{.3 in}	contract the cycle $W$;\\
05.  \hspace{.3 in}	run {\it FRF-Add}$(v, F, \Delta)$ on the contracted network;\\
06.  \hspace{.15 in}  {\bf else continue}\\        
07.  \hspace{.3 in}     let $j$ be the first node of $W$;\\
08.  \hspace{.3 in}     {\bf if} $j \notin F$, {\bf then }$w := j$ and $Reserve(w):= 0$; {\bf else} $w := Root(j, F)$;\\
09.  \hspace{.3 in}     $F := F \cup W$; \\    
10.  \hspace{.3 in}    \textbf{for every} node $i \in W$ \textbf{do} \\
11.  \hspace{.45 in}  $Root(i, F) := w$;\\
12.  \hspace{.45 in}  \textbf{if }  $i$ is newly violating, \textbf{then do}\\
13.  \hspace{.6 in}   $NeededFlow(i) := \epsilon^2 \Delta$;\\
14.  \hspace{.6 in}  $Reserve(w) :=$ $Reserve(w) + \epsilon^2 \Delta$;\\
15.   {\bf end}	\\
 
In order to run \textit{FRF-Reverse-DFS}, one needs to be able to identify FRF-eligible arcs for the depth first search.  For every node $i$, we maintain an array called \textit{FRF-eligible}$(i)$, which consists of all FRF-eligible arcs directed 
into node $i$.  
When procedure {\it FRF-Reverse-DFS} is run, it scans at most the first arc of \textit{FRF-eligible}$(i)$ for each $i$.   Each call of {\it FRF-Reverse-DFS} takes $O(n)$  time.

The algorithm also maintains an array called $Abundant(i)$, which consists of the Abundant arcs directed into node $i$.  The array $Abundant(\,)$ is needed so that \textit{FRF-eligible}$(\,)$ can be efficiently maintained. 
$Abundant(\,)$ is updated whenever an arc becomes large, and whenever an anti-abundant arc becomes bi-abundant.  (See Section \ref{sec:biabundant} of the appendix for more details.) 
\textit{FRF-eligible}$(\,)$ is updated whenever $Abundant(\,)$ is updated and whenever a node is relabeled.   The total time to update  $Abundant(\,)$ and \textit{FRF-eligible}$(\,)$ is proportional to the number of pushes plus the number of scaling phases.\\
 
We now consider the procedure \textit{FRF-delete}.  The unusual feature of this procedure is the sending of additional flow to nodes prior to their being deleted from $F$.
 Let $V(\Delta)$ denotes the subset of nodes $v$ of $N(\Delta)$ such that $0 < NeededFlow(v) < \Delta/k$.  That is, node $v$ is  violating at the beginning of the $\Delta$-scaling phase, and $v$ is not scheduled for a ``pull'' during the $\Delta$-scaling phase.\\

\smallskip
 
\noindent    \textbf{\textit{Procedure  FRF-Delete}}$(v, F, \Delta)$\\
01.   {\bf begin} \\
02.  \hspace{.15 in} $w := Root(v,F)$; \\
03.  \hspace{.15 in} $\delta := \text{NeededFlow}(v)$;\\
04.  \hspace{.15 in} NeededFlow$(v) := 0$;\\
05.  \hspace{.15 in} $Reserve(w) :=$ $Reserve(w) - \delta$;\\
06.  \hspace{.15 in} {\bf if} $w \in V(\Delta)$ and $\hat e(w, \Delta) \ge \frac{\Delta}{k}$, 
 	{\bf then} call Procedure {\it FRF-Pull}$(v, F, \Delta)$;\\
07.  \hspace{.15 in}  delete node $v$ from $F$;\\
08.  {\bf end}\\

Suppose that a leaf node $v$ is deleted during the $\Delta$-scaling 
phase.    Let $i = Root(v, F)$, and suppose that $\hat e(i) \ge \frac{\Delta}{k}$ immediately prior to the deletion of node $v$ from $F$.  We claim that under these circumstances,  the deletion of $v$ from $F$ does not increase the modified excess of node $i$.    We consider three cases corresponding to different values of $NeededFlow(v)$ at the beginning of the $\Delta$-scaling phase: (1) $NeededFlow(v) = 0$, (2) $NeededFlow(v) =  \frac{\Delta}{k}$, and (3) $ 0 < NeededFlow(v) <  \frac{\Delta}{k}$.  \\

In Case (1), $Reserve(i)$ is not altered by the deletion of $v$, and thus the modified excess of node $i$ is not changed.   In Case (2), \textit{FRF-Pull}$(v, F)$ is called at the beginning of the $\Delta$-scaling phase.    This leads to a decrease in the modified excess of node $i$ by $\frac{\Delta}{k}$.   When node $v$ is later deleted, then $Reserve(i)$ is decreased by 
$\frac{\Delta}{k}$ in Step 05, thus increasing the modified excess of node $i$ by $\frac{\Delta}{k}$.  The increase and decrease cancel each other.  In Case (3), 
the modified excess of $i$ will increase by less than $\frac{\Delta}{k}$ in Step 05.  In Step 06,  $\frac{\Delta}{k}$ units of flow will be sent from $i$ to $v$, which reduces the modified excess of $i$  by $\frac{\Delta}{k}$.  The decrease exceeds the increase.   Thus, in all three cases, the deletion of node $v$ does not lead to an increase in the modified excess of node $i$.      

\begin{thm}   
\label{th:FRFflows}
Consider the procedures \textit{FRF-ADD}, \textit{FRF-Reverse-DFS}, \textit{FRF-Delete}, \textit{FRF-Push}, and \textit{FRF-Pull}.   The bounds given in Table \ref{FRFtable} are correct.   The total time taken by all of these procedures is $O(nm)$.
\end{thm}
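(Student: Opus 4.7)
My plan is to bound, for each of the five FRF procedures, the number of invocations and the work per invocation, and then sum. The two external inputs are Lemma \ref{lem:newlyviolating}, which gives $O(m)$ newly violating nodes over the course of the algorithm, and the fact (used throughout the paper) that at most $n-1$ contractions occur.

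First I would count invocations. By Lemma \ref{lem:newlyviolating}, \textit{FRF-Add} is invoked $O(m)$ times. Each top-level call to \textit{FRF-Add} triggers one call to \textit{FRF-Reverse-DFS}, plus one additional recursive call per contraction; with fewer than $n$ contractions, the total number of \textit{FRF-Reverse-DFS} calls is $O(m)$. Since each node of $F$ is deleted at most once per insertion, \textit{FRF-Delete} is also invoked $O(m)$ times. Finally, every violating node leaves the violating state either through an \textit{FRF-Push} from an ancestor (producing a non-violating leaf that is then deleted) or through an \textit{FRF-Pull} from its root (either inside \textit{FRF-Delete}, or when $FlowNeeded(v) = \Delta/k$ triggers line 03 of \textit{ScalingPhase}); each such event handles exactly one previously-violating node, so the total number of \textit{FRF-Push} plus \textit{FRF-Pull} calls is $O(m)$.

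Next I would bound the work per invocation. \textit{FRF-Initialize} is $O(1)$, and \textit{FRF-Delete} is $O(1)$ beyond its possible call to \textit{FRF-Pull}. Both \textit{FRF-Push} and \textit{FRF-Pull} traverse a path of length at most $n$ in $F$ and send $\Delta/k$ units of flow, costing $O(n)$ per call. \textit{FRF-Reverse-DFS}, implemented as a depth-first search that inspects only the first arc of \textit{FRF-eligible}$(i)$ at each visited node $i$, costs $O(n)$ per call, as asserted immediately after its pseudo-code. \textit{FRF-Add} performs $O(n)$ work to append the returned path and to update \textit{Root}, \textit{NeededFlow}, and \textit{Reserve}; when a contraction is triggered, the contraction itself costs $O(m)$, but because there are fewer than $n$ contractions in total the aggregate contraction cost is $O(nm)$.

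Multiplying counts by per-call costs gives $O(nm)$ for \textit{FRF-Add} (outside contractions), $O(nm)$ for \textit{FRF-Reverse-DFS}, $O(nm)$ for \textit{FRF-Push} and \textit{FRF-Pull} combined, $O(nm)$ for all contractions, and $O(m)$ for \textit{FRF-Delete}. Summing yields the claimed $O(nm)$ total and validates the per-procedure entries of Table \ref{FRFtable}. The main obstacle is justifying that each \textit{FRF-Reverse-DFS} call really costs $O(n)$ rather than $O(m)$: one must argue that the cost of advancing past arcs that have ceased to be FRF-eligible (because of relabels or abundance-class changes since the arc was last examined) can be charged, in Current-Arc fashion, to the events that made them ineligible. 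Since those maintenance events are already budgeted at $O(nm)$ elsewhere in the paper, the DFS itself pays only $O(n)$ per invocation, and the full $O(nm)$ bound follows.
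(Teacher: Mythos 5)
Your accounting scheme --- bound the number of invocations of each FRF procedure, multiply by the per-call cost, and sum --- is exactly the paper's approach, and your counts for \textit{FRF-Add}, \textit{FRF-Reverse-DFS}, \textit{FRF-Push} and \textit{FRF-Pull} agree with the paper's ($O(m+n)$ calls at $O(n)$ each, with contraction time charged separately). However, there is one genuine counting error: you claim \textit{FRF-Delete} is invoked only $O(m)$ times because ``each node of $F$ is deleted at most once per insertion.'' The implication is right, but the number of insertions is not $O(m)$: each call of \textit{FRF-Add} appends an entire FRF-eligible path $W$, whose interior consists of special (hence non-violating) nodes, so a single add can insert up to $n$ nodes into $F$, and these nodes are later deleted once they become non-violating leaves. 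The paper accordingly bounds the number of \textit{FRF-Delete} calls by the total number of node insertions into $F$, which is $O(n(m+n))$, not $O(m)$. Because each deletion costs $O(1)$ (the \textit{FRF-Pull} it may trigger is accounted for separately), the overall $O(nm)$ bound survives, but your argument as written does not justify the invocation bound for \textit{FRF-Delete} recorded in Table \ref{FRFtable}, and the theorem explicitly asserts the correctness of those table entries.

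A smaller point: the amortization you sketch at the end for \textit{FRF-Reverse-DFS} is not needed in the paper's implementation. The arrays \textit{FRF-eligible}$(\,)$ (supported by $Abundant(\,)$) are maintained eagerly --- updated whenever an arc changes abundance class or a node is relabeled --- and that maintenance cost is charged to pushes and scaling phases elsewhere; hence the reverse DFS inspects only the first, currently eligible, arc of each list and costs $O(n)$ outright, with no Current-Arc style charging inside the search.
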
  

\begin{table}[h]
  \includegraphics[width=5 in]{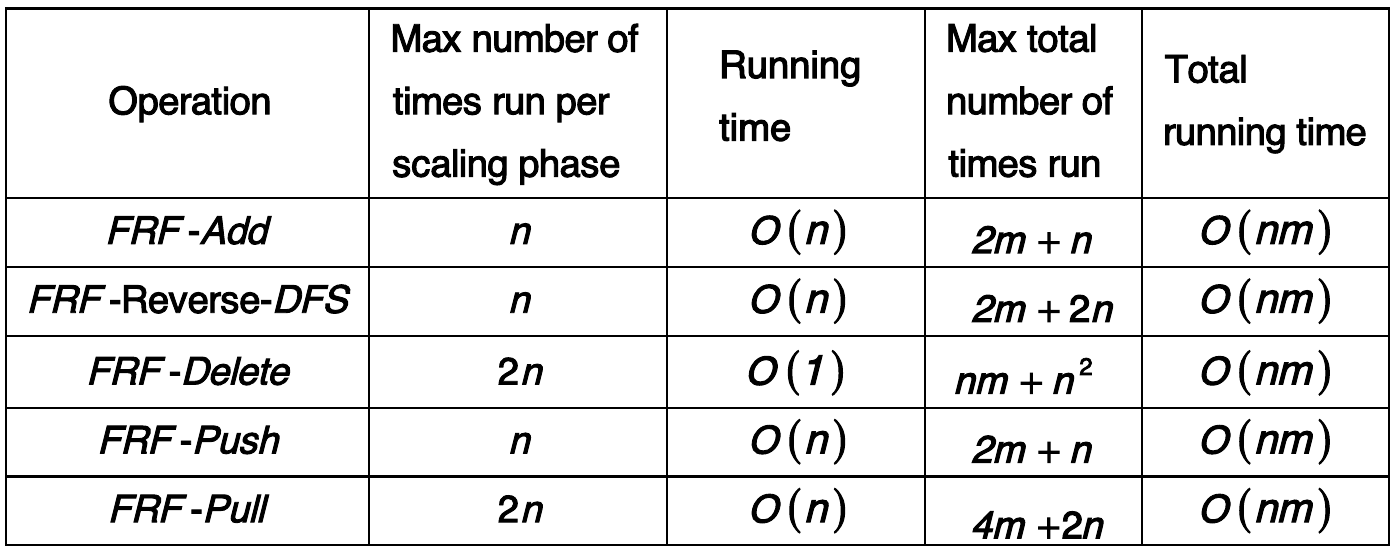}
  \caption{ FRF-Procedures.}
  \label{FRFtable}
\end{table}

\begin{proof}

In the proof of Lemma \ref{lem:newlyviolating}, we showed that there are at most $2m + n$ newly violating nodes added to $F$ over all scaling phases.   

 \textit{FRF-ADD} is run when a node becomes newly violating and when \textit{FRF-Reverse-DFS} creates a contraction.    There are at most $n$ newly violating nodes in a scaling phase, and at most $2m + n$ violating nodes over all scaling phases.   There are fewer than $n$ contractions.  The time for each execution of  \textit{FRF-ADD} is $O(n)$. 

\textit{FRF-Reverse-DFS} is run whenever  \textit{FRF-ADD} is run.    Each time that it is run takes $O(n)$ time. 

When \textit{FRF-Delete} is called, the running time is $O(1)$, ignoring the time for \textit{FRF-Pull}, which is accounted for elsewhere.  The number of times that \textit{FRF-Delete}  is called is the total number of nodes that are added to the flow-return forest, which is at most $n(m + n)$.

Whenever \textit{FRF-Push} is called, a violating node becomes non-violating.  This can happen at most $2m + n$ times in total.  The running time of procedure \textit{FRF-Push} is $O(n)$.

For each violating node $v$, \textit{FRF-Pull} can be called at most twice.  It can be called once if Needed-Flow$(v) = \frac{\Delta}{k}$.   It can be called once during \textit{FRF-Delete$(v, F)$} if $0 < \text{Needed-Flow}(v) < \frac{\Delta}{k}$.   Thus, \textit{FRF-Pull} can be called at most $4m + 2n$ times.  The running time of procedure \textit{FRF-Pull} is $O(n)$.
\end{proof}

%
%
%
%

\subsection{The number of scaling phases in which $F$ is non-empty}

In this subsection, we prove that there are $O(m \log_k n)$ scaling phases in which $F$ is non-empty.  We also show that a violating node becomes non-violating prior to having excess as low as $-\epsilon \Delta$.

\begin{lem}
\label{lem:2Q}
Each node is violating for at most $2Q+1$ phases of the Enhanced LMES Algorithm.
\end{lem}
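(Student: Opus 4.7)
The plan is to track a node $v$ from the moment it becomes newly violating until the scheduled FRF-Pull that restores it to non-violating status. Suppose $v$ becomes newly violating at the $\Delta_0$-scaling phase. Then FRF-Add$(v, F, \Delta_0)$ appends $v$ as a leaf of $F$ and sets $NeededFlow(v) = \epsilon^2 \Delta_0/k$. So long as $v$ remains violating, $v\in F$, so $F \neq \emptyset$ and the exceptional branch of \textit{Get-Next-Scaling-Parameter} cannot fire; the scaling parameter is divided by exactly $k$ between consecutive phases. Let $\Delta^{*} = k^{-2Q}\Delta_0 = \epsilon^2 \Delta_0$, which is the scaling parameter exactly $2Q$ phases after $\Delta_0$.

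At the start of the $\Delta^{*}$-scaling phase, if $v$ is still violating, then $NeededFlow(v) = \epsilon^2 \Delta_0/k = \Delta^{*}/k$, so Line 03 of Procedure \textit{ScalingPhase} triggers FRF-Pull$(v, F, \Delta^{*})$. The pull sends $\Delta^{*}/k$ units of flow along the abundant path from $Root(v,F)$ to $v$, raising $e(v)$ by exactly $\Delta^{*}/k$. I would then argue that after this pull $v$ is non-violating, so that $v$ is violating only in the $2Q+1$ phases with parameters $\Delta_0, \Delta_0/k, \ldots, \Delta^{*}$. The key structural fact is that while $v$ is a violating leaf of $F$, no flow leaves $v$: a violating node has $\hat e(v) < 0$ and so is never chosen by \textit{Enhanced-LMES-Select-Node}, and \textit{FRF-Push} routes flow only through internal nodes of $F$, never through a leaf. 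Hence $e(v)$ is non-decreasing throughout the $2Q$ intervening phases, and its value when $v$ first became newly violating is a global lower bound. Combining this with the $\Delta^{*}/k$ delivered by the pull and the parameter bounds $\epsilon \leq 1/(4n)$ and $\epsilon = k^{-Q}$, one verifies
$$e(v) \;\geq\; -1.5\, k\epsilon^4 \Delta_0 \;+\; \frac{\epsilon^2 \Delta_0}{k} \;>\; \epsilon^3 \Delta_0 \;=\; \epsilon \Delta^{*},$$
so $v$ is non-violating during the $\Delta^{*}$-phase.

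The main obstacle is the lower bound on $e(v)$ at the instant $v$ becomes newly violating. By Lemma~\ref{lem:newlyviolating} there are two distinct causes to handle. Either $v$ has just stopped being special, in which case Property S4 applied at the previous $(k\Delta_0)$-phase gives $e(v) \geq -1.5\, \epsilon^4 (k\Delta_0)$, or $v$ is a newly merged node created by contracting an abundant cycle, in which case the lower bound comes from summing the analogous bounds over the nodes of the contracted cycle and dividing by the cycle length (per the convention used in the potential function analysis for merged nodes). Both cases yield lower bounds of the form $-O(n\epsilon^4 k \Delta_0)$, which combined with the pull's $\Delta^{*}/k$ exceeds $\epsilon \Delta^{*}$ by the parameter inequalities above. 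Once these initial bounds are in place, the counting from the first two paragraphs completes the proof.
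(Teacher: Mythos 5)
Your proof follows essentially the same route as the paper's: $NeededFlow(v)$ is set to $\epsilon^2\Delta_0/k$ when $v$ enters $F$, the scaling parameter is divided by $k$ in each phase while $F\neq\emptyset$, so exactly $2Q$ phases later Line 03 of \textit{ScalingPhase} triggers \textit{FRF-Pull}, after which $v$ is non-violating, giving the $2Q+1$ bound. The additional verification you supply (monotonicity of $e(v)$ while $v$ is violating, plus the lower bound $e(v)\ge -1.5\epsilon^4 k\Delta_0$ from Property S4) is exactly what the paper handles separately via Corollary~\ref{cor:2Q} and property 9 of the flow-return forest, so the two arguments coincide in substance.
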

\begin{proof}
Suppose that node $v$ is newly violating in the $\Delta'$-scaling phase.  Then, 
$FlowNeeded(v) = \frac{\epsilon^2\Delta'}{k}$.  We will show that $v$ becomes non-violating within $2Q$ additional scaling phases.

Now consider Line 03 of the procedure \textit{ScalingPhase}.  If $v$ is still violating at the $\Delta$-scaling phase and if  $FlowNeeded(v) = \frac{\Delta}{k}$, then $ \Delta = \epsilon^2\Delta'$.  There has been exactly $2Q$ phases since node $v$ was added to $F$.  Then the algorithm runs \textit{FRF-Pull}$(v, F, \Delta)$, after which $v$ is no longer violating.
\end{proof}

\begin{cor}
\label{cor:FRFphases}
   The number of scaling phases in which $F \neq \emptyset$  is 
$O(m \log_k n)$.
   \end{cor}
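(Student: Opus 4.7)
The plan is to combine Lemma \ref{lem:newlyviolating} with Lemma \ref{lem:2Q} and observe that every scaling phase with $F\neq\emptyset$ must contain at least one violating node. Formally, I would charge each such phase to a (node, phase) pair in which some node is violating during that phase.

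First, I would use Property FRF~1, which says that every leaf of $F$ is violating. Therefore, whenever $F\neq\emptyset$, $F$ has at least one leaf, and hence there is at least one violating node during that scaling phase. This lets me bound the number of phases with $F\neq\emptyset$ by the total number of (violating node, phase) pairs summed over all scaling phases.

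Second, I would bound this total count of (violating node, phase) pairs. Every violating-node occurrence at some phase either (a) is the first phase in which that node is violating (so the node is newly violating at that phase) or (b) is a later phase during the same "violating spell." By Lemma \ref{lem:2Q}, each node remains violating for at most $2Q+1$ consecutive scaling phases before it is pulled back to non-violating. Thus every newly violating occurrence of a node accounts for at most $2Q+1$ (violating node, phase) pairs. By Lemma \ref{lem:newlyviolating}, the total number of newly violating occurrences over the execution is $O(m)$. Multiplying, the total number of (violating node, phase) pairs is
\[
O(m)\cdot(2Q+1) \;=\; O(m)\cdot O(\log_k n) \;=\; O(m\log_k n),
\]
using $Q=\lceil\log_k 4n\rceil=O(\log_k n)$.

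Combining the two steps yields the claimed bound. The argument is short; the only subtlety is making sure that one correctly identifies a violating node in every phase where $F\neq\emptyset$, which is immediate from FRF~1 and the fact that a non-empty forest always has a leaf. I do not foresee any real obstacle, since both of the lemmas being invoked have already been established in the excerpt.
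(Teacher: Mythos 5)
Your proposal is correct and takes essentially the same route as the paper: the paper's proof also observes that a non-empty $F$ contains a violating node, invokes Lemma \ref{lem:newlyviolating} for the $O(m)$ bound on newly violating occurrences, and uses Lemma \ref{lem:2Q} to limit each violating node to $O(Q)=O(\log_k n)$ phases in $F$. Your explicit charging of phases to (violating node, phase) pairs is just a slightly more detailed bookkeeping of the same multiplication.
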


\begin{proof}   
   If $F \neq \emptyset$, then $F$ has a violating node.   By Lemma 
 \ref{lem:newlyviolating}, there are $O(m)$ newly violating nodes over all scaling 
   phases.  By Lemma \ref{lem:2Q}, each violating node is in $F$ for at most $2Q+1$ scaling 
   phases.  Thus the number of scaling phases in which $F \neq 
   \emptyset$ is $O(m \log_k n)$.  
   \end{proof}

\begin{cor}
\label{cor:2Q}
Suppose that node $v$ is violating at the beginning of the $\Delta$-scaling phase.  Then 
 $-\epsilon \Delta < e(v, \Delta) < \epsilon \Delta$.
\end{cor}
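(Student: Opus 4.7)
The upper bound $e(v,\Delta)<\epsilon\Delta$ is immediate from the definition of a violating node. The lower bound is the substantive content, and the plan is to track $e(v)$ from the current phase back to the moment $v$ first became violating, using a monotonicity property together with the bound on the number of violating phases from Lemma~\ref{lem:2Q}.

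\emph{Step 1 (Monotonicity).} I first claim that $e(v)$ is non-decreasing while $v$ remains violating. A violating node satisfies $e(v)<\epsilon\Delta$, which is strictly below both medium-excess thresholds $\Delta/k-1.5\epsilon^4\Delta$ (for special nodes) and $\Delta/k+\epsilon\Delta$ (for normal nodes), because $\epsilon\le 1/(4n)$ is much smaller than $1/k$. Hence \emph{Enhanced-LMES-Select-Node} never picks $v$, and the only pushes touching $v$ come from the FRF routines. Inspection of \emph{FRF-Push}, \emph{FRF-Pull}, and \emph{FRF-Delete} shows that each of these either sends flow \emph{into} $v$ (when $v$ is a violating leaf target of a pull) or leaves $e(v)$ unchanged (when $v$ is an intermediate node on the path along which flow is pushed). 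Thus $e(v)$ cannot decrease while $v$ is violating.

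\emph{Step 2 (Initial excess and scaling).} Let $\Delta_0$ be the scaling parameter of the phase at which $v$ first became violating. From the case analysis in the proof of Lemma~\ref{lem:newlyviolating}, either (i) $v$ is a merged node obtained by contracting an abundant cycle $W$, in which case applying property S4 to each constituent gives $e(v)=\sum_{u\in W}e(u)\ge -1.5\,|W|\,\epsilon^4\Delta_0\ge -1.5n\epsilon^4\Delta_0$, or (ii) $v$ was special at the preceding phase with parameter $k\Delta_0$, so property S4 yields $e(v)\ge -1.5\epsilon^4(k\Delta_0)$ at the start of phase $\Delta_0$. In both cases $e(v)\ge -1.5\max(n,k)\,\epsilon^4\Delta_0$. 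By Lemma~\ref{lem:2Q}, $v$ has been violating for at most $2Q+1$ phases, so $\Delta_0\le k^{2Q}\Delta=\Delta/\epsilon^2$. Combining with Step~1 gives
\[
e(v)\ \ge\ -1.5\max(n,k)\,\epsilon^4\Delta_0\ \ge\ -1.5\max(n,k)\,\epsilon^2\Delta.
\]
Since $\epsilon\le 1/(4n)$ and $k=O(\log n/\log\log n+m/n)$ lies in the range where $\max(n,k)\,\epsilon<1$, the right-hand side is strictly greater than $-\epsilon\Delta$, completing the lower bound.

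The main obstacle I foresee is Step~1: one must read the pseudo-codes of \emph{FRF-Push}, \emph{FRF-Pull}, and especially the conditional pull triggered inside Step~06 of \emph{FRF-Delete}, and confirm that no branch ever traverses $v$ in a way that reduces $e(v)$. A subtlety is that a path added to the flow-return forest by \emph{FRF-Add} may contain several newly violating nodes as intermediates, but because any flow passing through an intermediate contributes equal in- and out-flow, the excess of such an intermediate is unchanged by any pull on the path. Once this mechanical verification is done, the rest of the argument is a quantitative combination of Lemma~\ref{lem:2Q}, property S4, and the parameter bounds on $\epsilon$ and $k$.
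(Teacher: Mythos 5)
Your proposal is correct and follows essentially the same route as the paper's own proof: excess cannot decrease while $v$ is violating (no pushes out of $v$), the bound $e(v)\ge -1.5\epsilon^4(k\Delta')$ from $v$ having been special just before becoming violating, and Lemma \ref{lem:2Q} to relate the current scaling parameter to the one at which $v$ became violating, followed by the same parameter arithmetic. The only differences are presentational: you spell out the monotonicity verification and add the merged-node case (which the paper's proof silently omits), at the cost of slightly looser constants.
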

\begin{proof}
For node $v$ to be violating, $ e(v, \Delta) < \epsilon \Delta$.   We need to prove that $ e(v, \Delta) > -\epsilon \Delta$. 
Suppose that $v$ became newly violating at the $\Delta'$-scaling phase when the preflow was $x$.  Suppose $v$ becomes non-violating at the $\Delta$-scaling phase when the preflow is $y$.  Then $e_y(v) \ge e_x(v)$ since there is no push from $v$ in the iterations leading up to preflow $y$.   

By Lemma \ref{lem:2Q}, $\Delta \ge \epsilon^2 \Delta'$.  
At the end of the $(k\Delta')$-scaling phase (the phase prior to the $\Delta'$-scaling phase), node $v$ was special. Thus
$ e_y(v) \ge e_x(v) \ge -1.5 \epsilon^4 (k \Delta') > -\epsilon ^3 \Delta'  \ge -\epsilon  \Delta  $.   
\end{proof}

\smallskip

\section{The Enhanced LMES max flow algorithm}
\label{sec:enhanced}

In this section, we give the pseudo-code for the remaining procedures of the Enhanced LMES Algorithm.  We also prove the correctness of the Enhanced LMES Algorithm and prove that the running time is 
$O(k n^2 + nm \log_k n)$.

\subsection{The Pseudo-code}
\label{subsec:pseudo}

We have already provided the pseudo-code for the Enhanced-LMES Algorithm including the main procedure, which is called \textit{ScalingPhase}.  Here, we provide the pseudo-code the remaining procedures: \textit{Enhanced-LMES-Push/Relabel}, \textit{FRF-Recursive-Delete-and-Merge}, and \textit{Get-Next-Scaling-Parameter}.\\

\noindent \textbf{\textit{Enhanced-LMES-Push/Relabel}$(v, \Delta)$}\\
01.   \textbf{begin}\\
02.   	\hspace{.15 in} \textbf{if} there is no admissible arc in $A^+(v)$, \textbf{then } $d(v) := d(v) + 1$;	\\
03.  \hspace{.15 in}  \textbf{else } select an admissible arc $(i, j) \in A^+(v)$;\\
04.     \hspace{.15 in} $ \textit{Enhanced-LMES-Push}(i, j, \Delta)$ \\
05.   \textbf{end}\\

We next describe Procedure \textit{FRF-Recursive-Delete-and-Merge}. An arc $(i, j)$ with distinct endpoints is called \textit{mergeable} if $(i, j)$ is doubly abundant and at most one endpoint of $(i, j)$ is in $F$.  The Enhanced LMES Algorithm  contracts mergeable bi-abundant arcs.  It does not contract a bi-abundant arc if both endpoints of the arc are in $F$.   (This restriction ensures that $F$ remains a forest.)  If both endpoints are in $F$, then the algorithm defers contraction until one or both of the endpoints are deleted from $F$.  

When a node $v$ is deleted from $F$, the algorithm checks whether $v$ is incident to a bi-abundant arc.  If so,  the arc is contracted.   When $v$ is deleted from $F$, the algorithm also checks the parent of $v$ in $F$.  If the parent node has become a leaf node that is non-violating, then the procedure deletes the parent node.   \\

\noindent \textbf{\textit{Procedure FRF-Recursive-Delete-and-Merge$(F, \Delta)$}}\\
01.   \textbf{begin}\\
02.   	\hspace{.15 in}\textbf{while} there is a mergeable arc or a non-violating leaf node of $F$ 	\textbf{do}\\
03.   	\hspace{.30 in}	\textbf{if} there is a non-violating leaf node $v$ of $F$ 
\textbf{then} \textit{FRF-Delete}$(v, F, \Delta)$\\
04.     \hspace{.30 in} 		\textbf{else}  contract a mergeable arc;  \\
05.	\hspace{.15 in}  \textbf{endwhile}\\
06.   \textbf{end}\\


We now describe the procedure that chooses the subsequent scaling parameter.  In general, the scaling parameter after the $\Delta$-scaling phase is $\frac{\Delta}{k}$.   However, this could lead to a long number of scaling phases that are useless; that is, scaling phases with no push and such that $F= \emptyset$. The procedure \textit{Get-Next-Scaling-Parameter} avoids long sequences of useless phases, while also maintaining  Invariant 2.

Suppose that the current scaling parameter is $\Delta$.  Suppose further that the next $2Q+1$ scaling phases are going to be useless.  Then the scaling parameter is set to a value $\Gamma$ so that the $\Gamma$-scaling phase is useful and so that $\Gamma \le \frac{\Delta}{Mk}$.  \\

\noindent \textbf{\textit{Procedure Get-Next-Scaling-Parameter$(\Delta)$}}\\
01.   \textbf{begin}\\
02.   	\hspace{.15 in} $\Gamma_1 := \max\{e(v) : e(v) > 0\}$;\\
03.   \hspace{.15 in} 	$\Gamma_2 := \max\{-\epsilon^3 e(v) : e(v) < 0\}$;\\
04.    \hspace{.15 in}   	$\Gamma := \max\{\Gamma_1, \Gamma_2\}$;  \\
05.	 \hspace{.15 in}    \textbf{if} $\Gamma < \frac{\Delta}{kM}$ , \textbf{then} $ \Delta := \Gamma$;   \textbf{else}, 
$\Delta := \frac{\Delta}{k}$;\\
06.   \textbf{end}\\

Suppose that $\Delta$ is replaced by $\Gamma$ in Step 05.    We claim that Invariant 2 is satisfied at the beginning of the $\Gamma$-scaling phase .  Suppose that $r_{ij} > 0$ and $r_{ji} > 0$ at the end of the $\Delta$-scaling phase.  Then it is also true that $r_{ij} \ge \frac{\Delta}{k}$, and $r_{ji} \ge \frac{\Delta}{k}$.  Then $r_{ij} \ge M \Gamma$ and $r_{ji} \ge M \Gamma$, and $(i, j)$ is  bi-abundant at the $\Gamma$-scaling phase.  This choice of $\Gamma$ ensures that Invariant 2 remains satisfied.\\


%

\section{ The Enhanced LMES Algorithm runs in $O(k n^2 + nm \log_k n)$ time}
\label{sec:analysis}

\subsection{The main theorem and its supporting lemmas}

In this subsection, we state the main theorem as well as the lemmas that are used in establishing the theorem.
In subsequent subsections, we prove the lemmas and the main theorem.\\

\noindent \textbf {Theorem \ref{th:main}.} 
\textit{The Enhanced LMES Algorithm finds the maximum flow in $O(k n^2 + nm \log_k n)$ time.  Let $k$ be the smallest power of 2 that is at least $\max\{\log\log n, \frac{m}{n}, 4\}$.   Then the running time is 
$$O\left(\frac{nm \log n}{\log\log n + \log \frac{m}{n}}\right).$$} 

\smallskip

\noindent  \textbf{Lemma \ref{lem:normal}.}
\textit{The total number of occurrences of normal and violating nodes in the Enhanced LMES Algorithm is 
$O(m \log_k n)$.}\\

\smallskip

\noindent \textbf {Corollary \ref{cor:phases}.}  
\textit{The number of scaling phases of the Enhanced LMES Algorithm is $O(m \log_k n)$.}\\

\smallskip

\noindent  \textbf{Corollary \ref{cor:terminate}}.
\textit{The Enhanced LMES Algorithm terminates with a maximum flow.}\\

\smallskip

\noindent  \textbf{Lemma \ref{lem:ehat}.}
\textit{If $v \in N(\Delta)$, then throughout the   $\Delta$-scaling phase  of the Enhanced LMES Algorithm, 
$\hat e(v) < (1 + (k-1) \epsilon ) \Delta$  }. \\

\smallskip

\noindent  \textbf{Lemma \ref{lem:totalflow}.}
\textit{The total amount of flow pushed in the $\Delta$-scaling phase of the Enhanced LMES Algorithm is at most $5n^2 \Delta$.}\\

\smallskip

\noindent \textit{\textbf{Corollary \ref{cor:abundance}}.
If $(i, j)$ is abundant at the beginning of the $\Delta$-scaling phase of the Enhanced LMES, then for every scaling parameter $\Delta' \le \Delta $ the following are true. 
\begin{enumerate}
\item   $(i, j)$ is abundant in the $\Delta'$-scaling phase and
\item   $r_{ij} > 0$ throughout the $\Delta'$-scaling phase.
\end{enumerate}}

\smallskip

\noindent  \textbf{Lemma \ref{lem:totalpushes}.}
\textit{ The total number of medium and large pushes in the Enhanced LMES Algorithm is $O(k n^2 + nm \log_k n)$.}\\
 
\smallskip

\subsection{The number of normal and violating nodes}

\begin{lem}
\label{lem:normal}
The total number of occurrences of normal and violating nodes in the Enhanced LMES Algorithm is 
$O(m \log_k n)$.
\end{lem}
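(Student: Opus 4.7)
The plan is to split each occurrence of a non-special node (that is, a normal or a violating node) according to the two conditions defining ``special'': $v$ is non-special at the $\Delta$-scaling phase iff either (a) $v$ is incident to some medium arc at phase $\Delta$, or (b) $v$ is incident to no medium arc but $|\text{IMB}(v,\Delta)| > \epsilon^4\Delta$. Since every node of $N(\Delta)\setminus\{s,t\}$ is either special, normal, or violating, it suffices to bound the total count of $(v,\Delta)$ pairs of types (a) and (b).

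For occurrences of type (a) I would invoke Lemma \ref{lem:mediumarcs}: each medium arc has exactly two endpoints, so the number of $(v,\Delta)$ pairs of type (a) is at most $2\sum_\Delta |\text{Medium}(\Delta)| = O(m\log_k n)$.

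For occurrences of type (b) I would apply the Contraction Lemma (Lemma \ref{lem:contraction}): if $v$ is in case (b) at phase $\Delta$, then within $7Q$ subsequent phases $v$ is incident to a bi-abundant arc $(v,w)$. By Corollary \ref{cor:abundance} that arc remains bi-abundant, so it becomes mergeable (and is contracted by \textit{FRF-Recursive-Delete-and-Merge}) as soon as both $v$ and $w$ lie outside the flow-return forest. The proof of Lemma \ref{lem:newlyviolating} shows that once $v$ is in case (b) without medium arcs it cannot return to being special, so $v$'s entire case-(b) contribution is confined to the $O(Q)=O(\log_k n)$ phases between $v$'s first imbalance-only episode and the contraction ending its identity in the contracted graph. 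Because the algorithm performs fewer than $n$ contractions, at most $2n-1$ distinct node identities ever arise, so the total number of type-(b) occurrences is $O(n\log_k n)$.

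Summing the two bounds yields $O((m+n)\log_k n)=O(m\log_k n)$, since we may assume $m \ge n-1$. The step I expect to require the most care is justifying the $O(Q)$-phase delay between the appearance of the bi-abundant arc incident to $v$ and its actual contraction: one must combine Corollary \ref{cor:abundance} (persistence of bi-abundance) with Lemma \ref{lem:2Q} (rapid departure of violating nodes from $F$) to argue that $(v,w)$ cannot be blocked from being mergeable for much longer than $Q$ phases, even when both endpoints are temporarily trapped in the flow-return forest.
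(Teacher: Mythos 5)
Your split into (a) medium-arc occurrences and (b) imbalance-only occurrences, with Lemma \ref{lem:mediumarcs} for (a) and the Contraction Lemma for the appearance of a bi-abundant arc in (b), matches the paper's starting point. The gap is exactly at the step you flagged: the claim that once the bi-abundant arc $(v,w)$ exists, it is contracted within $O(Q)$ further phases, so that each node identity contributes only $O(Q)$ type-(b) occurrences. Lemma \ref{lem:2Q} bounds only how long a node remains \emph{violating} (at most $2Q+1$ phases); it does not bound how long a node remains in the flow-return forest. A node of $F$ is deleted only when it is a non-violating \emph{leaf}, and \textit{FRF-Add} may attach a new violating node below an arbitrary node of $F$ (the attachment node $w$ returned by \textit{FRF-Reverse-DFS} can be any node of $F$). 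Hence a root or internal node of $F$ can keep acquiring fresh violating descendants and stay in $F$ --- with its incident bi-abundant arc not mergeable --- for a number of phases bounded only by the total number of phases in which $F \neq \emptyset$, which is $\Theta(m \log_k n)$ rather than $O(Q)$. So the per-identity bound of $O(Q)$ phases, and with it your $O(n \log_k n)$ count for type (b), is not established.

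The paper closes precisely this case with a different charging argument rather than a per-identity time bound: a normal node incident to a bi-abundant arc that has not been contracted must lie in $F$; being normal it has $e(v) \ge \epsilon\Delta$, so within $Q$ phases it acquires medium or large excess, and when it is next selected the algorithm performs an \textit{FRF-Push} from it (nodes of $F$ are never processed by \textit{Enhanced-LMES-Push}). Each such occurrence is therefore within $Q$ phases of an FRF-push from that node, and by Theorem \ref{th:FRFflows} there are only $O(m)$ FRF-pushes in total, giving $O(m \log_k n)$ occurrences of normal nodes blocked inside $F$; violating occurrences are handled separately via Lemma \ref{lem:newlyviolating} ($O(m)$ newly violating nodes) together with Lemma \ref{lem:2Q} ($O(Q)$ phases each). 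To repair your argument you would need to replace the ``contraction within $O(Q)$ phases'' claim by such a charge to FRF-pushes (or another global accounting) for the phases during which both endpoints of the bi-abundant arc are held in $F$.
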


\begin{proof}  
 If a node $v$ is normal, then (by the Contraction Lemma),  node $v$ will become incident to a medium arc or a 
 bi-abundant arc within $7Q$ scaling phases.  Thus the number of occurrences of normal node that are not incident to a bi-abundant arc 
 is $O( m \log_k n)$.
 
We now claim that there are $O( m \log_k n)$  occurrences of normal 
nodes that are incident to a bi-abundant arc.   Each of these nodes must be in $F$; otherwise the bi-abundant arc would be contracted. 
If $v \in F$ is normal at the $\Delta$-scaling phase, then 
$e(v) \ge \epsilon \Delta$, and node $v$ will have medium or large excess 
within $Q$ scaling phases, after which there will be an FRF-push from node $v$.   By Theorem \ref{th:FRFflows},   the total number of FRF-pushes is $O(m)$.   Therefore, there are   $O( m \log_k n)$ occurrences of normal nodes that are incident to bi-abundant arcs.  

Finally, we consider occurrences of violating nodes, all of which are added to $F$ when they become newly violating.  By Lemma \ref{lem:newlyviolating}, at most $2m+n$ newly violating nodes were added to $F$. After being added to $F$, a violating node will receive flow within $2Q$ scaling phases, at which point it will become normal. Thus, there are $O(m \log_k n)$ occurrences of violating nodes.
\end{proof}

\subsection{The number of scaling phases}

We say that a scaling phase is \textit{useful} if there is at least one push during the scaling phase.

\begin{lem}
\label{lem:usefulphases}
The number of useful scaling phases of the Enhanced LMES Algorithm is $O(m \log_k n).$
\end{lem}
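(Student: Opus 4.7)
The plan is to charge each useful phase to a countable event whose total count is $O(m\log_k n)$. Consider a useful phase $\pi$ with scaling parameter $\Delta$, and let $v$ denote the node from which the first push of $\pi$ emanates. By property V2, every violating node lies in $F$, so $v$ must either be in $F$, be a normal node outside $F$, or be a special node.

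I would first handle the two easy cases. If $v\in F$, then $F\neq\emptyset$ throughout $\pi$, and by Corollary \ref{cor:FRFphases} there are only $O(m\log_k n)$ such phases. If $v\notin F$ and $v$ is normal, then $\pi$ contributes to the occurrence count of normal nodes, and by Lemma \ref{lem:normal} this count is also $O(m\log_k n)$.

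The delicate case is $v\notin F$ and $v$ special. The plan here is to combine a short-lifetime observation with a charging argument. Since $v\notin F$, the FRF operations at the start of $\pi$ do not touch $v$, so the medium or large excess $\hat e(v,\Delta)\ge\Delta/k$ at the moment of the first push is exactly the excess inherited from the end of the previous phase. Applying Lemma \ref{lem:emod} to the state immediately after $v$'s push gives $\hat e(v,\Delta)\le 3\epsilon^4\Delta$, and because the parameter regime gives $3\epsilon^4<1/k^2$, a short calculation yields $\hat e(v,\Delta/k)<\Delta/k^2$ — strictly below the medium threshold of the next phase. Hence a special $v$ cannot be the first pusher of two consecutive phases without intervening fresh incoming flow. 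Moreover, $v$ being special with $e(v)\gtrsim\Delta/k$ together with $|\text{IMB}(v,\Delta)|\le\epsilon^4\Delta$ forces a net anti-abundant residual of size $\gtrsim\Delta/k$ out of $v$; I would use this to invoke a variant of the Contraction Lemma to argue that within $O(\log_k n)$ subsequent phases either $v$ becomes incident to a medium arc (accounted by Lemma \ref{lem:mediumarcs}), $v$ is contracted (at most $n-1$ contractions in total), or $v$ leaves the special class and hence is accounted for by Case 1 or Case 2.

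The main obstacle is making the Case 3 charging rigorous. The challenge is that the Contraction Lemma as stated applies to $\epsilon^4\Delta$-imbalanced nodes, while special nodes are by definition $\epsilon^4\Delta$-balanced, so one must derive a surrogate imbalance from the asymmetry in anti-abundant residuals and track it across scales. I would verify that each terminating event (a newly medium arc, a contraction, or $v$ becoming normal/violating) is charged by at most $O(\log_k n)$ Case 3 phases — preventing the chain of charges from exceeding $O(m\log_k n)$ in total — likely via a potential-function argument tracking $e(v)\bmod \Delta/k$ combined with Lemma \ref{lem:emod}. Once Case 3 is bounded, summing the three cases yields the desired $O(m\log_k n)$ bound.
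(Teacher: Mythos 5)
Your Cases 1 and 2 are exactly the paper's argument: phases with $F \neq \emptyset$ are counted by Corollary \ref{cor:FRFphases}, and phases containing a normal node are counted by Lemma \ref{lem:normal}. The gap is your Case 3, and it is a gap you acknowledge yourself (``the main obstacle is making the Case 3 charging rigorous,'' ``likely via a potential-function argument''). The deeper problem is that Case 3 is an artifact of decomposing by the identity of the \emph{first pusher} rather than by the global state of the phase. If the first push comes from a special node but the phase nonetheless contains a normal node or has $F \neq \emptyset$, the phase is already counted by Cases 1--2 and needs no separate argument. The only genuinely new situation would be a phase with $F = \emptyset$ and no normal node at all, and the paper's point is that such a phase contains \emph{no push whatsoever}: every node other than $s,t$ is then special, a node special in the previous phase ends that phase below its medium threshold, and Lemma \ref{lem:emod} (the excess is within $3\epsilon^4$ of a multiple of the threshold) then forces its excess to be $O(\epsilon^4)$ times the old parameter, hence far below $\frac{\Delta}{k}$ at the start of the new phase. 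So the while-loop of \textit{ScalingPhase} never executes and the phase is useless, not merely rare. Your plan tries to bound the \emph{number} of such phases by a charging scheme, which is the wrong target.

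Two specific technical missteps in your Case 3 sketch: first, Lemma \ref{lem:emod} bounds $\text{Mod}\bigl(\hat e(v), \frac{\Delta}{k}\bigr)$, not $\hat e(v)$ itself, so ``applying it immediately after $v$'s push'' gives $\hat e(v)\le 3\epsilon^4\Delta$ only if you separately know $\hat e(v) < \frac{\Delta}{k}$ (true after a medium push, false in general after a large one). Second, the proposed ``surrogate imbalance'' conflicts with the definition of special: a special node satisfies $|\text{IMB}(v,\Delta)| \le \epsilon^4\Delta$, so the Contraction Lemma (which needs $(\epsilon^4\Delta)$-imbalance) cannot be invoked for it, and a large excess at a special node does not force a large net anti-abundant residual out of $v$ --- the excess can sit entirely against abundant arcs. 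Replacing Case 3 by the observation ``all-special and $F=\emptyset$ implies no push'' (via Lemma \ref{lem:emod} and the termination condition of the previous phase) closes the argument and recovers the paper's proof.
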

	
\begin{proof}
By Corollary \ref{cor:FRFphases}, the number of scaling phases in which $F \neq \emptyset$  is $O(m \log_k n)$.  By Lemma \ref{lem:normal}, the number of phases in which there is a normal node is $O(m \log_k n)$.  If there is no normal node, and if $F = \emptyset$, then there is no push during the scaling phase, and the scaling phase is useless.
\end{proof}

\begin{cor}
\label{cor:phases}
The number of  scaling phases of the Enhanced LMES Algorithm is $O(m \log_k n)$.
\end{cor}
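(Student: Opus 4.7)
The plan is to derive Corollary \ref{cor:phases} from Lemma \ref{lem:usefulphases} by separately bounding the \emph{useless} scaling phases (those with no push). Partition all phases into four categories: (i) useful phases; (ii) useless phases with $F \neq \emptyset$; (iii) useless phases with $F = \emptyset$ that contain at least one normal node; and (iv) useless phases with $F = \emptyset$ in which every node is special. Categories (i), (ii), (iii) are already $O(m \log_k n)$: the first by Lemma \ref{lem:usefulphases}, the second by Corollary \ref{cor:FRFphases}, and the third because each such phase contributes an occurrence of a normal node, and the total number of such occurrences is $O(m \log_k n)$ by Lemma \ref{lem:normal}. So the substance of the proof is bounding category (iv).

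For (iv), I would use the scaling-parameter jump in \textit{Get-Next-Scaling-Parameter}. When $F = \emptyset$ and every node is special, property S4 gives $|e(v)| \le 1.5 \epsilon^4 \Delta$ for all $v$, so
\[
\Gamma_1 \le 1.5 \epsilon^4 \Delta, \qquad \Gamma_2 \le \epsilon^3 \cdot 1.5 \epsilon^4 \Delta = 1.5 \epsilon^7 \Delta,
\]
hence $\Gamma \le 1.5 \epsilon^4 \Delta$. Using $\epsilon = k^{-Q}$ and $M = k^{2Q}$, one checks $1.5 \epsilon^4 < 1/(kM)$ for $Q \ge 1$, so $\Gamma < \Delta/(kM)$ and the procedure triggers the jump $\Delta \mapsto \Gamma$.

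The remaining step, and the most delicate, is to show that the phase immediately after the jump cannot again be of type (iv). Consider $\Delta_{\text{new}} := \Gamma$. If $\Gamma = \Gamma_1$, some node $v$ has $e(v) = \Delta_{\text{new}}$; in the new phase $v$ is special or normal and in either case $\hat e(v) \ge \Delta_{\text{new}}/2$, so $v$ has large excess and a push occurs, putting the new phase in category (i). If $\Gamma = \Gamma_2$, some $v$ has $e(v) = -\Delta_{\text{new}}/\epsilon^3$, which is far more negative than $-1.5 \epsilon^4 \Delta_{\text{new}}$, so $v$ cannot be special in the new phase; since $e(v) < \epsilon \Delta_{\text{new}}$, $v$ is violating, and since $v \notin F$ (as $F = \emptyset$ in the old phase) it becomes newly violating, so $F \neq \emptyset$ in the new phase, putting it in category (ii). Either way, the map $p \mapsto p{+}1$ sends category (iv) injectively into categories (i)$\,\cup\,$(ii) (plus possibly one terminating phase), so the count of (iv) is also $O(m \log_k n)$.

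The main obstacle is the careful accounting in category (iv): verifying that the parameter arithmetic actually triggers the jump (this boils down to the comparison $\epsilon^4 < 1/(kM)$), and then confirming that the node realizing $\Gamma$ does produce either a push or a new FRF entry in the subsequent phase. Everything else is a straightforward bookkeeping combination of Lemmas \ref{lem:usefulphases}, \ref{lem:normal}, and Corollary \ref{cor:FRFphases}.
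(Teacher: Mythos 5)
Your proposal is correct and rests on the same basic decomposition as the paper: the only phases not already covered by Lemma \ref{lem:usefulphases}, Lemma \ref{lem:normal} and Corollary \ref{cor:FRFphases} are those with $F = \emptyset$ in which every node other than $s,t$ is special, and these are handled through \textit{Get-Next-Scaling-Parameter}. Where you diverge is in how that last case is counted. The paper does not verify that the jump condition $\Gamma < \frac{\Delta}{kM}$ always holds; it treats both branches, arguing in the ``else'' branch that repeated division by $k$ produces a normal or violating node within $2Q+2$ phases, and then bounds the all-special phases by (number of newly normal or newly violating nodes, which is $O(m)$) times the run length $O(\log_k n)$. You instead prove the else branch never fires in an all-special phase, so each such phase is immediately followed by a useful phase or one with $F \neq \emptyset$, and you finish with an injection $p \mapsto p+1$ into the already-bounded categories. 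This is a cleaner and slightly sharper accounting, at the cost of leaning on the quantitative properties of special nodes: note that the bound $|e(v)| \le 1.5\epsilon^4\Delta$ at the end of the phase is not given by S4 alone (S4 is only the lower bound $e(v) \ge -1.5\epsilon^4\Delta$); the upper bound uses the fact that the phase terminated with no medium-excess node together with S5 (equivalently Lemma \ref{lem:emod}), which forces $\hat e(v) \le 3\epsilon^4\Delta$ for special $v$. With that attribution fixed, the parameter arithmetic ($1.5\epsilon^4 < \epsilon^2/k$ since $k^{2Q-1} > 1.5$) and the verification that the node realizing $\Gamma$ yields either a large-excess push or a newly violating node (hence a nonempty $F$) go through, so your argument is a valid, and in this respect tighter, alternative to the paper's two-case proof.
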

\begin{proof}
We need to consider scaling phases in which $F = \emptyset$ and no node is normal.  
This implies that every node must be special except for $s$ and $t$.   We  claim that there will be a new normal node or a new violating node within $2Q+2$ scaling phases.
Let  
$\Gamma  = \max \{\Gamma_1 , \Gamma_2 \}$, as selected in Step 04 of \textit{Get-Next-Scaling-Parameter}.  Consider first the case that $\Gamma < \frac{\Delta}{kM}$.   In this case, the next scaling parameter is $\Gamma$.
If $\Gamma  = \Gamma_1$, then there will be a normal node at the $\Gamma$-scaling phase.  If $\Gamma  = \Gamma_2$, then there will be violating node at the $\Gamma$-scaling phase, after which $F$ becomes non-empty.   Therefore, the total number of scaling phases is  $O(m \log_k n)$.

Now consider the case that $\Gamma > \frac{\Delta}{kM}$.   In this case, we continue to divide scaling parameters by $k$ until there is a normal node or a violating node.  This will occur within $2Q+2$ scaling phases, and which point the scaling parameter is $\Delta'$ and $\Delta' < \frac{\Delta}{kM}$.

The number of newly normal nodes is bounded at most $n$ plus the number of newly special nodes, which is  $O(m)$.   The number of newly violating nodes is at most the number of newly special nodes, which is $O(m)$.  Thus, the number of useless phases is 
$O(m \log_k n)$.
\end{proof}

\begin{cor}
\label{cor:terminate}
The Enhanced LMES Algorithm terminates with a maximum flow.
\end{cor}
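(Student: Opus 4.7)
The plan is to assemble three ingredients already available in the paper: the phase count from Corollary \ref{cor:phases}, the standard push--relabel optimality certificate, and the expansion procedure. First I would establish termination. The main while loop of the Enhanced-LMES Algorithm executes one iteration per scaling phase, and Corollary \ref{cor:phases} caps the total number of scaling phases at $O(m \log_k n)$. Within each phase the work is finite: Lemmas \ref{lem:notmedium} and \ref{lem:totalpushes} bound the pushes, Lemma \ref{lem:labelscontract} bounds the relabels, and Theorem \ref{th:FRFflows} bounds the FRF operations. Hence the outer loop must exit after finitely many iterations. Its exit condition is precisely that $e(v)=0$ for every $v\neq s,t$, so the terminal preflow on the (possibly contracted) graph $G^c$ is in fact a feasible $s$-$t$ flow.

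Next I would show this terminal flow is maximum in $G^c$. The distance labels satisfy Invariant \ref{inv:validity} throughout: \emph{Enhanced-LMES-Push} pushes only on admissible arcs, \emph{Relabel} preserves validity by the classical Goldberg--Tarjan argument, and after each contraction the distances are reset to exact shortest-path lengths in $G^c(x)$ (Section \ref{sec:contraction}), with $d(s)$ reset to $|N^c|$. Since $d(s)=|N^c|$ at termination, Lemma \ref{lem:distance} implies there is no directed $s$-to-$t$ path in the residual graph $G^c(x)$ at termination, i.e.\ no augmenting path. By the max-flow min-cut theorem the terminal flow is a maximum $s$-$t$ flow in $G^c$.

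Finally, line 04 of the top-level pseudocode expands the contracted cycles in reverse order of contraction. Each such cycle consisted of abundant arcs which, by Corollary \ref{cor:abundance}, retain positive residual capacity throughout the algorithm; consequently a suitable circulation around each cycle can rebalance the flow at the original nodes without violating capacity bounds, as carried out by the expansion procedure of Section \ref{sec:expansion}. This expansion preserves the net flow into $t$, so the resulting $s$-$t$ flow in $G$ has the same value as the maximum flow in $G^c$; by Lemma \ref{lem:maxpreflow} this is the maximum $s$-$t$ flow value in $G$. The main subtlety I foresee is the loop-exit verification: one must rule out the algorithm getting stuck with a tiny but nonzero excess. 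This is precisely what \emph{Get-Next-Scaling-Parameter} prevents, by collapsing $\Delta$ directly to $\max\{\Gamma_1,\Gamma_2\}$ whenever the next $2Q+1$ phases would otherwise be useless, so that every scaling phase either makes progress or empties $F$; Corollary \ref{cor:phases} packages this into the finite-phase bound needed above.
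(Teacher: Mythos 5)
Your proposal is correct and follows essentially the same route as the paper's own (very terse) proof: termination from the bounded number of scaling phases (Corollary \ref{cor:phases}), optimality in the contracted graph from the zero-excess termination criterion together with the standard push-relabel/valid-distance-label argument, and then expansion of the contracted cycles to recover a maximum flow in the original network. You simply supply more of the supporting detail (finite work per phase, validity of labels after contraction, the role of \emph{Get-Next-Scaling-Parameter}) that the paper leaves implicit.
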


\begin{proof} 
  The algorithm must terminate since there are a bounded number of scaling phases.   The termination criterion is that there are no nodes with excess, which is the optimality criterion for the generic push-relabel algorithm.   The flow is optimum in the contracted graph.   Subsequently, the contracted abundant cycles are expanded, resulting in an optimal flow in the original graph.  
  \end{proof}

\subsection{An upper bound on $\hat e(v, \Delta)$}

Let $N(\Delta)$ be the set of nodes (including merged nodes) at the beginning of the $\Delta$-scaling phase. In this section, we show that for all nodes $v \in N(\Delta)$,   $\hat e(v, \Delta) \le \Delta + (k-1) \epsilon \Delta $.  This upper bound would be obvious except for the impact of contractions on node excesses.

Suppose now that $w$ is a merged node that is created  during the $\Delta$-scaling phase by the contraction of an abundant cycle $W$.   When $w$ is created,  
$e(w) = \sum_{v\in W} e(v)$. 

It is possible that $e(w)$ is proportional to $|W| \Delta$, which violates the upper bound on excess that is  needed for the complexity argument.   The assumption that $e(v) = O(\Delta)$ is needed to bound the increase in potential functions for relabels of nodes.   We bypass this difficulty by evaluating potential functions in the $\Delta$-scaling phase by summing over nodes of $N(\Delta)$ and not including the merged nodes obtained by contractions during the $\Delta$-scaling phase.  

Suppose that $w$ is created by the merging of one or more abundant cycles during the $\Delta$-scaling phase.  Let $W$ denote the nodes of $N(\Delta)$ that were nodes of any of the cycles that were merged into node $w$. 

For a given preflow $x$, for all $v \in W$, we define 
$\hat e_x(v, \Delta)$ as follows:
\begin{enumerate}
\item If $v \notin Roots$, then 
$\hat e_x(v) = \frac{e_x(w) - \epsilon \Delta }{|W|}$.
\item If $v \in Roots$, then 
$\hat e_x(v) = \frac{e_x(w) - \epsilon \Delta - Reserve(v)}{|W|} $.
\end{enumerate}

\begin{lem}
\label{lem:ehat}
If $v \in N(\Delta)$, then throughout the   $\Delta$-scaling phase  of the Enhanced LMES Algorithm, $\hat e(v, \Delta) < (1 + (k-1) \epsilon ) \Delta$.  
\end{lem}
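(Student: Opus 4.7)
The plan is to prove $\hat e(v,\Delta)<(1+(k-1)\epsilon)\Delta$ by induction on the sequence of operations performed during the $\Delta$-scaling phase, verifying the bound at the start of the phase and then showing each subsequent operation preserves it. The authors flag contraction as the obstacle; the rest will be routine given the earlier lemmas and the push/select rules.

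For the base case I would check each flavor of node at the start of the $\Delta$-scaling phase. The tight case is a node $v$ that was normal at the end of the preceding $\Delta'$-phase, where $\Delta'=k\Delta$ (the ordinary scaling step). The phase-termination condition yields $\hat e(v,\Delta')<\Delta'/k=\Delta$, so $e(v)=\hat e(v,\Delta')+\epsilon\Delta'<\Delta+k\epsilon\Delta$, and hence $\hat e(v,\Delta)=e(v)-\epsilon\Delta<\Delta+(k-1)\epsilon\Delta$. This is precisely where the $(k-1)\epsilon\Delta$ slack originates, and it is the reason the bound is not simply $\Delta$. The special, root, and violating sub-cases are easier: the $+1.5\epsilon^4\Delta$ correction is negligible against the $\Delta/k$ termination slack, the $-Reserve(v)$ term only tightens the estimate, and violating $\hat e$ is negative. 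The case in which \textit{Get-Next-Scaling-Parameter} bypasses $\Delta/k$ in favor of a smaller $\Gamma\le\Delta/(kM)$ is easier still, since the resulting slack shrinks by a factor of $M$.

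For the inductive step on pushes I would invoke \textit{Enhanced-LMES-Select-Node}. When \textit{Enhanced-LMES-Push}$(i,j,\Delta)$ fires on an admissible arc, $d(i)=d(j)+1$, and either $i$ minimizes $d$ over LargeSet (so $j\notin \mathrm{LargeSet}$, since $d(j)<d(i)$) or LargeSet is empty (so $j$ is at best medium). Either way $\hat e(j)<\Delta/2$ before the push, and the push size $\delta\le D\le\Delta/2$ keeps $\hat e(j)<\Delta$. FRF pushes and pulls of $\Delta/k$ units raise the excess only at a violating leaf whose $\hat e$ was negative; relabels do not change $e$; \textit{FRF-Add} only reclassifies nodes, and converting a normal node into a root subtracts $Reserve$ and hence decreases $\hat e$; and the discussion right after the pseudo-code of \textit{FRF-Delete} already shows that the combined $Reserve$ decrement and optional pull produce no net increase in the affected root's modified excess.

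The main obstacle is contraction. When an abundant cycle $W$ is merged into $w$, each $v\in W$ has $\hat e(v,\Delta)$ redefined as $(e(w)-\epsilon\Delta-Reserve(v))/|W|$. Applying the inductive hypothesis to each $u\in W$, one gets $e(u)<(1+k\epsilon)\Delta$, with at most one $u$ contributing an extra $Reserve(u)\le n\epsilon^2\Delta/k<\epsilon\Delta$ (since a mergeable arc has at most one endpoint in $F$). Summing yields $e(w)<|W|(1+k\epsilon)\Delta+\epsilon\Delta$, and dividing by $|W|$ after the $-\epsilon\Delta$ subtraction gives only $\hat e(v)<(1+k\epsilon)\Delta$, which overshoots the target by $\epsilon\Delta$. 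Closing this gap is the delicate step, and I would approach it by strengthening the inductive hypothesis at the moment of contraction: a cycle becomes contractible only once its arcs are bi-abundant (requiring substantial flow in both directions on each arc), and the accompanying pushes must have drained enough excess from at least one node of $W$ to recover the missing $\epsilon\Delta$. Converting this intuition into a pointwise bound — translating cycle-abundance into restrictions on how excess can be distributed over $W$ — is where I expect the real technical content of the proof to lie.
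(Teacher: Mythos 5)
Your base case and push step are fine and coincide with the paper's cases (2) and (1): the start-of-phase bound comes from the termination condition of the $(k\Delta)$-phase exactly as you compute it, and your justification that the receiving node $j$ of a push has $\hat e(j)<\frac{\Delta}{2}$ (because $d(j)=d(i)-1$ while $i$ minimizes $d$ over LargeSet, or LargeSet is empty) together with $\delta\le\frac{\Delta}{2}$ is in fact more careful than the paper's one-line assertion. The genuine gap is the contraction case, which you leave unproven: you correctly compute that the naive inductive step only gives $\hat e(v)<(1+k\epsilon)\Delta$, and then you propose to recover the missing $\epsilon\Delta$ by arguing that the pushes which made the cycle bi-abundant must have drained excess from some node of $W$. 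That is speculative and there is no reason to expect it to hold pointwise: bi-abundance is a statement about residual capacities, not excesses, and a two-node cycle can become bi-abundant and be contracted while both endpoints are normal nodes whose excesses are near the upper bound. (A secondary slip: your estimate $Reserve(u)\le \frac{n\epsilon^2\Delta}{k}<\epsilon\Delta$ only bounds the reserve added at the current phase; $NeededFlow$ values assigned at earlier phases $\Delta'\le\epsilon^{-2}\Delta$ can each be as large as $\frac{\Delta}{k}$, so a contracted root's reserve is not controlled the way you claim.)

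For comparison, the paper handles contraction by a single averaging assertion: the post-contraction modified excess is bounded by the average of the pre-contraction modified excesses of the nodes of $W$, and the inductive hypothesis is applied to that average. If you redo that average with your level of care, it leaks exactly the discrepancy you found (only one $\epsilon\Delta$ is subtracted from $e(w)$, versus one per non-special node of $W$, plus a reserve term when a root is contracted), so what you uncovered is a looseness in the paper's stated constant rather than a deep obstruction. The productive fix is not a strengthened inductive hypothesis but the observation that the exact constant is immaterial: every downstream use of the lemma (the initial values and relabel contributions of $\Phi_3$ and $\Phi_4$ in Lemmas \ref{lem:totalflow} and \ref{lem:totalpushes}) only needs $\hat e(v,\Delta)=O(\Delta)$ with a fixed constant, and your own computation, carried through the at most $n$ contractions with the per-contraction loss you identified (at most $\epsilon\Delta$ each, and $n\epsilon\le\frac{1}{4}$), already yields such a bound. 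As written, however, your argument stops short of establishing the stated inequality, so the proposal is incomplete at precisely the step you flagged.
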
 

\begin{proof}
The operations that can lead to an increase in $\hat e(v)$ are the following:   (1) a  push into node $v$, (2) the beginning of a scaling phase, and (3) a contraction including node $v$ or a node that $v$ has been merged into.  

\begin{itemize}
\item [(1)]  If there is a push into node $v$, then $\hat e(v, \Delta) < \frac{\Delta}{2}$ prior to the push, and $\hat e(v, \Delta) < \Delta$ after the push.  

\item  [(2)] At the end of the previous scaling phase (i.e, the $(k\Delta)$-scaling phase),  
$\hat e(v, k\Delta) = e(v) - \epsilon k \Delta < \frac{(k\Delta)}{k} = \Delta$.  At the beginning of the $\Delta$-scaling phase, $\hat e(v,\Delta) = e(v) - \epsilon \Delta < \Delta + (k-1) \epsilon \Delta $.

\item [(3)]  Suppose that node $w$ is a contracted node that is formed in the $\Delta$-scaling phase.  Let $W$ denote the set of nodes of $N(\Delta)$ that are contracted into node $w$.   For each node $v \in W$, let  $\bar e(v, \Delta)$ denote the modified excess prior to contraction, and let $\hat e(v, \Delta)$ denote the modified excess after contraction.   Assume inductively that for each node $v \in W$, 
 $\bar e(v, \Delta) < (1 + (k-1) \epsilon ) \Delta$.   Then  
 
$$  \hat e(v, \Delta) < \sum_{v \in W} \frac{\hat e(v, \Delta)}{|W|} 
< (1 + (k-1) \epsilon ) \Delta.  $$
\end{itemize}
This completes the proof.
 \end{proof}

\subsection{Bounding the total flow sent in a scaling phase}
\label{totalflowsection}

In this subsection, we  bound the total flow in a scaling phase.  In order to bound the total flow sent in a scaling phase, we  replace the potential function $\Phi_1$ used in Section \ref{sec:LMES} with a similar potential function  $\Phi_3(\Delta)$ defined next.   Recall that $N(\Delta)$ is the set of nodes (both original and merged) at the beginning of the 
$\Delta$-scaling phase.   Let $R(\Delta)$ denote the set of root nodes of $F$ at the beginning of the $\Delta$-scaling phase.

$$  \Phi_3(\Delta) = \sum_{v \in N(\Delta) \backslash R(\Delta) }  \frac{e(v)d(v)}{\Delta} 
+ \sum_{v \in  R(\Delta) }  \frac{(e(v) - Reserve(v))\cdot d(v)}{\Delta}.$$

\begin{lem}
\label{lem:totalflow}
The total amount of flow pushed in any arc in the $\Delta$-scaling phase of the Enhanced LMES Algorithm is at most $5n^2 \Delta$.
\end{lem}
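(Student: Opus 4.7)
The plan is to extend the potential-function argument of Lemma~\ref{lem:mostpush} using $\Phi_3$ in place of $\Phi_1$, accounting for the new operations introduced by the Enhanced LMES (contractions, FRF-Push, FRF-Pull, and the $Reserve$ updates inside \textit{FRF-Add} and \textit{FRF-Delete}). The core identity is that each \textit{Enhanced-LMES-Push} of $\delta$ units along an admissible arc $(i,j)$ decreases $\Phi_3$ by exactly $\delta/\Delta$: the node-$i$ term drops by $\delta d(i)/\Delta$ and the node-$j$ term rises by $\delta d(j)/\Delta$ (an ordinary push does not touch any $Reserve$), and admissibility gives $d(i)=d(j)+1$. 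Hence the flow moved by these pushes during the phase is at most $\Delta$ times the total decrease of $\Phi_3$, which is at most $\Phi_3^{\mathrm{init}}$ plus the total $\Phi_3$-increase during the phase.

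The initial value satisfies $\Phi_3^{\mathrm{init}}<2n^2$: Lemma~\ref{lem:ehat} gives $\hat e(v)\le (1+(k-1)\epsilon)\Delta$, which translates to $e(v)\le (1+k\epsilon)\Delta$ for non-roots and $e(v)-Reserve(v)\le (1+k\epsilon)\Delta$ for roots, and combining with $d(v)\le n+1$ and $k\epsilon<1$ yields the bound. Next I enumerate the sources of increase in $\Phi_3$ during the phase. Relabels contribute at most $2n^2$: each unit increase in a distance label raises $\Phi_3$ by at most $1+k\epsilon$, and the total $d$-increase across all nodes is at most $n(n+1)$ by Lemma~\ref{lem:labelscontract}. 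A contraction of an abundant cycle $W$ into a merged node $w$ produces $d_{\mathrm{new}}(v)=d(w)\le\min_{v\in W}d_{\mathrm{old}}(v)$ and $\sum_{v\in W}e_{\mathrm{new}}(v)=e(w)$, so $\sum_{v\in W}e_{\mathrm{new}}(v)d_{\mathrm{new}}(v)=e(w)d(w)\le\sum_{v\in W}e_{\mathrm{old}}(v)d_{\mathrm{old}}(v)$ provided every $v\in W$ has non-negative excess and is not a root of $F$ with positive $Reserve$; this holds because mergeable arcs require at most one endpoint in $F$, and any contracted node is neither violating nor a root maintaining a positive $Reserve$. An FRF-Push or FRF-Pull moves $\Delta/k$ along an FRF-eligible path of length at most $n$, so the change in $\Phi_3$ is bounded by $n/k$ per operation in absolute value; with $O(n)$ such operations per phase the net contribution is $O(n^2/k)$. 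Finally, a $Reserve$ decrease of at most $\Delta/k$ in \textit{FRF-Delete} raises $\Phi_3$ by at most $(n+1)/k$; summed over $O(n)$ deletions per phase the contribution is again $O(n^2/k)$.

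Adding these, the total increase in $\Phi_3$ during the phase is at most $2n^2 + O(n^2/k)<3n^2$, so the decrease attributable to Enhanced-LMES pushes is at most $5n^2$, bounding the flow they carry by $5n^2\Delta$. The flow moved by FRF operations per phase is at most the number of such operations times $\Delta/k$, which is $O(n\Delta/k)$ and is absorbed into the $5n^2\Delta$ bound. I expect the main obstacle to be the contraction case: one must carefully verify that when an abundant cycle is contracted, every node of the cycle has non-negative $e$ (i.e., none is violating) and none is a root carrying a positive $Reserve$, so that the averaging cleanly leaves $\Phi_3$ unchanged or decreased. This should follow from the design invariant that a bi-abundant arc is not merged when any of its endpoints lies in $F$, combined with properties \textbf{FRF~1}--\textbf{FRF~5} of the flow-return forest.
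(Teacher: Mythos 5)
Your proposal is correct and follows essentially the same route as the paper's proof: bound the flow moved by ordinary pushes via the decrease of $\Phi_3$ (initial value plus increases from relabels, FRF-Push/FRF-Pull, and $Reserve$ decreases, with contractions non-increasing), and account for the flow carried by FRF operations separately, which is $O(n\Delta/k)$ per phase. The only difference is that you make explicit the verification that contractions do not increase $\Phi_3$, which the paper handles by its stated convention of assigning $e(v)=e(w)/|W|$ and $d(v)=d(w)$ to the nodes of a contracted cycle; this is a constant-level bookkeeping point rather than a divergence in method.
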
 

\begin{proof}
The flows in  the $\Delta$-scaling phase can be  (1) pushes in admissible arcs or (2) flow sent on paths of $F$ in the procedures \textit{FRF-Push} and \textit{FRF-Pull}.   We refer to the latter as FRF-flows.   

We first consider the FRF-flows. In the proof of Theorem \ref{th:FRFflows}, \textit{FRF-Delete}  can be called at most $2n$ times in a scaling phase.  Accordingly, \textit{FRF-Push}  can be called at most $2n$ times during a scaling phase.  And \textit{FRF-Pull}  can be called at most $4n$ times during a scaling phase.
Thus the total amount of flow sent in  \textit{FRF-Push}  and \textit{FRF-Pull} per scaling phase is less than $ \frac{6n\Delta}{k}$, which is less than $1.5 n \Delta$.

We now consider flow sent in admissible arcs. 
Each push of $\delta$ units in an admissible arc reduces $\Phi_3(\Delta)$ by $\frac{\delta}{\Delta}$.   Therefore, the total flow in the $\Delta$-scaling phase divided by $\Delta$ is at most the total decrease in
$\Phi_3(\Delta)$, which is at most the initial value of $ \Phi_3(\Delta) $  plus the total increase in $\Phi_3(\Delta)$ during the scaling phase.  

The initial value of $\Phi_3(\Delta)$ is less than $(n+1)(n-2)(1 + (k-1)\epsilon) < n^2$.
The increase in $\Phi_3(\Delta)$ due to relabels is less than $(n+1)(n-2)(1 + (k-1)\epsilon ) < n^2$.

Contractions do not lead to an increase in $\Phi_3(\Delta)$.  However, calls of \textit{FRF-Push}  and \textit{FRF-Pull} may lead to an increase in $\Phi_3(\Delta)$.  In addition, decreases in $Reserve(v)$ for $v \in R(\Delta)$  lead to an increase in  $e(v) - Reserve(v)$, and thus to an increase in $\Phi_3(\Delta)$.   

When sending flow from node $i$ to node $v$ in $F$, it is possible that $d(i) < d(v)$, and that $\Phi_3(\Delta)$ will increase.  When  \textit{FRF-Push}  or \textit{FRF-Pull}  is called,  that maximum possible increase in $\Phi_3(\Delta)$ is at most $\frac{n}{k}$.  The total increase in $\Phi_3(\Delta)$ due to FRF flows is at most   $6n \cdot \frac{n}{k}$, which is at most $1.5 n^2$.  

$Reserve(\,)$ is decreased only when a previously violating  node becomes non-violating and is deleted from $F$.  Each decrease in $Reserve(\,)$ leads to an increase in $\Phi_3(\Delta)$ by at most $\frac{n}{k}$.   This occurs at most $2n$ times in a scaling phase.  The total increase in $\Phi_3(\Delta)$ due to decreases in $Reserve(\,)$ is at most  $2n \cdot \frac{n}{k}$, which is at most $.5 n^2$. 

Thus, the total increase in $\Phi_3(\Delta)$ during the $\Delta$-scaling phase is less than $3n^2$. The initial value of  $\Phi_3(\Delta)$ plus the total increase in $\Phi_3(\Delta)$ during the $\Delta$-scaling phase is less than 
$4n^2$. And the total amount of flow sent in the $\Delta$-scaling phase is pushes in any arc $(i, j)$ is most $4 n^2 \Delta$.   If we include the contribution of FRF-pushes as well, then the total amount of flow sent in any arc $(i, j)$ is most $5 n^2 \Delta$. 
\end{proof}

\begin{cor}
\label{cor:abundance}
If $(i, j)$ is abundant at the beginning of the $\Delta$-scaling phase of the Enhanced LMES, then for every scaling parameter $\Delta' \le \Delta $ the following are true. 
\begin{enumerate}
\item   $(i, j)$ is abundant in the $\Delta'$-scaling phase and
\item   $r_{ij} > 0$ throughout the $\Delta'$-scaling phase.
\end{enumerate}
\end{cor}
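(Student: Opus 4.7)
My plan is to prove Corollary \ref{cor:abundance} as a direct consequence of Lemma \ref{lem:totalflow}, by induction on scaling phases. The base case is $\Delta' = \Delta$, where abundance holds by hypothesis. The inductive step will show that if $(i,j)$ is abundant at the start of some scaling phase with parameter $\Delta'\le\Delta$, then $r_{ij}>0$ throughout that phase and $(i,j)$ remains abundant at the start of the next scaling phase.

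For the within-phase assertion, Lemma \ref{lem:totalflow} bounds the total flow pushed during the $\Delta'$-scaling phase by $5n^2\Delta'$, which in particular dominates the net change in residual capacity of the single arc $(i,j)$. Since $r_{ij}\ge M\Delta' = 16n^2\Delta'$ at the beginning of the phase, we get $r_{ij}\ge 16n^2\Delta' - 5n^2\Delta' = 11n^2\Delta' > 0$ throughout the phase, which is Part 2 and leaves $r_{ij}\ge 11n^2\Delta'$ at the phase's end.

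For the transition, let $\Delta''$ denote the scaling parameter chosen by \textit{Get-Next-Scaling-Parameter}. In the ``usual'' branch we have $\Delta'' = \Delta'/k$, and in the ``jump'' branch we have $\Delta''\le\Delta'/(kM)$; either way $\Delta''\le\Delta'/k\le\Delta'/2$ since $k\ge 2$. Combining $r_{ij}\ge 11n^2\Delta'$ with $\Delta''\le\Delta'/2$ yields $r_{ij}\ge 22n^2\Delta'' > 16n^2\Delta'' = M\Delta''$, which is exactly the definition of abundance at the start of the $\Delta''$-scaling phase. This closes the induction and delivers Part 1 at all subsequent $\Delta'$.

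The only point requiring even mild attention is that Lemma \ref{lem:totalflow} must be invoked with the scaling parameter of the current phase ($\Delta'$), not the original $\Delta$; beyond that, everything reduces to an arithmetic check involving the constants $M=16n^2$ and the $5n^2$ from Lemma \ref{lem:totalflow}. The ``jump'' branch of \textit{Get-Next-Scaling-Parameter} superficially looks like it could destabilize the induction by allowing the scaling parameter to drop sharply, but since it only decreases $\Delta''$ further relative to $\Delta'$, it can only strengthen the inequality $M\Delta''\le r_{ij}$, so no genuine obstacle arises.
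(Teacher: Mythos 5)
Your proof is correct and follows essentially the same route as the paper: invoke Lemma \ref{lem:totalflow} to bound the within-phase decrease of $r_{ij}$ by $5n^2\Delta'$ against $M\Delta' = 16n^2\Delta'$, then use the fact that the next scaling parameter is at most half the current one to conclude abundance persists, iterating over phases. Your explicit induction and your remark about the ``jump'' branch of \textit{Get-Next-Scaling-Parameter} merely spell out what the paper leaves implicit.
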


\begin{proof}
If $(i, j)$ is abundant at the beginning of the $\Delta$-scaling phase, then $r_{ij} \ge M \Delta$ at the beginning of the phase.  By Lemma \ref{lem:totalflow}, 
$r_{ij} \ge (M - 5n^2) \Delta > .5 M \Delta $ at the end of the $\Delta$-scaling phase.  It follows that $(i, j)$ is abundant at the beginning of the next scaling phase. 
\end{proof}

\bigskip

\subsection{The total number of pushes}

In this subsection, we bound the total number of pushes.

\begin{lem}
\label{lem:totalpushes}
 The total number of medium and large pushes in the Enhanced LMES Algorithm is $O(k n^2 + nm \log_k n)$.
\end{lem}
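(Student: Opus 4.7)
My plan is to bound large and medium pushes separately by adapting the potential function arguments of Lemmas~\ref{lem:mostpush} and~\ref{lem:mediumpushes} to the Enhanced LMES. For large pushes I will reuse the potential $\Phi_3$ from Lemma~\ref{lem:totalflow}; for medium pushes I will use the potential $\Phi_2$ of Section~\ref{sec:LMES} with $e$ replaced by $\hat e$. The two observations that drive the improvement over the $O(n^2 \log_k U)$ LMES bound are: (i) a special node $v$ has $|\hat e(v)| = O(\epsilon^4 \Delta)$, so its contribution to either potential is at most $O(\epsilon^4 n) = O(1/n^3)$ per phase and is negligible; and (ii) by Lemma~\ref{lem:normal} the total number of occurrences of non-special (normal or violating) nodes across all scaling phases is only $O(m \log_k n)$.

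For large pushes, each push of $\Delta/2$ units decreases $\Phi_3$ by $1/2$, so the count is at most twice the cumulative rise of $\Phi_3$. I would split this rise into (i) phase-boundary jumps as $\Delta$ shrinks by $k$, (ii) relabels, (iii) FRF-pushes and FRF-pulls, and (iv) decreases of $Reserve$. For (i), at the end of each phase $\hat e(v) < \Delta/k$ for every $v$, so at the start of the next phase a non-special node contributes at most $(n+1)(1+(k-1)\epsilon) = O(n)$ while a special node contributes only $O(k\epsilon^4 n) = O(1/n^3)$; summing via Lemma~\ref{lem:normal} yields $O(nm \log_k n)$. For (ii), Lemma~\ref{lem:labelscontract} bounds relabels by $O(n^2)$, each adding $O(1)$. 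For (iii) and (iv), Theorem~\ref{th:FRFflows} counts at most $O(m)$ events of each type, each contributing at most $(\Delta/k)(n+1)/\Delta = O(n/k)$, for a total of $O(nm/k) \le O(nm \log_k n)$. Hence the number of large pushes is $O(n^2 + nm \log_k n)$.

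For medium pushes, each decreases $\Phi_2$ by at least $1/k$, so the count is at most $k$ times the cumulative rise of $\Phi_2$. The central observation, already driving the LMES proof, is that at the instant $\ell$ drops—triggered by a medium push from a node $i$ at the new level $d(i) = \ell_{\text{new}}$—the max-$d$ medium selection rule forces every $j$ with $d(j) > \ell_{\text{new}}$ to satisfy $\hat e(j) < \Delta/k$: any such node with $\hat e(j) \ge \Delta/k$ would have been selected ahead of $i$, since LargeSet is empty and $i$ has the maximum $d$ in MediumSet at that instant. Consequently each $j \in P$ contributes at most $1/k$ to $\Phi_2$ per unit decrease of $\ell$, and only non-special nodes contribute meaningfully (special nodes contribute $O(\epsilon^4/k)$). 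Summing across all phases via Lemma~\ref{lem:normal} bounds the total $\ell$-decrease rise of $\Phi_2$ by $O(nm \log_k n / k)$; relabels contribute $O(n^2)$; and $P$-entries, FRF bookkeeping, and $Reserve$ decreases are lower order. Multiplying by $k$ yields $O(kn^2 + nm \log_k n)$ medium pushes, and adding the two bounds proves the lemma.

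The main obstacle is carrying out the accounting for every way the Enhanced LMES modifies $\hat e$—FRF-pushes, FRF-pulls, contractions, decreases of $Reserve$, and the modified push rule of Subsection~\ref{subsec:newpush}—inside the potential function arguments. In particular one must verify that (a) the ``$\hat e(j) < \Delta/k$ at each $\ell$-decrease for every $j \in P$'' invariant of the LMES proof survives in the Enhanced LMES, which reduces to checking that FRF operations never create a medium-or-large node (they only alter excesses at the two endpoints of a forest path, and a direct check shows each endpoint still has $\hat e$ strictly below $\Delta/k$ afterwards), and (b) the per-phase increases of $\Phi_2$ and $\Phi_3$ coming from FRF operations and $Reserve$ updates, summed over all $O(m \log_k n)$ phases, fit inside the $O(kn^2 + nm \log_k n)$ budget.
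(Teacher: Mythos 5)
Your proposal follows essentially the same route as the paper's proof: $\Phi_3$ for the large pushes, the $\hat e$/$Reserve$-modified version of $\Phi_2$ (the paper's $\Phi_4$) for the medium pushes, the smallness of special-node excesses at phase boundaries and at $\ell$-decreases, Lemma~\ref{lem:normal} to bound the occurrences of non-special nodes by $O(m \log_k n)$, $O(n^2)$ from relabels, and $O(nm/k)$ from FRF-flows and $Reserve$ decreases. Your explicit justification that every node above the new value of $\ell$ has $\hat e < \Delta/k$ at the moment $\ell$ drops (because LargeSet is empty and the pushing node has maximum $d$ in MediumSet) is exactly the fact the paper relies on for the $1/k$-per-node-per-unit-decrease accounting; you state it more carefully than the paper does.

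There is, however, one step in your medium-push accounting that is wrong as written: the claim that \emph{each} medium push decreases the potential by at least $1/k$. This holds only for medium pushes from nodes already in $P$; the first medium push from a node in a given phase may come from a node not yet in $P$ (a node at level $\ell$ with no prior medium push, or at a level below the current $\ell$), and such a push does not decrease the potential -- it can even increase it when the node subsequently enters $P$. In the LMES analysis this exception costs only $O(n)$ per phase, but in the Enhanced LMES there are $O(m \log_k n)$ phases, so the first medium pushes contribute $O(nm \log_k n)$; the paper counts them separately and even remarks that this term is itself a bottleneck. The omission does not break the final bound, since the extra term fits inside $O(kn^2 + nm \log_k n)$, but it must be added to your count. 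With that correction, your argument matches the paper's proof.
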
 

\begin{proof}

Let $\Psi $ be the set of distinct scaling parameters.   By Lemma \ref{cor:phases}, $|\Psi| = O(m \log_k n)$.

We define $\Phi_4$ as a modification of  $\Phi_2$ in Section \ref{sec:LMES}.  The parameters $P$ and $\ell$ are defined in the same manner as for $\Phi_2$ except that $P$ is restricted to nodes of $N(\Delta)$ and does not include any nodes merged at the $\Delta$-scaling phase.

$$\Phi_4(\Delta):=\sum_{j\in P\backslash R(\Delta)}  e(j)\cdot\frac{d(j)-\ell+1}{\Delta} + 
\sum_{j\in P\cap R(\Delta)}  (e(j) - Reserve(j))\cdot\frac{d(j)-\ell+1}{\Delta}.$$ 
\smallskip

For each $\Delta \in \Psi$, let $S(\Delta)$ denote the subset of nodes of $N(\Delta)$ that are special at the beginning of the $\Delta$-scaling phase.   The  observations in Lemma \ref{lem:facts} are straightforward and will be used in the proof of Lemma \ref{lem:totalpushes}  

\begin{lem}
\label{lem:facts}
The following statements are all true.
\begin{enumerate}  
\item      [Fact 1. ] If $v \in S(\Delta)$, then $e(v) \le \epsilon^4 \Delta $ at the beginning of the $\Delta$-scaling phase.  
\item [Fact 2. ] If $v \in S(\Delta)$, then $v$ remains special throughout the $\Delta$-scaling phase.  
\item  [Fact 3. ] If $v \in S(\Delta)$ and if  $v \in P$ at an iteration at which $\ell$ is reduced, then $e(v) \le 1.5 \epsilon^4 \Delta $  at that iteration. 
\item  [Fact 4. ] $\sum_{\Delta\in\Psi}  | N(\Delta)\backslash S(\Delta)| = O(m \log_k n)$.
\end{enumerate}
\end{lem}

\bigskip

We now bound the number of large pushes using $\Phi_3$.   
The number of large pushes is at most 2 times the total increase in $\Phi_3$ over all scaling phases.   
The value of $\Phi_3$ at the beginning of the $\Delta$-scaling phase is at most 

$$ \sum_{v\in N(\Delta)\backslash S(\Delta)}  \frac{d(v)e(v)}{\Delta} 
< (n+1) \cdot | N(\Delta)\backslash S(\Delta)|.$$ 

\smallskip

Thus, the sum of all increases in $\Phi_3$ at the beginning of scaling phases is less than 
\smallskip
$$  \sum_{\Delta \in \Psi } (n+1)|N(\Delta)\backslash S(\Delta)| =   O(nm \log_k n).$$    

The sum of the increases in $ \Phi_3$ over all scaling phases due to relabels is $O(n^2)$.  

Finally, we consider increases in $\Phi_3$ due to pushes in $F$ and decreases in $Reserve(v)$ .
Since there are at most $O(m)$ violating nodes in total, there are $O(m)$ FRF-flows and decreases in $Reserve(\,)$.   Each of these operations can increase $\Phi_3$ by at most $\frac{n}{k}$.  The total increase in $\Phi_3$  due to FRF operations is $O(\frac{nm  }{k})$.  
 
Taking into account all increases in $\Phi_3$, we conclude that the number of large pushes is  $O(n^2 + nm \log_k n)$.\\

We now bound the number of medium pushes using $\Phi_4$.  Recall that the initial value of $\Phi_4(\Delta)$ is 0 in each scaling phase.

Every medium push decreases $\Phi_4$ by at least $\frac{1}{k}$ except for the first medium push from a node.   Since there are $O(m \log_k n)$ scaling phases and there are at most $n$ nodes per phase, there are $O(nm \log_k n)$ first medium pushes from nodes.  (It is perhaps surprising that the first medium pushes from nodes is a bottleneck operation.) 

The remaining number of medium pushes is at most $k$ times the total increase in $\Phi_4$ over all scaling phases.  The potential function $\Phi_4$ can increase in the following situations.   (1)  there is a relabel of a node, (2) $\ell$ decreases, (3) a node is added to $P$, (4) there is an operation of \textit{FRF-Push} or \textit{FRF-Pull}, or (5) there is an operation of \textit{FRF-delete} in which $Reserve(\,)$ is decreased.

The total contribution to the increase in $\Phi_4$ due to relabels is at most $n^2$.   The contribution due to FRF operations (i.e., (4) and (5)) is the same as the corresponding contribution to $\Phi_3$, which is $O(\frac{nm}{k})$.  We now consider decreases to $\ell$ and additions to $P$.  

By 
Lemma \ref{lem:facts}, special nodes cause an increase in $\Phi_4$ of less than 1 over all scaling phase.  

Finally, we analyze the contributions of non-special nodes when $\ell$ decreases and when a node is added to $P$.    
For each decrease of $\ell$ by 1, the contribution of each node $v \in N(\Delta)\backslash S(\Delta)$ is at most 
 $ \frac{1}{k}$ to the increase in $\Phi_4$.   Thus, the total contribution of $\Phi_4(\Delta)$  due to all decreases in $\ell$ during the phase is at most $\frac{n}{k} 
| N(\Delta) \backslash S(\Delta)| $.  The increase in $\Phi_4$  due to all decreases in $\ell$ over all phases is less than 
$$ \sum_{\Delta\in\Psi}  \frac{n }{k} \cdot | N(\Delta)\backslash S(\Delta)| = 
O\left(\frac{nm \log_k n}{k}\right).$$  

Finally, we consider the increases in $\Phi_4$ when normal nodes are added to $P$.   A node $v$ can be added to $P$ in one of three ways:  (i) node $v$ is relabeled, (ii)  the parameter $\ell$ is reduced, and (iii) there is a medium push from node $v$.   We have already accounted for increases due to (i) and (ii) in our analysis.  We now consider (iii).   If a normal node $v$ is added to $P$ following a medium push, then $e(v) \le \epsilon \Delta +  \frac{\Delta}{k}$ following the push.  This leads to an increase in $\Phi_4$ of $O(\frac{1}{k})$.  Summing this increase over all normal nodes in all phases, we obtain an increase of $O(\frac{m \log_k n}{k})$.  

%
%

We conclude that the total increases in $\Phi_4$ over all scaling phases is $O(n^2 + \frac{nm \log_k n}{k}$).  This shows that the number of medium pushes over all scaling phases is $O(kn^2 + nm \log_k n)$.  
 \end{proof}
 
 \subsection{The main theorem}

\begin{thm}
\label{th:main}  
The Enhanced LMES Algorithm finds the maximum flow in $O(k n^2 + nm \log_k n)$ time.  Let $k$ be the smallest power of 2 that is at least $\max\{\log\log n, \frac{m}{n}, 4\}$.   Then the running time is 
$$O\left(\frac{nm \log n}{\log\log n + \log \frac{m}{n}}\right).$$

\end{thm}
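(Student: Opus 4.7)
The plan is to first verify correctness by appeal to Corollary \ref{cor:terminate}, and then to obtain the $O(kn^2+nm\log_k n)$ running time by aggregating every source of work using the supporting lemmas already established.

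I would enumerate and bound each category of work as follows. Preprocessing (sorting $A^+(\cdot)$ and $A^-(\cdot)$ by bi-capacity) costs $O(m\log n)$. Relabels total $O(n^2)$ by Lemma \ref{lem:labelscontract}, each costing $O(1)$ amortized via the Current-Arc technique. Saturating pushes account for $O(nm)$ (the Goldberg--Tarjan argument carries through contraction thanks to Corollary \ref{cor:abundance}); pushes that are neither medium nor large account for $O(nm)$ by Lemma \ref{lem:notmedium}; medium plus large pushes account for $O(kn^2+nm\log_k n)$ by Lemma \ref{lem:totalpushes}. Every push is $O(1)$ time. All operations on the flow-return forest -- \emph{FRF-Add}, \emph{FRF-Reverse-DFS}, \emph{FRF-Push}, \emph{FRF-Pull}, and \emph{FRF-Delete} -- combine to $O(nm)$ by Theorem \ref{th:FRFflows}. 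The at most $n-1$ contractions and the associated $O(m)$-time distance-label recomputations, together with the final expansion of merged cycles, add another $O(nm)$. For per-phase overhead, Corollary \ref{cor:phases} bounds the number of scaling phases by $O(m\log_k n)$; each phase spends $O(n)$ initializing the selection arrays MSet, LSet, MaxML, MinL, NextL and invoking \emph{Get-Next-Scaling-Parameter} and \emph{FRF-Recursive-Delete-and-Merge}; and node selection inside a phase is handled by adapting Lemma \ref{lem:selection}, whose $n\log_k U$ term is replaced here by $n$ times the number of phases. All per-phase work therefore sums to $O(nm\log_k n)$. Adding these contributions yields $O(kn^2+nm\log_k n)$.

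For the parameter optimization I would substitute the prescribed $k$ into $kn^2+nm\log n/\log k$ and simplify. Taking $k$ to be the smallest power of $2$ at least $\max\{\log\log n,m/n,4\}$, the first term is of order $n^2\log\log n+nm$, both absorbed by the second under the standing hypothesis $m\ge n$. A short computation shows $\log k=\Theta(\log\log n+\log(m/n))$, so $\log_k n=\Theta(\log n/(\log\log n+\log(m/n)))$, delivering the advertised bound
\[
O\!\left(\frac{nm\log n}{\log\log n+\log\frac{m}{n}}\right).
\]

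The main obstacle I expect is not any single conceptual step but rather the bookkeeping to confirm that no work is double-counted. Specifically I need to check that \emph{FRF-Recursive-Delete-and-Merge} truly telescopes (each leaf deletion and each mergeable-arc contraction happens at most as many times as it is enabled, and those enabling events are already counted by Lemma \ref{lem:newlyviolating}, Theorem \ref{th:FRFflows}, and the bound of $n-1$ contractions), and that the adapted selection-rule analysis charges each MaxML, MinL, or NextL update either to a push, a relabel, or the $O(n)$ per-phase initialization. Since Lemma \ref{lem:totalpushes} and Theorem \ref{th:FRFflows} already absorb the delicate potential-function and amortization work, what remains is a careful but mechanical assembly.
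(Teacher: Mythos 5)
Your accounting follows essentially the same route as the paper's own proof: correctness from Corollary \ref{cor:terminate}, medium/large pushes from Lemma \ref{lem:totalpushes}, the remaining pushes from Lemma \ref{lem:useful} and Lemma \ref{lem:notmedium}, $O(nm)$ for contractions and for all flow-return-forest work (Theorem \ref{th:FRFflows}), and $O(n)$ per-phase overhead over the $O(m\log_k n)$ phases of Corollary \ref{cor:phases}, with node selection handled by adapting Lemma \ref{lem:selection}. If anything your itemization is more explicit than the paper's, which compresses all of this into a few sentences; that part of the proposal is fine.

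The genuine problem is in the final parameter optimization. For $k$ equal to the smallest power of $2$ at least $\max\{\log\log n,\frac{m}{n},4\}$, your claim that $\log k=\Theta\bigl(\log\log n+\log\frac{m}{n}\bigr)$ is false: $\log\bigl(\max\{\log\log n,\frac{m}{n},4\}\bigr)=\Theta\bigl(\log\log\log n+\log\frac{m}{n}+1\bigr)$, since the logarithm of $\log\log n$ is $\log\log\log n$, not $\log\log n$. Consequently, when $m=O(n)$ this choice of $k$ only yields $nm\log_k n=\Theta\bigl(\frac{nm\log n}{\log\log\log n}\bigr)$, which does not match the advertised $O\bigl(\frac{nm\log n}{\log\log n+\log\frac{m}{n}}\bigr)$. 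The bound does follow from the choice made in the paper's introduction, $k=\lceil\frac{\log n}{\log\log n}+\frac{m}{n}\rceil$ (rounded to a power of $2$), for which $\log k=\Theta\bigl(\log\log n+\log\frac{m}{n}\bigr)$ and for which the term $kn^2=O\bigl(\frac{n^2\log n}{\log\log n}+nm\bigr)$ is still absorbed. So your "short computation" needs to be done with that $k$; as written, the step asserting the asymptotics of $\log k$ would fail (a discrepancy that, to be fair, is already present between the theorem statement and the introduction of the paper, and which the paper's own one-line conclusion does not spell out).
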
  

\begin{proof}
The optimality of the algorithm was established in Corollary \ref{cor:terminate}.
By Lemma \ref{lem:totalpushes}, the number of medium and large pushes is 
$O(k n^2 + nm \log_k n)$.  The time for contractions is $O(nm)$.   The time for all operations in the flow-return forest is 
$O(nm)$.  The time to update all arrays and lists at the beginning of a scaling phase is $O(n)$ per scaling phase and $O(nm \log_k n)$ in total.

The time for all other updates of arrays and lists is $O(nm + \text{\# of pushes})$.   We conclude that the total running time is $O(nm \log_k n)$.  If we choose $k$ to be the least power of 2 that is at least $\max \{\log\log n, \frac{m}{n}, 4 \}$, we obtain the optimized running time.
\end{proof}


\bigskip

\section{Summary and conclusions}
\label{sec:LMESC}

In this paper, we have developed Enhanced LMES Algorithm, which is a strongly polynomial time algorithm for the max flow problem with a running time of 
$$O\left(\frac{nm \log n}{\log\log n + \log \frac{m}{n}}\right).$$

In the case that $\frac{m}{n} \ge n^{1/16}$, our algorithm dominates Orlin's \cite{O13} algorithm, which runs in  
$O(nm + m^{31/16} \log^2 n)$ time.   Our algorithm strictly dominates the best other strongly polynomial time algorithm for the max flow problem,   which is due to King et al \cite{KRT94}.  In the case that $m = O(n \log n)$, the running time of our algorithm is faster than the algorithm in \cite{KRT94} by a factor of $\log\log n$. 

The  LMES Algorithm is based on  the Stack Scaling Algorithm of of Ahuja et al. \cite{AOT89}.  The Enhanced LMES Algorithm exploits the Contraction Lemma of   \cite{O13} in order to improve the running time.

In order to use the Contraction Lemma to decrease the running time, we introduced several innovations in this paper.  We first made a technical change to the amount of flow pushed so that anti-abundant arcs had residual capacities that were multiples of $\frac{\Delta}{k}$.  More significantly, we permitted a push to result in a slightly negative excess under restricted conditions.  We then introduced a data structure called the flow-return forest, which ensured that the excess of a node would never become ``too negative.''
The flow-return forest is potentially of value for other max flow algorithms or possibly for minimum cost flow algorithms. 

We  explored whether we could obtain an improved running time using dynamic trees.   For problems in which $m$ is close to $n$, there might possibly bea small  improvement in running time.   However, we did not see how to implement dynamic trees so as to improve the running time.   It is an open question whether such a speed-up is possible.\\

\noindent \textbf {Acknowledgements.}   This research was partially supported by the Office of Naval Research grant
N00014-17-1-2194.   We thank Laci Vegh for providing useful suggestions on how to improve the manuscript.   \\

\begin{appendix}

\section{Expanding contracted cycles}
\label{sec:expansion}

After a max flow in the contracted network is obtained, the contracted cycles are expanded in the reverse order in which they were contracted. Subsequently, the flow in the expanded network is a maximum flow.   The time for expansion of these cycles is $O(n^2)$.  We now describe how to expand a  contracted cycle $W$.
 
Let $G^c = (N^c, A^c)$ denote the contracted network.   Let $W$ be the cycle that was most recently contracted to obtain the network $G^c$.   We will show how to expand the optimal flow for $G^c$.    A node in $N^c$ that was obtained by contracting a cycle is referred to as a {\it merged node}.  The other nodes of $N^c$ were also in $N$.  They are referred to as {\it original nodes}.

Let $x$ denote the preflow in the network $G^c$ immediately prior to its contraction of the abundant cycle $W$.  At the time that $W$ was contracted, $r_{ij}(x) \ge M\Delta$ for each $(i, j) \in W$.    Let $y$ denote the optimal flow in the contracted network immediately prior to expanding $W$.   Thus $y$ is not defined for arcs of $W$.   Let $(S, T)$ denote the minimum cut in the contracted network.

The minimum cut in the  network obtained by expanding $W$ is $(S', T')$ obtained as follows.   If $w \in S$, then $S' = (S\setminus w) \cup W$, and $T' = T$.  Otherwise, $T' = (T\setminus w) \cup W$, and $S' = S $.  The capacity of $(S, T)$ is equal to the capacity of $(S', T')$.

%

We now want to modify $y$ to obtain a maximum $s$-$t$ flow.   It suffices to modify $y$ on arcs of $W$ so that the resulting flow is feasible.   We first extend $y$ so that it is defined on arcs of $W$ as follows.   For each arc $(i, j) \in A$ with $i \in W$ and $j \in W$, we let  
 $y_{ij} = x_{ij}$.
 By Lemma \ref{lem:totalpushes}, 


\begin{equation}
\sum_{i \in W} |e_y(i)| < 5n^2\Delta. 
\end{equation}

We next show how to obtain a flow $y'$ on the arcs of $W$ is \textit{feasible} so that $y + y'$ is feasible for the network obtained from $G^c$ by expanding $W$. That is, for all $v \in W$, $e_{y+y'}(v)= 0$, and $y' \ge 0$.  (We ignore arc capacities for now.)
Choose a vector $y'$ defined on the arcs of $W$ that minimizes $\sum_{(i, j) \in W} |y'_{ij}|$ and so that $y+y'$ is feasible as restricted to nodes of $W$. 
 Then  $y + y'$ is feasible for $N^c$.    Moreover, for each $(i, j) \in W$, 
 $0 \le y'_{ij} < M \Delta$. 

The time to compute $y'$ is $O(|W|^2)$ since one can restrict attention to spanning tree flows (basic flows), and there are only $O(|W|)$ spanning subtrees of $W$.   The time to expand a single cycle in this manner is $O(|W|^2)$. The time to expand all cycles is $O(n^2)$ since there are $O(n)$ arcs in all of the cycles that were contracted.  We note that the running time for expanding cycles can be improved further, but we do not do so here since it is not a bottleneck of our algorithm.

\section{Evaluating $\hat e(v, \Delta)$ and IMB$(v, x, \Delta)$ in $O(1)$  time}
\label{sec:biabundant}

In order to evaluate $\hat e(v, \Delta)$, we need to  identify whether a node is normal or special. To identify whether a node is normal or special, the algorithm maintains  IMB$(v, x, \Delta)$ for each node $v$ and for each preflow $x$.  IMB$(v, x, \Delta)$ is updated at the following times:  (1) when there is a push in an arc, (2) when an arc becomes anti-abundant, and (3) when an arc stops being anti-abundant and becomes bi-abundant.   In addition, to determine whether a node is special, we need to determine whether the node is incident to a medium arc.   For this, we need to know (4) when an arc transitions from small to medium, and (5) when an arc transitions from medium to large.  (We also update data structures following a contraction.  This is easily accomplished in the $O(m)$ time that the algorithm spends on each contraction.) 

Updating IMB after a push is straightforward.   We next deal with (2), (4) and (5).  Recall that for an arc to become anti-abundant, the arc must first become large.    For each 
node $i$, we store nodes in $A^+(i)$ in non-
increasing order of bi-capacity.   At each scaling phase, we maintain 
a pointer to the first medium arc of $A^+(i)$ as well to the first small 
arc of $A^+(i)$.   For each node $i$ the total time to update these two pointers over all 
scaling phases is $O(|A^+(i)| + \text{\# of scaling phases})$.  When we update the pointers, we can determine when an arc becomes medium and when an arc becomes large.   

Finally, we deal with (3).   There are two ways that an arc $(i,j)$ can transition from  anti-abundant to bi-abundant.   Suppose that the algorithm is in the $\Delta$-scaling phase for some $\Delta \in \Psi$.   For each subsequent scaling phase $\Delta' \in \Psi$, we let Bi-Abundant$(\Delta')$ denote the ``bucket'' of arcs that will become bi-abundant at the $\Delta'$-scaling phase, assuming that there is no further change in flow in the arc.  We store the buckets for the next $2Q+1$ scaling phases, which is sufficient.  Whenever there is a push in a large arc $(i, j)$, we check whether $(i, j) $ has become bi-abundant.   If not, we update the bucket containing $(i, j)$.   When an arc changes from one bucket to another it shifts at most 2 buckets to the left or right.  (It is also possible for an arc with residual capacity 0 to gain residual capacity.   In this case, the arc is stored in the last or second to last bucket; (the last bucket contains arcs that will be abundant after $2Q+1$ additional phases.)  In any case, the updating of buckets following a push takes $O(1)$ steps.  Moreover, the updating of buckets does not require the operation of division nor the operation of taking logs.

%
%
%
%
%

We summarize our results in the following lemma.

\begin{lem}
\label{lem:evalehat}   
At each iteration of the algorithm, for each node $v \in N(\Delta)$ and for each preflow x, $IMB(x, v, \Delta) $ and
$\hat e(v, \Delta)$ can each be evaluated in $O(1)$ time.  
\end{lem}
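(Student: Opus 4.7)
The plan is to maintain $\text{IMB}(v, x, \Delta)$ as an explicit field stored at each node, and to update it incrementally in $O(1)$ time every time one of a small number of triggering events occurs. Given the stored value, evaluating both $\text{IMB}(v, x, \Delta)$ and $\hat e(v, \Delta)$ at a query point then reduces to a constant number of arithmetic operations and table lookups.

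First, I would enumerate the events that can change $\text{IMB}(v, x, \Delta)$. From the defining formula, the only possibilities are: (a) a push on an arc incident to $v$, which alters $e_x(v)$ and possibly the residual capacity of an anti-abundant arc in the sum; (b) an arc $(i, v)$ or $(v, i)$ becoming large and thus entering $\text{Anti}(\Delta)$; (c) an arc already in $\text{Anti}(\Delta)$ becoming bi-abundant and therefore leaving $\text{Anti}(\Delta)$. For (a), the push involves a single arc, so the contribution to the sum at each endpoint changes by a known amount, yielding an $O(1)$ update. For (b), the data structures that pointer-scan $A^+(i)$ and $A^-(i)$ in non-increasing order of bi-capacity detect each new large arc when it occurs and do so with $O(1)$ amortized work per transition; the IMB values at the two endpoints are then updated in $O(1)$. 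For (c), the bucketing scheme for $\text{Bi-Abundant}(\Delta')$ over the next $2Q+1$ scaling phases flags the transition in $O(1)$ time per event, and again the IMB values at the endpoints are adjusted by the residual capacity that is being removed from the anti-abundant sum.

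Second, I would handle the classification of $v$ as special, normal (possibly root of $F$), or violating, since $\hat e(v, \Delta)$ is defined piecewise over these classes. A node $v$ is special iff $v$ is incident to no medium arc and $|\text{IMB}(v, \Delta)| \le \epsilon^4 \Delta$. The pointer to the first medium arc of $A^+(v)$ and of $A^-(v)$, already maintained as described just before the lemma, resolves the first condition in $O(1)$; the second is read directly from the stored IMB value. Whether $v$ is a root of $F$ is maintained on $F$ via the $Root(v, F)$ pointers, and the $Reserve(v)$ field is also kept up to date by the procedures \textit{FRF-Add}, \textit{FRF-Pull}, and \textit{FRF-Delete}. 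Thus, the classification of $v$ is known in $O(1)$.

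Finally, with the classification in hand, $\hat e(v, \Delta)$ is either $e(v) + 1.5 \epsilon^4 \Delta$ (special), $e(v) - \epsilon \Delta$ (normal non-root), or $e(v) - \epsilon \Delta - Reserve(v)$ (root of $F$). Since $e(v)$, $Reserve(v)$, and $\Delta$ are all stored, each branch is a constant-time arithmetic expression. The main subtlety, and what the preceding subsection on abundant/anti-abundant/bi-abundant bookkeeping is really doing, is guaranteeing that all transitions of $\text{Anti}(\Delta)$ are detected and propagated to IMB in $O(1)$ each; once that is in place, the lemma is essentially bookkeeping.
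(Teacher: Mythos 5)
Your proposal is correct and follows essentially the same route as the paper: maintain IMB as a stored quantity updated at pushes, at small/medium/large transitions detected by pointers into the arc lists sorted by bi-capacity, and at anti-abundant-to-bi-abundant transitions detected by the bucket scheme, then evaluate $\hat e(v,\Delta)$ by the piecewise formula using the node's classification. The only detail the paper mentions that you omit is refreshing these structures after a contraction, which is absorbed into the $O(m)$ per-contraction cost and does not affect the conclusion.
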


\end{appendix}



\end{document}